\newtheorem{theorem}{Theorem}[section]
\newtheorem{corollary}[theorem]{Corollary}
\newtheorem{prop}[theorem]{Proposition}
\theoremstyle{definition}
\newtheorem{example}[theorem]{Example}
\theoremstyle{remark}
 \newtheorem{remark}[theorem]{Remark}
\newtheorem{Conj}[theorem]{Conjecture}
\newtheorem*{acknow}{Acknowledgments}
\numberwithin{equation}{section}
\def\ra{\rightarrow}
\def\tr{\textrm{Tr}}
\def\be{\begin{equation}}
\def\ee{\end{equation}}
\def\bc{\begin{center}}
\def\ec{\end{center}}
\begin{document}

\title[Singular values for  products of two coupled random matrices]{Singular values for   products  of two coupled random matrices: hard edge phase transition}

%    Information for first author

\author{Dang-Zheng Liu} \address{Key Laboratory of Wu Wen-Tsun Mathematics, Chinese Academy of Sciences, School of Mathematical Sciences, University of Science and Technology of China, Hefei 230026, P.R.~China}
\email{dzliu@ustc.edu.cn}

\date{\today}

\subjclass[2010]{60B20, 62H25}

 \keywords{Products of random matrices, bi-orthogonal ensembles,  Meijer G-kernel,
  hard edge limit, phase transition}

\begin{abstract} Consider  the product $GX$ of two  rectangular complex random matrices coupled by a constant matrix $\Omega$,  where $G$ can be thought to be  a Gaussian matrix and $X$  is a bi-invariant  polynomial ensemble. We  prove that the squared singular values form a biorthogonal  ensemble in Borodin's sense,  and further that  for $X$ being Gaussian the correlation kernel can be expressed as a double contour integral.  When all but finitely many eigenvalues of $\Omega^{} \Omega^{*}$ are   equal, the corresponding correlation
kernel is shown to admit a phase transition phenomenon at the hard edge in four different regimes as the coupling matrix changes. Specifically, the four limiting  kernels in turn are the Meijer G-kernel for  products of two independent  Gaussian matrices, a new critical and interpolating kernel, the perturbed Bessel kernel and the finite coupled product kernel associated with $GX$. In the special case that   $X$ is also a Gaussian matrix  and $\Omega$ is scalar, such a product  has been recently investigated by Akemann  and Strahov. We also propose  a Jacobi-type    product and prove the same  transition.
\end{abstract}

\maketitle
\section{Introduction and main results} \label{sectionintroduction}

\subsection{Introduction}

Given two complex matrices   $X_1$ of size $L\times M$  and  $X_2$ of size $M\times N$ with $L, M\geq N$, our interest in the present paper is the joint probability density function (PDF for short) which reads 
 \be \label{matrixpdf} P(X_1,X_2)=Z^{-1}  \exp\!\big\{ -\alpha \tr(X_{1}^{}X_{1}^{*}+X_{2}^{*} X_{2}^{} )+\tr(\Omega X_1 X_2+  (\Omega X_1 X_2)^*  )
  \big\}\ee
  with respect to  Lebesgue measure $dX_1 d X_2$ on $\mathbb{R}^{2(L+M)N}$. 
  Here $\alpha>0$ and  $\Omega$ is a non-random $N\times L$ matrix  as a coupling of  $X_1$  and  $X_2$ such that $\Omega \Omega^{*}<\alpha^2 I_N$, and the normalization
   \be Z=\Big({ \pi\over \alpha }\Big)^{(L+N)M} {\det}^{-M}{\Big(I_{L}-\frac{1}{\alpha^2}\Omega^{*}\Omega^{}\Big)},\ee
   where $I_{L}$ denotes an identity matrix of size $L\times L$. More precisely, our aim is to study the exact   functional form of the joint PDF and correlation kernel for   squared singular values of the matrix product $Y_2 = X_1 X_2$, and also to investigate  scaling limits at the hard edge. For other  local statistical properties such as  bulk and soft-edge limits, we  leave them  to a forthcoming paper.

When all involved matrices are real, the two-matrix  model defined in \eqref{matrixpdf} is very closely related to testing independence  and canonical correlation analysis
 in Multivariate Statistical Theory; see  for example the two excellent monographs   \cite[Chapters 9, 12 \& 13]{A03} and \cite[Chapter 11]{M82}, and \cite{JO15} for recent developments. To be exact, putting $X_1$ and $X_{2}$  together we have a sample covariance matrix \be X=\begin{pmatrix}  X_1\\
      X_{2}^{t}\end{pmatrix} \nonumber \ee  distributed according to
 \be \mathrm{const} \cdot \exp\bigg\{-\frac{1}{2} \tr
  \begin{pmatrix} X_{1}^{t} & X_{2}\end{pmatrix}
      {\begin{pmatrix}
          \Sigma_{11} & \Sigma_{12} \\
         \Sigma_{21} & \Sigma_{22}
        \end{pmatrix}}^{-1}
        \begin{pmatrix}  X_1\\
      X_{2}^{t}\end{pmatrix}  \bigg\}. \label{twomatrixreal}\ee
 Then the sample canonical correlations are defined to be the square roots of   eigenvalues of the \textit{ sample canonical correlation matrix }
 \be (X_{2}^{t}X_{2}^{})^{-1} (X_{1}X_{2})^{t} (X_{1}^{}X_{1}^{t})^{-1} X_{1}X_{2}, \label{canonicalcorrelation}\ee
 see  \cite[Sect. 13.4]{A03} or \cite[Sect. 11.3]{M82}.  When  $X_1$ and $X_2$ are independent and also both have identity covariance matrices, the canonical correlation matrix  \eqref{canonicalcorrelation} is just the so-called Jacobi/MANOVA  ensemble (see e.g. \cite{BHPZ14} or \cite[Chapt. 3.6]{Fo10}) and has been extensively studied, see \cite{Joh08,Wa80} and references therein.  For \eqref{canonicalcorrelation} associated with the general PDF \eqref{twomatrixreal} but restricted to a small rank of the population cross-covariance matrix $\Sigma_{12}$,  we refer the reader to \cite{BHPZ14} and \cite{JO15} for relevant investigations. Turning to the complex  counterpart of  \eqref{twomatrixreal}, the joint PDF \eqref{matrixpdf} can be re-expressed as
 \be \mathrm{const} \cdot \exp\Big\{-  \tr
  \begin{pmatrix} X_{1}^{*} & X_{2}   \nonumber \end{pmatrix}
      \Sigma^{-1}
        \begin{pmatrix}  X_1\\
      X_{2}^{*}\end{pmatrix}  \Big\}  \label{twomatrixcomplex}\ee
 where \be \Sigma=\begin{pmatrix}
           (\alpha I_{L}-\alpha^{-1}\Omega^{*}\Omega)^{-1}  & \alpha^{-1} \Omega^{*} (\alpha I_{N}-\alpha^{-1}\Omega\Omega^{*})^{-1} \\
      \alpha^{-1} \Omega (\alpha I_{L}-\alpha^{-1}\Omega^{*}\Omega)^{-1} & (\alpha I_{N}-\alpha^{-1}\Omega\Omega^{*})^{-1}
        \end{pmatrix}. \nonumber \ee
 In this case we will come back  to study the  complex analog of the  sample canonical correlations \eqref{canonicalcorrelation} and some possible relationships with the complex \textit{sample cross-covariance matrix}   $X_1X_2$  in the future. Finally, we  remark that the global spectral density  of   $X_1X_2$  under the joint PDF \eqref{twomatrixreal}
  with $L=N$   has been   investigated in  \cite{VB14} (this was pointed out to us by Gernot Akemann and Mario Kieburg).

However, our major motivation to consider \eqref{matrixpdf} comes from the paper by  Akemann  and Strahov \cite{AS15}  where $L=N$ and $\Omega$ is  set to be  a scalar matrix.  In this special case,  it was applied to Quantum Chromodynamics (QCD) with a baryon chemical potential by Osborn \cite{O04} where the complex
eigenvalues  were  determined and a limiting interpolation kernel between the Bessel kernel and the corresponding kernel of complex eigenvalues was derived. This important example   inspired  Akemann  and Strahov to turn to study  the singular values for products of two coupled random matrices.   Below we just give  a brief description of the Osborn-Akemann-Strahov model and refer the reader to  \cite{AS15} for more details.
 Let $A$ and $B$ be two independent $N\times M$ matrices with i.i.d. standard complex Gaussian entries, Osborn \cite{O04}  investigated  an analogue of the Dirac operator    in the context of QCD with   a baryon  chemical potential and  introduced a random matrix ensemble   with a coupling parameter $\mu\in [0,1]$
\be D=        \begin{pmatrix}
          0 & iA+\mu B \\
          iA^{*}+\mu B^{*} & 0 \\
        \end{pmatrix}. \nonumber
\ee
 He further calculated  complex eigenvalues of $D$ by reducing them to those of the product $(iA+\mu B)(iA^{*}+\mu B^{*})$. Equivalently, when turning to use the notation in \cite{AS15}, set
\be X_{1}=\frac{1}{\sqrt{2}}(A-i\sqrt{\mu}B), \qquad X_{2}=\frac{1}{\sqrt{2}}(A^{*}-i\sqrt{\mu}B^{*}), \nonumber \ee
 then $X_1$ and $X_2$ have a joint PDF as defined in \eqref{matrixpdf} but with $L=N$,  $\alpha=(1+\mu)/(2\mu)$ and $\Omega=(1-\mu)/(2\mu)I_{N}$; see \cite[Sect. 2]{AS15}. So far, as for    singular values of  the product matrix  $Y_2=X_1 X_2$,  a very interesting observation from  Akemann  and Strahov is that as $\mu \to 0$ it is equivalent to the classical Laguerre Unitary Ensemble (also called complex sample covariance matrices) while as $\mu \to 1$ it corresponds to  the product of two independent Gaussian random matrices; see \cite[Sect. 3]{AS15} for detailed discussion.

Another motivation why to study  products of coupled random matrices is that they are natural generalizations of products of independent random  matrices, as interpreted in \cite{AS15}.
   Actually, the topic on products of independent random matrices has attracted tremendous interest in recent years, largely  because of the finding of  exact solvability for Gaussian matrices \cite{AIK13,AKW13} and the appearing of some new families of universal patterns \cite{Fo14,FL15b,KKS15,KS14,KZ}. These also afford more examples to support the Wigner-Dyson \textit{Universality Conjecture}; see \cite{LWZ14} for the local statistical properties in the  bulk and at the soft edge. For a recent survey, see  \cite{AI15} and references therein. Interestingly, entirely different from the extensively  studied products,  the  singular values for  products of two coupled random  matrices no longer form a polynomial ensemble (that is, at least one of the two determinants consisting of the joint PDF is the Vandermonde determinant, cf. \cite{KS14}),  but a biorthogonal ensemble  with   both two sets of ``nontrivial" functions \cite{Bo98}; see \cite[Sect.3]{AS15} or   Proposition \ref{twoGpdf} below. In this sense, the result derived by Akemann and Strahov affords a very nice example of  biorthogonal ensembles, see    Borodin and P\'{e}ch\'{e}'s paper \cite{BP08} for another example of the generalized Wishart ensemble distributed proportionally as $\exp\{ -\tr(S_{1}X^{}X^{*}+S_{2}X^{*} X^{} )\}$, where $S_1, S_2$ are non-random $N\times N$ positive definite matrices while $X$ is random with the same size.

   Now let's return to the initial object \eqref{matrixpdf}.   More generally, we  can turn to consider the product of two coupled random matrices with matrix entries distributed proportionally as
 \be  \label{twomatrixpdfV}
   \exp\big\{ -\alpha \tr(G^{}G^{*})+\tr\big(\Omega GX+  (\Omega GX)^*\big)-\mathrm{Tr}V(X^{*}X^{})\big\} dG dX, \ee
where $dG=\prod_{j=1}^{L} \prod_{k=1}^{M} d\textrm{Re}\,{G_{j,k}} d\textrm{Im}\, G_{j,k}$,  $dX=\prod_{j=1}^{M} \prod_{k=1}^{N} d\textrm{Re}\,{X_{j,k}} d\textrm{Im}\, X_{j,k}$, and $V$ is a polynomial with positive leading coefficient. We will show that the squared singular values of $GX$  have a bi-orthogonal structure; see Corollary \ref{Vpdf} in Sect. \ref{sectpdf} below.   When $L=N$ and $\Omega$ is a scalar matrix,   the joint PDF \eqref{twomatrixpdfV} is usually called a coupled chiral two-matrix model and was first introduced   by Akemann, Damgaard, Osborn and Splittorff \cite{ADOS07} as a chiral analogue of Eynard-Mehta coupled Hermitian  matrix model \cite{EM98}. In this case, Akemann et al. derived the  joint PDF of squared singular values of $G$ and $X$ and also explicit formulas for all  spectral correlation functions, which opens up the possibility of   asymptotic analysis for local statistics; see e.g. \cite{ADOS07} and \cite{DGZ13}. However,  when $L>N$ (at this stage $\Omega$ must be a rectangular matrix due to the existence of the trace operation in the exponent) or $\Omega$ is not scalar, to the best of our knowledge, there are no explicit formulas  available for the  joint PDF of squared singular values of $G$ and $X$. But, once we focus on the product $GX$, its  singular values  can be exactly expressed as determinantal point processes.
At this time,   since $G$ is a Gaussian random matrix given that  $X$ is fixed, the joint PDF \eqref{twomatrixpdfV} can be treated as a \textit{coupled  multiplication} with a Ginibre matrix. Here it is worth emphasizing that the coupled case  only  preserves  biorthogonal ensembles of squared singular values, but not polynomial ensembles; see Theorem \ref{coupledmultiplication} in Sect. \ref{sectpdf} below.  This is different from the multiplication with a Ginibre matrix which transforms one polynomial ensemble  to another; see \cite{KS14} or \cite{CKW15,KKS15,Ku15}  for some   nice transformation identities  of polynomial ensembles.

The remainder of this article is organized as follows.  In the following subsection we summarise the main results on the joint eigenvalue PDF, correlation kernel and scaled kernel  for the product of two coupled Gaussian matrices defined in  \eqref{matrixpdf}.  In particular, there exists a hard-edge  transition phenomenon in four different regimes.
 Sect. \ref{sectpdf} is devoted to the joint PDF of squared singular values for coupled products of  a Ginibre (or Jacobi-type)  matrix    and a bi-unitarily invariant random  matrix, which includes \eqref{matrixpdf} and \eqref{twomatrixpdfV} as special cases.   The proofs of Theorems \ref{kernelpdf} and \ref{hardlimits} in Sect. \ref{subsectionmain} below  are respectively  given in Sect. \ref{sectkernel} and Sect. \ref{sectionhardlimit}, where the corresponding results   are also obtained for a Jacobi-type product.   In Sect.  \ref{4kernels}  further discussions  on the four limiting kernels are presented.

\subsection{Main results} \label{subsectionmain} Let  $\nu=M-N \geq 0, \kappa=L-N \geq 0$,    and let $\delta_1, \ldots, \delta_N$ be singular values of $\Omega$ such that $0\leq \delta_j<\alpha$ for $j=1,\ldots,N$.  Also let   $\Delta(x)=\prod_{1\leq i<j\leq N}(x_j-x_i)$ denote the Vandermonde determinant.  We will frequently use
 two kinds of modified  Bessel functions  defined by
\be I_{\nu}(z)=  \sum_{k=0}^{\infty} \frac{1}{ k!\Gamma(\nu+1+k)} \Big(\frac{z}{2}\Big)^{2k+\nu}   \label{Ifunction}\ee
and \be K_{\nu}(z)=  \frac{1}{2}\Big(\frac{z}{2}\Big)^{\nu} \int_{0}^{\infty} t^{-\nu-1} e^{-t-\frac{z^2}{4t}} dt,  \quad |\textrm{arg}(z)|<\frac{\pi}{4},  \label{Kfunction} \ee
(cf. \cite[8.432.6]{GR07}) and also the hypergeometric function $_0F_1$   defined by
\be _0F_1(\nu+1;z)=\sum_{k=0}^{\infty} \frac{1}{ (\nu+1)_k} \frac{z^k}{k!}, \label{0F1I}\ee
where the Pochhammer symbol $(a)_{k}=a(a+1)  \cdots (a+k-1)$.  Viewed from  the integral representation \eqref{Kfunction},   the argument of  $K_{\nu}(z)$  is restricted to the interval $(-\pi/4, \pi/4)$, however,   it  can be analytically extended to the domain $\mathbb{C}\backslash (-\infty,0]$, see e.g. \cite[10.25]{Olver10}.  It is worth noting the two relations
\be _0F_1(\nu+1;z)=\Gamma(\nu+1)(\sqrt{z})^{-\nu} I_{\nu}(2\sqrt{z}) \label{0F1I}\ee
and \be K_{-\nu}(z)=K_{\nu}(z),\ee
 which respectively show that the RHS of \eqref{0F1I} is an entire function of $z$ and $K_{\nu}(z)$ has even parity in its parameter. 
Here and below the principal square root of a nonzero complex number $z$ is denoted by $\sqrt{z}$ as in the positive real case.

Our first result is an exact formula for the joint  PDF of squared singular values of  $X_{1}X_{2}$ under   \eqref{matrixpdf}, in which    modified Bessel functions  $I_{\nu}$ and  $K_{\nu}$ are involved; see Theorem \ref{coupledmultiplication} of Sect. \ref{sectpdf}   for the more general results.
 \begin{prop} \label{twoGpdf} With the joint PDF of two matrices $X_1$ and $X_2$ defined in  \eqref{matrixpdf}, let  $Y_2=X_1 X_2$. Then the joint PDF for    the squared singular values  of $Y_{2}$  on $[0,\infty)^{N}$ is given by
 \be \label{eigenvaluepdfGinibre} \mathcal{P}_{N}(x_1,\ldots,x_N)=\frac{1}{Z_N} \det\Big[I_{\kappa}(2\delta_i  \sqrt{x_j})\Big]_{i,j=1}^{N} \det\Big[x^{\frac{\nu+i-1}{2}}_{j} K_{\nu-\kappa+i-1}(2\alpha  \sqrt{x_j})\Big]_{i,j=1}^{N},
 \ee
 where $0\leq \delta_{j}<\alpha$ for $j=1, \ldots, N$ and  the normalization  constant  
  \be Z_N= N! 2^{-N}\alpha^{-N(\kappa+\nu+N)-\frac{1}{2}N(N+1)} \Delta(\delta^2) \prod_{j=1}^{N}\Big(\Gamma(j+\nu) \delta_{j}^{\kappa} \big(1-  \frac{\delta_{j}^{2}}{\alpha^2} \big)^{-\nu-N}\Big).\ee
  \end{prop}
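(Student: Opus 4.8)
The plan is to strip off the angular degrees of freedom one at a time: after two singular value decompositions and one change of variables the two determinants in \eqref{eigenvaluepdfGinibre} emerge respectively from a Berezin--Karpelevich-type integral over $U(L)$ (the $I_\kappa$-determinant) and from the radial integration that follows a Harish-Chandra--Itzykson--Zuber (HCIZ) step (the $K$-determinant). (This is the special case of Theorem~\ref{coupledmultiplication} in which $X$ is itself Gaussian, but the direct route below is short.) First I would diagonalize $X_2=U\bigl(\begin{smallmatrix}\Sigma\\ 0\end{smallmatrix}\bigr)V^{*}$ ($U\in U(M)$, $V\in U(N)$, $\Sigma=\mathrm{diag}(s_1,\dots,s_N)$), whose Jacobian contributes $\Delta(s^2)^2\prod_j s_j^{2\nu+1}$ (up to the constant volume of the over-counted angular part), and then substitute $Z=X_1U$ (so $dZ=dX_1$), which removes $U$ from the exponent. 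Since $\Omega X_1X_2=\Omega Z\bigl(\begin{smallmatrix}\Sigma\\ 0\end{smallmatrix}\bigr)V^{*}$ depends only on the first $N$ columns $W:=Z^{(1)}$ of $Z$ (an $L\times N$ block), the remaining columns $Z^{(2)}$ carry only $-\alpha\tr(Z^{(2)}Z^{(2)*})$ and integrate out to a constant, as does $U$; one is left with $(W,\Sigma,V)$ of joint density $\propto\exp\{-\alpha\tr(WW^{*})-\alpha\tr\Sigma^{2}+\tr(\Omega W\Sigma V^{*})+\tr(V\Sigma W^{*}\Omega^{*})\}\,\Delta(\Sigma^2)^2\prod_j s_j^{2\nu+1}$, the squared singular values of $Y_2$ being exactly those of $W\Sigma$. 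The key move is now the change of variables $W=Y\Sigma^{-1}$ (valid for a.e.\ $\Sigma$, with $dW=\prod_j s_j^{-2L}\,dY$): the coupling collapses to $\tr(\Omega YV^{*})+\tr(VY^{*}\Omega^{*})$ and the first Gaussian term to $\tr(\Sigma^{-2}Y^{*}Y)$, so that $Y$ — whose singular values we want — is cleanly separated from $\Omega$ and from the spectator matrices $\Sigma$ and $V$.

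Next, insert the SVDs $Y=\Xi\hat\Lambda\Theta^{*}$ (with $\Xi\in U(L)$, $\Theta\in U(N)$, $\hat\Lambda=\bigl(\begin{smallmatrix}\Lambda^{1/2}\\ 0\end{smallmatrix}\bigr)$, $\Lambda=\mathrm{diag}(x_1,\dots,x_N)$, Jacobian $\propto\Delta(x)^2\prod_j x_j^{\kappa}$) and $\Omega=\Phi\,(\mathrm{diag}(\delta_j)\mid 0)\,\Psi^{*}$. Then $Y^{*}Y=\Theta\Lambda\Theta^{*}$, so $\Xi$ and $V$ occur only in the coupling term while $\Theta$ occurs there and in $\tr(\Sigma^{-2}\Theta\Lambda\Theta^{*})$. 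Integrating $\Xi$ over $U(L)$ — after absorbing $\Psi$ into the (still Haar) variable $R=\Psi^{*}\Xi$ — is an integral $\int_{U(L)}\exp\{\tr(ARB)+\tr(B^{*}R^{*}A^{*})\}\,dR$ in which $A$ is an $N\times L$ matrix with singular values $\delta_j$ and $B$ an $L\times N$ matrix with singular values $\sqrt{x_j}$; such an integral depends only on those singular values (hence is independent of $\Theta$ and $V$), and, having rectangularity index $L-N=\kappa$, equals a constant times $\det[I_{\kappa}(2\delta_i\sqrt{x_j})]_{i,j=1}^{N}\big/\bigl(\Delta(\delta^2)\Delta(x)\prod_j(\delta_j\sqrt{x_j})^{\kappa}\bigr)$ (the $N=1$ case reduces, via a Sonine-type integral, to $I_{\kappa}(2\delta_1\sqrt{x_1})/(\delta_1\sqrt{x_1})^{\kappa}$ and fixes the shape). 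The $V$-integral is then trivial, and the $\Theta$-integral over $U(N)$ is the classical HCIZ integral for $-\alpha\Sigma^{-2}$ and $\Lambda$, equal to $\bigl(\prod_{k=1}^{N-1}k!\bigr)\det[e^{-\alpha x_j/s_i^{2}}]_{i,j}\big/\bigl(\Delta(\{-\alpha s_i^{-2}\})\Delta(x)\bigr)$ with $\Delta(\{-\alpha s_i^{-2}\})=\alpha^{\binom{N}{2}}\Delta(s^2)\prod_i s_i^{-2(N-1)}$.

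Collecting the $s$-dependence and setting $u_i=s_i^{2}$ reduces the remaining integration to $\int_{(0,\infty)^N}e^{-\alpha\sum_i u_i}\det[e^{-\alpha x_j/u_i}]_{i,j}\,\Delta(u)\prod_i u_i^{\nu-\kappa-1}\,du$ times an explicit constant; writing $\Delta(u)=\det[u_i^{k-1}]_{i,k}$ and applying Andr\'eief's identity converts it to $N!\det\bigl[\int_0^\infty u^{\nu-\kappa+k-2}e^{-\alpha u-\alpha x_j/u}\,du\bigr]_{k,j}$, and then $\int_0^\infty u^{\beta-1}e^{-\alpha u-\alpha x/u}\,du=2x^{\beta/2}K_{\beta}(2\alpha\sqrt x)$ gives $\det[x_j^{(\nu-\kappa+k-1)/2}K_{\nu-\kappa+k-1}(2\alpha\sqrt{x_j})]_{k,j}$. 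Multiplying the three contributions, the Vandermondes cancel and the leftover $\prod_j x_j^{\kappa}/\prod_j x_j^{\kappa/2}$ is absorbed column-wise, raising the exponent of $x_j$ in the $K$-determinant from $(\nu-\kappa+i-1)/2$ to $(\nu+i-1)/2$ — precisely \eqref{eigenvaluepdfGinibre}. The constant $1/Z_N$ then follows either by bookkeeping all prefactors (Gaussian normalizations including $Z^{-1}$, the SVD-Jacobian constants, the HCIZ and Berezin--Karpelevich constants, the $N!$, the powers of $2$ and $\alpha$, and the $\Delta(\delta^2)^{-1}\prod_j\delta_j^{-\kappa}$ coming out of the $U(L)$ integral), or more cleanly by normalizing \eqref{eigenvaluepdfGinibre} directly: Andr\'eief reduces $\int_{[0,\infty)^N}(\det\cdot\det)\,dx$ to $N!\det\bigl[\int_0^\infty I_{\kappa}(2\delta_i\sqrt x)\,x^{(\nu+k-1)/2}K_{\nu-\kappa+k-1}(2\alpha\sqrt x)\,dx\bigr]_{i,k}$, each entry being a classical integral of $I_\kappa K_{\mu}$ against a power (see, e.g., \cite{GR07}), and the resulting $N\times N$ determinant collapses in closed form to the stated $Z_N$.

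The genuine obstacle, I expect, is the $U(L)$ Berezin--Karpelevich integral in the second paragraph — establishing both the correct index $\kappa$ and, above all, the exact normalization constant — together with the parallel constant-chasing in the radial step; a minor point that also needs attention is the degenerate case of coinciding $\delta_j$, in which both \eqref{eigenvaluepdfGinibre} and $Z_N$ vanish and the identity is to be read in the l'H\^opital limit $\delta_i\to\delta_j$.
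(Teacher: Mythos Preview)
Your proposal is correct and follows essentially the same route as the paper's proof of Theorem~\ref{coupledmultiplication} (of which Proposition~\ref{twoGpdf} is the Gaussian special case $h(X)=e^{-\alpha\tr X^*X}$): reduce the rectangular index via an SVD of $X_2$, pass from the Ginibre block to $Y=W\Sigma$, then use the Berezin--Karpelevich/Guhr--Wettig integral for the $\Omega$-coupling, HCIZ for the quadratic term, and Andr\'eief for the remaining radial integration. The only cosmetic differences are that you parametrize the angular coupling integral over the full $U(L)$ rather than over the Stiefel manifold $\{U:U^*U=I_N\}\times U(N)$, and you perform the two group integrals in the opposite order.
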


We stress that when some of the $\delta_j$'s  coincide,  L'H\^{o}spital's rule provides the appropriate eigenvalue density. When $\kappa=0$ and all $\delta_j$'s are equal, the PDF in Proposition \ref{twoGpdf}   has been derived by Akemann and Strahov; see \cite[Theorem 3.1]{AS15}.

For  the joint eigenvalue PDF \eqref{eigenvaluepdfGinibre} above as a determinantal  point process,    we  find a double contour integral expression for the correlation kernel, which provides the starting point for further asymptotic analysis. Both our double integral formula and its derivation are  different from those  given by  Akemann and  Strahov, see   \cite{AS15} for exact formulae and brilliant  derivations. Therein, the authors discussed in details  biorthogonal functions, five-term recurrence relations, Christoffel-Darboux formula and relevant contour integral representations.
 \begin{theorem}\label{kernelpdf} 
% %With the joint  PDF defined in  \eqref{matrixpdf}, let  $Y_2=X_1 X_2$. Then the joint eigenvalue  PDF    for $Y_{2}^{*} Y_{2}^{}$ can be  written in the form
%% \be \mathcal{P}_{N}(x_1,\ldots,x_N)=\frac{1}{N!} \det[K_N(x_i,x_j)]_{i,j=1}^{N},
%% \ee
The correlation kernel  for the biorthogonal ensemble  \eqref{eigenvaluepdfGinibre} is given by 
 \begin{multline}
K_N(x,y)=\frac{2\alpha^2}{(2\pi i)^2}\int_{\mathcal{C}_{\mathrm{out}}} du \int_{\mathcal{C}_{\mathrm{in}}} dv\,K_{-\kappa}(2\alpha\sqrt{(1-u)x})\,  I_{\kappa}(2\alpha\sqrt{(1-v)y}) \\  \times \frac{1}{u-v}\Big(\frac{1-u}{1-v}\Big)^{\kappa/2}\Big(\frac{u}{v}\Big)^{-\nu-N}\prod_{l=1}^{N}\frac{u-(1-\delta_{l}^{2}/\alpha^2)}{v-(1-\delta_{l}^{2}/\alpha^2)}, \label{kernelCD}\end{multline}
where  $\mathcal{C}_{\mathrm{in}}$ is a counterclockwise contour encircling $1-\delta_{1}^{2}/\alpha^2,\ldots, 1-\delta_{N}^{2}/\alpha^2$, and $\mathcal{C}_{\mathrm{out}}$   is a   simple  contour  counterclockwise around the origin with  $\mathrm{Re}(z)<1$ for $z\in \mathcal{C}_{\mathrm{out}}$
   such that       
   $\mathcal{C}_{\mathrm{in}}$ is entirely  to the right side of  $\mathcal{C}_{\mathrm{out}}$.  When $0< \delta_{j}<\alpha$ for $j=1, \ldots, N$, we can also choose contours such that  
   $\mathcal{C}_{\mathrm{in}}$ is  contained entirely in  $\mathcal{C}_{\mathrm{out}}$.
    \end{theorem}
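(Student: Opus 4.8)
The plan is to exploit the biorthogonal structure of \eqref{eigenvaluepdfGinibre}: apply Borodin's formula \cite{Bo98} for the correlation kernel of a biorthogonal ensemble, evaluate the resulting finite Gram matrix in closed form, and then collapse the ensuing double sum into the double contour integral \eqref{kernelCD} by residue calculus together with loop-integral representations of $I_\kappa$ and $K_\nu$.

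First, write \eqref{eigenvaluepdfGinibre} as $\tfrac1{Z_N}\det[\phi_i(x_j)]_{i,j=1}^N\det[\psi_i(x_j)]_{i,j=1}^N$ with $\phi_i(x)=I_\kappa(2\delta_i\sqrt x)$ and $\psi_i(x)=x^{(\nu+i-1)/2}K_{\nu-\kappa+i-1}(2\alpha\sqrt x)$. By the general theory of biorthogonal ensembles \cite{Bo98}, this process is determinantal with correlation kernel
\[
K_N(x,y)=\sum_{i,j=1}^N\psi_i(x)\,(A^{-1})_{ij}\,\phi_j(y),\qquad A_{ij}=\int_0^\infty\phi_i(x)\,\psi_j(x)\,dx,
\]
provided the Gram matrix $A$ is invertible. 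To evaluate $A$, substitute $x=t^2$ and invoke the classical Bessel integral $\int_0^\infty I_\kappa(at)K_\mu(bt)\,t^{s-1}\,dt$; for the parameters at hand the attendant Gauss hypergeometric factor degenerates to a power via ${}_2F_1(a,c;c;z)=(1-z)^{-a}$, yielding $A_{ij}=\rho_j\,\delta_i^{\kappa}\,s_i^{\,\nu+j}$ with $s_i:=\alpha^2/(\alpha^2-\delta_i^2)$ and explicit constants $\rho_j$. Hence $A=\mathrm{diag}(\delta_i^{\kappa}s_i^{\nu+1})\,W\,\mathrm{diag}(\rho_j)$ with $W=(s_i^{\,j-1})_{i,j=1}^N$ a Vandermonde matrix; so $A$ is invertible exactly when the $\delta_j^2$ are pairwise distinct (the coincident case then follows by continuity/L'H\^{o}pital, as noted after Proposition \ref{twoGpdf}), and $A^{-1}$ is given by the explicit inverse of $W$, i.e. by Lagrange interpolation at the nodes $s_1,\dots,s_N$.

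Inserting $A^{-1}$ and the explicit inverse Vandermonde, each of the two sums defining $K_N$ turns into a sum of residues, hence into a contour integral. I would represent $\phi_j(y)=I_\kappa(2\delta_j\sqrt y)$ by the loop integral $I_\kappa(2\sqrt z)=\frac{z^{\kappa/2}}{2\pi i}\oint e^{w+z/w}w^{-\kappa-1}\,dw$ (valid with a small loop around $0$ since $\kappa=L-N\in\mathbb{Z}_{\ge0}$), and $\psi_i(x)$ through $K_\mu(2\alpha\sqrt x)=\tfrac12(\alpha\sqrt x)^{\mu}\int_0^\infty t^{-\mu-1}e^{-t-\alpha^2 x/t}\,dt$. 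Resumming the finite geometric series in the $i$- and $j$-parts (the source of the factor $1/(u-v)$), collecting the residues at $s_1,\dots,s_N$ in the $j$-part and at the origin in the $i$-part, and changing variables from $s_l$ to $1-\delta_l^2/\alpha^2=1/s_l$, converts $K_N(x,y)$ into the right-hand side of \eqref{kernelCD}. In this passage $\mathcal C_{\mathrm{in}}$ emerges as a small loop around $1-\delta_1^2/\alpha^2,\dots,1-\delta_N^2/\alpha^2$ --- carrying $I_\kappa(2\alpha\sqrt{(1-v)y})$, the rational factor $\prod_l\big(v-(1-\delta_l^2/\alpha^2)\big)^{-1}$ and the powers $(1-v)^{-\kappa/2}$, $v^{-\nu-N}$ --- while $\mathcal C_{\mathrm{out}}$ emerges as a loop around $0$; the requirement $\mathrm{Re}(z)<1$ on $\mathcal C_{\mathrm{out}}$ is exactly what guarantees convergence of the $t$-integral defining $K_{-\kappa}(2\alpha\sqrt{(1-u)x})$, and the relative position of $\mathcal C_{\mathrm{in}}$ and $\mathcal C_{\mathrm{out}}$ --- together with the nested variant valid when all $\delta_j>0$ --- is dictated by convergence of the geometric series $\sum_{k\ge0}(v/u)^k$. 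It then remains to check that the scalar prefactors collapse to $2\alpha^2$, and to justify the interchanges of summation and integration by absolute convergence, using the decay $K_\mu(z)=O(e^{-z})$ and the standard growth bound on the entire function $I_\kappa$.

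The main obstacle is this last step: turning the explicit but unwieldy finite double sum --- carrying the Lagrange/inverse-Vandermonde coefficients --- into the clean double integral, in particular reproducing the exact algebraic prefactors $\big(\tfrac{1-u}{1-v}\big)^{\kappa/2}$, $(u/v)^{-\nu-N}$ and the rational product over $l$, the cross term $1/(u-v)$, and the precise topological relation between $\mathcal C_{\mathrm{in}}$ and $\mathcal C_{\mathrm{out}}$. Tracking the Bessel orders is the delicate point, since they are not matched directly to those of $\psi_i$ and $\phi_j$ but get shifted by the $v^{-\nu-N}$ and $\prod_l$ factors; this bookkeeping is also what makes the present derivation genuinely different from the Christoffel--Darboux route of Akemann and Strahov.
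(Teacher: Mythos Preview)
Your plan coincides with the paper's own route: Borodin's formula, evaluation of the Gram matrix via the Bessel integral \eqref{besselintegral}, inversion through its Vandermonde/Lagrange structure (this is the paper's polynomial identity \eqref{csumidentity}), the integral representation \eqref{Kfunction} for $K_\mu$ combined with Hankel's formula \eqref{inversegamma} to introduce the $u$-loop around the origin, and finally the residue theorem on the $j$-sum to introduce the $v$-loop. One correction worth making before you write it out: the factor $1/(u-v)$ and the relative position of the contours do \emph{not} come from any geometric series here---they appear because the Lagrange polynomial $\prod_{l\ne j}(u-a_l)/(a_j-a_l)$ is exactly the residue at $v=a_j$ of $\tfrac{1}{u-v}\prod_{l}\tfrac{u-a_l}{v-a_l}$, so that the $j$-sum of $\xi_j(y)$ times this Lagrange factor is recognised directly as $\tfrac{1}{2\pi i}\oint_{\mathcal{C}_{\mathrm{in}}}$; the only constraint is that $\mathcal{C}_{\mathrm{in}}$ encircle the nodes $a_l=1-\delta_l^2/\alpha^2$ while avoiding $\mathcal{C}_{\mathrm{out}}$, not a convergence condition for $\sum_{k\ge0}(v/u)^k$.
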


Note that    it is unnecessary to assume  $\mathrm{Re}(z)<1$ for $z\in \mathcal{C}_{\mathrm{in}}$  in   \eqref{kernelCD},  unlike  $\mathcal{C}_{\mathrm{out}}$, since  $z^{-\kappa/2} I_{\kappa}(2\sqrt{z})$ is an entire function of $z$ (cf.  eqn \eqref{0F1I}).  Besides, we will select  more specific  contours  as required  in investigating  the scaling limits of correlation kernels.

Next, we  focus on asymptotic behavior of the correlation kernel under the assumptions of finite-rank  perturbation of the matrix $\Omega$ and $\mu$-dependent coupling (see \cite{AS15} for discussion in details), i.e., for a given nonnegative integer $m$ independent of $N$,
\be \delta_{m+1}=\cdots=\delta_N=\delta\  \mbox{and}\   \alpha=(1+\mu)/(2\mu),  \  \delta=(1-\mu)/(2\mu), \ 0<\mu\leq 1, \label{finiterank}\ee
where $\mu=\mu_N$ may depend on $N$ but will be used without the subscript for simplicity, unless otherwise specified.
Our  main results  are devoted to    hard edge scaling limits at different  scales of $\mu$, and particularly to  a critical kernel after a double scaling.

For nonnegative integers $\nu, \kappa$ and $m$, we  introduce four types of double integrals for correlation kernels  as follows.
The first kernel  is defined to be
 \begin{align}
 K_{\mathrm{I}}(\xi,\eta) &= \left(  \frac{\eta}{\xi}\right)^{\kappa/2}  \frac{1}{2\pi i}\int_{0}^{\infty}  dt  \int_{\mathcal{C}_{\mathrm{0}}}  ds  \,     t^{\kappa-1} s^{-\kappa-1}
 e^{s-t}  \nonumber \\ & \quad \times    4 \left(  \frac{\xi s}{\eta t}\right)^{\nu/2} K_{\mathrm{\nu}}^{(\mathrm{Bes})}\Big(\frac{4\eta}{s}, \frac{4\xi}{t}\Big), \label{kernelsub}\end{align}
 where $\mathcal{C}_{\mathrm{0}}$ is a   counterclockwise contour around the origin and  the  Bessel kernel
 \be K_{\mathrm{\nu}}^{(\mathrm{Bes})}\big(x,y\big)
 =\frac{J_{\nu}(\sqrt{x})\sqrt{y}J'_{\nu}(\sqrt{y})-J_{\nu}(\sqrt{y})\sqrt{x}J'_{\nu}(\sqrt{x})}{2(x-y)}\ee
with  the Bessel function of the first kind $J_{\nu}$; cf. \cite{Fo93,TW94}.   Note that  this  type of convolution representation in  \eqref{kernelsub} has been obtained in  the product of two independent random matrices   for finite   matrix size $N$,  see \cite[Theorem 2.8(b)]{CKW15}.    Actually, in  Sect. \ref{4kernels} below  this will prove  to be      the Meijer G-kernel associated with the  product of two independent  Gaussian matrices which appeared previously in \cite{BGS14,KZ}. 

The second one is a new critical and interpolating kernel between the Meijer G-kernel and the perturbed Bessel kernel, which reads  for $\tau>0$ and $\pi_{1},\ldots,  \pi_m \in (0, 1)$
\begin{align}
K_{\mathrm{II}}(\tau;\xi,\eta)&= \frac{2}{(2\pi i)^2}\int_{\mathcal{C}_{\mathrm{out}}} du \int_{\mathcal{C}_{\mathrm{in}}} dv \,  K_{-\kappa}(2\sqrt{(1-u)\xi})\,  I_{\kappa}(2\sqrt{(1-v)\eta}) \nonumber\\  &  \quad \times e^{-\frac{\tau}{u}+\frac{\tau}{v}}\frac{1}{u-v} \Big(\frac{1-u}{1-v}\Big)^{\kappa/2}\Big(\frac{u}{v}\Big)^{-\nu-m} \prod_{l=1}^{m}\frac{u- \pi_{l}}{v-\pi_{l}}. \label{kernelcrit}\end{align}
The last two kernels are the perturbed Bessel kernel  which was first defined in \cite{DF06} for $\pi_{1},\ldots,  \pi_m \in (0, \infty)$
\begin{align}
K_{\mathrm{III}}(\xi,\eta) &= \frac{2}{ (2\pi i)^2} \frac{1}{4(\xi \eta)^{\frac{1}{4}}} \int_{\mathcal{C}_{\mathrm{out}}} du \int_{\mathcal{C}_{\mathrm{in}}} dv \, e^{\sqrt{\xi}u-\sqrt{\eta}v-\frac{1}{u}+\frac{1}{v}}   \nonumber \\
 & \quad \times \frac{1}{u-v}\Big(\frac{u}{v}\Big)^{-\nu-m}\prod_{l=1}^{m}\frac{u- \pi_{l}}{v-\pi_{l}}, \label{kernelsup}\end{align}
 and the finite coupled product kernel with  $\pi_{1},\ldots,  \pi_m \in (0, 1)$ and  $m\geq 1$
\begin{align}
K_{\mathrm{IV}}(\xi,\eta)&= \frac{2}{(2\pi i)^2}\int_{\mathcal{C}_{\mathrm{out}}} du \int_{\mathcal{C}_{\mathrm{in}}} dv \,  K_{-\kappa}(2\sqrt{(1-u)\xi})\,  I_{\kappa}(2\sqrt{(1-v)\eta}) \nonumber\\  &\quad\times \frac{1}{u-v} \Big(\frac{1-u}{1-v}\Big)^{\kappa/2} \Big(\frac{u}{v}\Big)^{-\nu-m}\prod_{l=1}^{m}\frac{u- \pi_{l}}{v-\pi_{l}}.  \label{kernelsupsup}\end{align}
In the definition of last three kernels,  
  $\mathcal{C}_{\mathrm{out}}$   is a  simple  counterclockwise  contour around  the origin (with  $\mathrm{Re}(z)<1,  \forall z\in \mathcal{C}_{\mathrm{out}}$ for $K_{\mathrm{II}}$ and $K_{\mathrm{IV}}$)
and entirely within it  $\mathcal{C}_{\mathrm{in}}$ is a counterclockwise contour encircling $0, \pi_{1},\ldots,  \pi_m$. %In addition,  for $K_{\mathrm{II}}$ and $K_{\mathrm{IV}}$ we assume that  $\mathrm{Re}(z)<1$ for $z\in \mathcal{C}_{\mathrm{out}}$.  
Note that the last one is actually the correlation kernel  \eqref{kernelCD} associated with coupled products of two Gaussian matrices with properly chosen parameters; see Sect. \ref{4kernels} for detailed discussion on the four kernels.
Also, it's  worth emphasizing that   the kernels defined above may depend on parameters $\tau>0$, $\kappa$, $m$ and $\pi_1, \ldots, \pi_m$,  however, we still use the shorthand notations  for simplicity, unless specified.

 We are now ready to state the main results  which describe a transition of hard edge limits for correlation kernels in four different regimes, by tuning the scale  of $1-\delta_{1}^{2}/\alpha^2,\ldots, 1-\delta_{m}^{2}/\alpha^2$ as $\mu N$ varies from zero to infinity at different scales. A similar hard edge phase transition occurs in three different regimes for the shifted mean chiral Gaussian ensemble \cite{FL15b}. Recently, some different types of hard-to-soft edge transition have been observed for Gaussian perturbations of hard edge random matrix ensembles by Claeys and Doeraene \cite{CD16}.  Also, see \cite{BBP} for the famous   Baik-Ben Arous-P\'{e}ch\'{e} phase transition for largest eigenvalues.
 
\begin{theorem} [Hard edge limits] \label{hardlimits}  Assume that the parameters  $\delta_{j}$ satisfy the condition  \eqref{finiterank} and  $0\leq \delta_{j}<\alpha$ for $j=1, \ldots, m$. 
 With  the correlation kernel    \eqref{kernelCD}  and with fixed nonnegative integers $\nu$ and $\kappa$,  the following hold uniformly for any $\xi$ and $\eta$ in a compact set of $(0,\infty)$ as $N\rightarrow \infty$.

\begin{itemize}

 \item [(i)]  If $\mu N \ra \infty$, %  and    $1- \delta_{l}^{2}/\alpha^2$ for $l=1, \ldots, m,$
 then
 \be
\frac{\mu }{N} K_N\big( \frac{\mu }{N} \xi, \frac{\mu }{N} \eta\big)\ra K_{\mathrm{I}}(\xi,\eta). \ee

 \item [(ii)]  If $\mu N \ra \tau/4$ with $\tau>0$ and
$ 1- \delta_{l}^{2}/\alpha^2  \ra \pi_l \in (0, 1)$ for $l=1, \ldots, m,$ then
 \be
 \alpha^{-2} K_N( \alpha^{-2} \xi,  \alpha^{-2} \eta)\ra K_{\mathrm{II}}(\tau;\xi,\eta). \label{IIlimit} \ee

\item [(iii)]  If $\mu N \ra 0$ and  $ 1- \delta_{l}^{2}/\alpha^2= 4\mu N  \pi_l$  with $\pi_l\in (0, \infty)$ 
 for $l=1, \ldots, m,$ then
 \be
 %% \frac{e^{(\alpha/N)\sqrt{\xi}}}{e^{(\alpha/N)\sqrt{\eta}}}  e^{\alpha(\sqrt{\xi}-\sqrt{\eta})/N} e^{\frac{\alpha}{N}(\sqrt{\xi}-\sqrt{\eta})} 
   \frac{ e^{\frac{\alpha}{N} \sqrt{\xi}}}{e^{\frac{\alpha}{N} \sqrt{\eta}}}\frac{1}{4N^2}  K_N\big(\frac{1}{4N^2} \xi, \frac{1}{4N^2} \eta\big)\ra K_{\mathrm{III}}(\xi,\eta). \label{IIIlmit} \ee

\item [(iv)]  If $\mu N \ra 0$ and
$ 1- \delta_{l}^{2}/\alpha^2  \ra \pi_l\in (0,1)$ for $l=1, \ldots, m,$ then for $m\geq 1$
 \be
4\mu^2 K_N(4\mu^2 \xi, 4\mu^2 \eta)\ra K_{\mathrm{IV}}(\xi,\eta).  \label{IVlimit} \ee
\end{itemize}
 \end{theorem}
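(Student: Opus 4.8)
The plan is to run a Laplace/steepest-descent analysis on the finite-$N$ double contour integral \eqref{kernelCD} of Theorem~\ref{kernelpdf}. First substitute the finite-rank parametrization \eqref{finiterank}: with $c_{l}=1-\delta_{l}^{2}/\alpha^{2}$ for $l=1,\dots,m$ and $c=1-\delta^{2}/\alpha^{2}=4\mu/(1+\mu)^{2}$, the product over $l=1,\dots,N$ in \eqref{kernelCD} splits off an $(N-m)$-th power which, combined with $(u/v)^{-\nu-N}$, gives
\[
\Big(\frac{u}{v}\Big)^{-\nu-m}\prod_{l=1}^{m}\frac{u-c_{l}}{v-c_{l}}\;\Big(\frac{v(u-c)}{u(v-c)}\Big)^{N-m}=\Big(\frac{u}{v}\Big)^{-\nu-m}\prod_{l=1}^{m}\frac{u-c_{l}}{v-c_{l}}\;\frac{(1-c/u)^{N-m}}{(1-c/v)^{N-m}},
\]
so that the $N$-dependence is carried by the single exponential $\exp\{(N-m)\log\frac{1-c/u}{1-c/v}\}$; recall also $\alpha^{2}=(1+\mu)^{2}/(4\mu^{2})$ and, when $\mu\to0$, $c=4\mu(1+o(1))$. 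For each of the four regimes one rescales $u,v$ and $x,y$ in tandem so that the prescribed rescaling of $K_{N}$ turns \eqref{kernelCD} into a contour integral whose integrand converges, and then passes to the limit by dominated convergence, using the freedom in the choice of $\mathcal C_{\mathrm{out}},\mathcal C_{\mathrm{in}}$ granted by Theorem~\ref{kernelpdf}.

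Regimes (ii) and (iv) need no genuine saddle point. Taking $x=\alpha^{-2}\xi$ (resp.\ $x=4\mu^{2}\xi$) makes $2\alpha\sqrt{(1-u)x}=2\sqrt{(1-u)\xi}$ (resp.\ $2(1+\mu)\sqrt{(1-u)\xi}\to2\sqrt{(1-u)\xi}$), the outer constant $2\alpha^{2}$ times the rescaling produces the constant $2$ in \eqref{kernelcrit} and \eqref{kernelsupsup}, and $c_{l}\to\pi_{l}$ gives the finite product. Since $(1-c/u)^{N-m}=\exp\{-(N-m)c/u+O((N-m)c^{2})\}$ with $(N-m)c\to\tau$ in (ii) (because $\mu N\to\tau/4$ and $(1+\mu)^{2}\to1$) and $(N-m)c\to0$ in (iv) (because $\mu N\to0$), the bulk factor tends to $e^{-\tau/u+\tau/v}$ and to $1$ respectively; holding $\mathcal C_{\mathrm{out}},\mathcal C_{\mathrm{in}}$ as fixed contours with the required nesting and $\mathrm{Re}(z)<1$ on $\mathcal C_{\mathrm{out}}$, the integrand converges pointwise and is uniformly dominated (note $|(1-c/u)^{N-m}|\le e^{C\tau}$ along $\mathcal C_{\mathrm{out}}$ for large $N$), yielding $K_{\mathrm{II}}$ and $K_{\mathrm{IV}}$. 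Uniformity in $\xi,\eta$ over compacts is immediate from continuity of the Bessel factors.

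Regime (iii) requires Bessel asymptotics. With $x=\xi/(4N^{2})$ one has $2\alpha\sqrt{(1-u)x}\sim(1+\mu)\sqrt\xi/(2\mu N)\to\infty$, so $K_{-\kappa}(z)$ and $I_{\kappa}(z)$ must be replaced by $\sqrt{\pi/(2z)}\,e^{-z}$ and $e^{z}/\sqrt{2\pi z}$. Rescaling $u=4\mu N\,U$, $v=4\mu N\,V$ makes $c_{l}/u\to\pi_{l}/U$ (using $1-\delta_{l}^{2}/\alpha^{2}=4\mu N\pi_{l}$) and $(N-m)c/u\to1/U$, so the bulk factor becomes $e^{-1/U+1/V}$; expanding $\sqrt{(1-u)x}=\frac{\sqrt\xi}{2N}(1-2\mu N\,U+\cdots)$ splits the Bessel exponent into a divergent part $\mp(1+\mu)\sqrt\xi/(2\mu N)=\mp\alpha\sqrt\xi/N$, cancelled exactly by the conjugating prefactor $e^{\alpha\sqrt\xi/N}/e^{\alpha\sqrt\eta/N}$, and a convergent part $\pm(1+\mu)\sqrt\xi\,U\to\pm\sqrt\xi\,U$; collecting the $\sqrt{\pi/(2z)}$ prefactors, the Jacobian $(4\mu N)^{2}$, the outer $2\alpha^{2}$ and the overall $1/(4N^{2})$ produces precisely the constant $1/\bigl(4(\xi\eta)^{1/4}\bigr)$, giving $K_{\mathrm{III}}$. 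One has to check that the Bessel asymptotics are uniform along the rescaled contours and justify interchanging limit and integral.

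Regime (i) is the genuine saddle-point case and, I expect, the \emph{main obstacle}. With $x=\frac{\mu}{N}\xi$ one has $\mu N\to\infty$, the Bessel arguments $2\alpha\sqrt{(1-u)x}\to0$, and the exponential $\exp\{(N-m)\log\frac{1-c/u}{1-c/v}\}$ together with the order-$(\nu+N)$ pole of the integrand at the origin forces the $u,v$ integrations to localize near $z=0$; after the corresponding rescaling the double contour must be reshaped — one contour deformed to a loop $\mathcal C_{0}$ around the origin and the other opened up to the half-line $[0,\infty)$ — while $K_{-\kappa},I_{\kappa}$ are handled through their small-argument expansions and $\prod_{l=1}^{m}\frac{u-c_{l}}{v-c_{l}}\to1$, so that the coupling drops out of the limit; matching the outcome with the Bessel-kernel convolution identity of \cite[Theorem~2.8(b)]{CKW15} reproduces \eqref{kernelsub}, which by Sect.~\ref{4kernels} is the Meijer-G kernel of \cite{KZ,BGS14}. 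The hard part is precisely this reshaping of the double contour, carried out against the singular points $z=0$ and $z=1$ of the integrand, together with the uniform domination needed to push the limit inside along contours that degenerate as $N\to\infty$; the analogous difficulty in (iii) — keeping exact track of the cancellation of the divergent exponentials against the conjugating prefactor and of the resulting constant — is more mechanical, whereas (ii) and (iv) are routine once admissible fixed contours are in place.
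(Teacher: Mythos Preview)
Your treatment of regimes (ii), (iii) and (iv) matches the paper's proof essentially line by line: fixed contours and $(1-c/z)^{N-m}\to e^{-\tau/z}$ or $1$ for (ii) and (iv), and in (iii) the rescaling $u=4\mu N\,U$, $v=4\mu N\,V$ together with the large-argument asymptotics \eqref{asymptoticsIK0}--\eqref{asymptoticsIK} for $I_\kappa$ and $K_{-\kappa}$, the divergent exponentials being absorbed by the conjugating factor.

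For regime (i), however, your plan has a genuine gap. The integrals $\int_{0}^{\infty}dt$ and $\int_{\mathcal C_{0}}ds$ appearing in $K_{\mathrm{I}}$ do \emph{not} arise from opening up or reshaping the $u,v$ contours, and ``small-argument expansions'' of $K_{-\kappa}$ and $I_{\kappa}$ will not reproduce the full convolution structure \eqref{kernelsub}. What the paper does instead is to substitute the \emph{exact} integral representations \eqref{Kfunction} and \eqref{IBessel},
\[
K_{-\kappa}(2\alpha\sqrt{(1-u)x})=\tfrac12 x^{-\kappa/2}\!\int_{0}^{\infty}\!\frac{dt}{t}\,t^{\kappa}e^{-t-\alpha^{2}(1-u)x/t},\qquad
I_{\kappa}(2\alpha\sqrt{(1-v)y})=\frac{(\alpha\sqrt{(1-v)y})^{\kappa}}{2\pi i}\!\int_{\mathcal C_{0}}\!\frac{ds}{s^{\kappa+1}}\,e^{s+\alpha^{2}(1-v)y/s},
\]
apply Fubini, and rescale $u\mapsto 4\mu(1+\mu)^{-2}Nu$, $v\mapsto 4\mu(1+\mu)^{-2}Nv$. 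This factors the rescaled kernel as in \eqref{kernelCD3}: an \emph{outer} $(t,s)$-integral, which is exactly the one you see in \eqref{kernelsub}, wrapping an \emph{inner} $(u,v)$ double contour integral $\widetilde K_{N}$ of Desrosiers--Forrester type. The inner integral (after a harmless deformation to fixed rectangular contours, cf.\ \eqref{kernelCD5-11}--\eqref{kernelCD5-13}) converges by dominated convergence to the Bessel-kernel form \eqref{4.5}--\eqref{4.6}; the product $\prod_{l=1}^{m}$ and the $(1-1/(Nu))^{N-m}$ factor disappear into $e^{-1/u+1/v}$ because $\mu N\to\infty$ kills all $c_{l}/\tau$'s. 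So no saddle-point analysis is needed: the trick is \emph{Fubini after inserting the Bessel integral representations}, which simultaneously produces the outer $(t,s)$ convolution and reduces the $(u,v)$ integral to a bounded-contour limit.
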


This theorem says that there are exactly four distinct limiting kernels  as the coupling strength $\mu$ changes, along with properly chosen scalings of parameters $\delta_1, \ldots, \delta_m$. The same result  also appears in a Jacobi-type product ensemble which predicts a universal pattern; see Theorem \ref{hardlimitsJ} in Sect. \ref{sectionhardlimit}.  Compared with all those known phase transition phenomena mentioned above in Random Matrix Theory (RMT), as far as we know, Theorem \ref{hardlimits} is the first show of a four-term transition.   
Usually in RMT the pattern of universality for  local  eigenvalue statistics  depends on some exponent $c<1$, with which the limiting density of eigenvalues diverges (hard edge) or vanishes (soft edge) like $|x-x_{0}|^{-c}$ as $x \ra x_{0}$  from either side. This  leads to a change in fluctuations in powers of matrix size $N$ and thus the scaling of the correlation kernel; see e.g. \cite{BBP,FL15b}. As to the kernels above, when $m=0$ it was argued in \cite[Sect. 2]{AS15b} that the exponent $c=2/3, 3/4, 3/4$ at the origin corresponding to cases (i), (ii) and (iii) of Theorem \ref{hardlimits} (if we change variables $\xi, \eta$  to $\xi^2, \eta^2$ in  cases (ii) and (iii), then $c=1/2$ in both cases, which  is consistent with the description given  in \cite{AS15b}).  Thus,   at least,  the limit from $K_{\mathrm{II}}$ to $K_{\mathrm{I}}$ is   a candidate for a phase transition.  On the other hand, the scaling $\alpha^{-2}\sim \tau^{2}(2N)^{-2}$ in case (ii)  is  the same  as in  case (ii) but different   from  cases (i) and (iv). This probably indicates  a phase transition  from $K_{\mathrm{I}}$ to $K_{\mathrm{II}}$ to $K_{\mathrm{IV}}$.

We remark that although  case (iv) can be formally obtained by merely permitting   $\tau=0$ in  case (ii),  we  sepatate  it   at least   for two reasons: 
one is, we divide  the limits  of  $\mu N$ into three  categories: $\infty$, $(0, \infty)$ and 0, the third of which is again   divided into two cases according to the choice of different   scalings of parameters $\delta_1, \ldots, \delta_m$; the other  is to  emphasize that  the finite coupled product kernel  $K_{\mathrm{IV}}$ will appear as a limiting kernel in RMT like the  finite GUE and LUE  kernels (cf. \cite{BBP,FL15b}), and that  it is non-trivial only  when  the finite rank perturbation $m\geq 1$.  A few other relevant remarks are as follows.
 \begin{remark} We noticed the preprint \cite{AS15b} when it appeared early during the drafting of this article. At that time Theorem \ref{kernelpdf} and  Parts (i), (ii) and (iv) of Theorem \ref{hardlimits} was completed while Part (iii) was later inspired by \cite[Theorem 1.5 (a)]{AS15b}. We are grateful to Gernot Akemann for detailed discussions on the main results of \cite{AS15b}.
 \end{remark}

 \begin{remark} When $L=N$ (that is, $\kappa=0$) and $\Omega$ is a scalar matrix (equivalently, $m=0$ in \eqref{finiterank}), we compare Theorem \ref{hardlimits} with relevant results of Akemann and Strahov as follows. For fixed $\mu$, Part (i) of Theorem \ref{hardlimits} was previously obtained by  Akemann and Strahov, see  \cite[Theorem 3.9]{AS15}.  In a subsequent paper  \cite{AS15b}, with  $\mu=gN^{-\chi}$,  they further obtained   the hard edge limits in   cases  $0\leq \chi<1$,  $\chi=1$  and $\chi>1$, which respectively corresponds to  Parts (i), (ii)  and (iii), and proved that the limiting kernels in  Parts (i) and (iii) agree with the standard integral forms.  Although their   double integral of correlation kernel  at the critical scale is different from  ours, these are believed to be the same; see Sect. \ref{4kernels} below for further discussion on the four kernels in Theorem \ref{hardlimits}.
%% (1) For fixed $\mu$, Part (i) of Theorem \ref{hardlimits} was previously obtained by  Akemann and Strahov, although the double integrals for limiting  kernels are in different form; see \cite[Theorem 3.9]{AS15}. (2) Let $\mu=gN^{-\chi}$.  In \cite{AS15b}, Akemann and Strahov  have proved   the case $0\leq \chi<1$ which is a special case of Part (i),  the critical case $\chi=1$  corresponding to Part (ii) and the case $\chi>1$ which is part of Part (iii).  Although their   double integral of correlation kernel  at the critical scale is different from  ours, these are believed to be the same; see Sect. \ref{4kernels} below for further discussion on the four kernels in Theorem \ref{hardlimits}.
 \end{remark}

\begin{remark} \label{normconvergence} Note that  for biorthogonal ensembles  the gap probability that no eigenvalues belong to a given Borel set $A\subset\mathbb{R}$ has a Fredholm determinant expression (see e.g. \cite[Lemma 3.2.4]{AGZ09})
   \be \mathbb{P}(x_1 \in A^c, \ldots, x_N \in A^c)=1+\sum_{k=1}^{\infty}\frac{(-1)^k}{k!}\int_{A} \cdots \int_{A}  \det[K_N(t_i,t_j)]_{i,j=1}^{k} dt_{1} \cdots dt_{k}, \nonumber \ee
   if we strengthen the results in Theorem   \ref{hardlimits}
  from uniform convergence into the trace norm convergence of the integral operators with respect to the correlation kernels,
    then as a direct consequence we  have   the limiting gap probabilities after rescaling,  especially including the   distribution of smallest  squared singular values (cf. \cite[Chapters 8 \& 9]{Fo10}). In the case of Part (iv), we have closed expression for  scaling limit of  the  smallest  squared singular values, see equation \eqref{goodform} in Sect. \ref{4kernels} below.   Since the proof of trace norm convergence is only a technical elaboration that confirms a well-expected result, we do not give the detail.
\end{remark}

Finally, we conclude this section with two conjectures. One is the product of two coupled  real Gaussian matrices while the other refers to generalizations of matrix entries from  Gaussian variables to the more  general random variables.

\begin{Conj} For real counterpart of the joint PDF \eqref{matrixpdf}, Theorem   \ref{hardlimits} still holds for different limiting kernels with certain   Pfaffian structure, but under the same scalings. In particular, the critical scale of $\mu$ is again expected to be $1/N$. \end{Conj}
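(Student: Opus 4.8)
The plan is to reproduce, in the $\beta=1$ setting, the three-step route that yields Theorem~\ref{hardlimits}: an exact formula for the joint PDF of squared singular values, a contour-integral representation of the correlation kernel, and a steepest-descent analysis in the four regimes. For the first step, note that conditional on $X$ the real matrix $G$ is Gaussian, so it can be integrated out exactly, and $O(M)\times O(N)$-bi-invariance of the law of $X$ reduces the problem to a density in the squared singular values $x_1,\dots,x_N$ of $X$. In the complex case the angular integrations were carried out by the Harish-Chandra--Itzykson--Zuber formula, producing the factor $\det[I_\kappa(2\delta_i\sqrt{x_j})]$; over the orthogonal group there is no such closed form, so instead I would invoke a de~Bruijn-type integration identity (as in the analysis of products of real Ginibre matrices) to show that the resulting density on $[0,\infty)^N$ takes the $\beta=1$ form $Z_N^{-1}\,|\Delta(x)|\,\mathrm{Pf}\!\bigl[\cdots\bigr]\prod_j w(x_j)$, with the usual modification when $N$ is odd, where the weight $w$ and the Pfaffian kernel are explicit and built from $I_\kappa$, $K_\nu$ and the data $\delta_1,\dots,\delta_N$. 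This would exhibit the squared singular values as a Pfaffian point process.

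For the second step I would introduce skew-orthogonal functions with respect to the skew-symmetric bilinear form attached to $w$ --- these play the role of the biorthogonal system of \cite{AS15} --- derive the associated Christoffel--Darboux-type identity, and express the three independent entries $S_N$, $D_N$, $I_N$ of the $2\times2$ matrix kernel as double (respectively single) contour integrals, exactly paralleling the passage from the Eynard--Mehta formula to Theorem~\ref{kernelpdf}. One expects the integrands to keep the factors $\bigl(\tfrac{1-u}{1-v}\bigr)^{\kappa/2}\bigl(\tfrac{u}{v}\bigr)^{-\nu-N}\prod_{l}\frac{u-(1-\delta_l^2/\alpha^2)}{v-(1-\delta_l^2/\alpha^2)}$ together with the appropriate products of $K_{-\kappa}$ and $I_\kappa$ Bessel factors. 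The third step is then the asymptotic analysis of Section~\ref{sectionhardlimit} repeated entry by entry: steepest descent in $u$ and $v$ under the scalings $\mu N\to\infty$, $\mu N\to\tau/4$, and $\mu N\to0$ (in its two sub-regimes with the stated choices of $1-\delta_l^2/\alpha^2$). Since the critical-point structure of the $u$- and $v$-integrals is not altered by passing from $\beta=2$ to $\beta=1$, the same four regimes should persist, with limiting $2\times2$ matrix kernels whose entries are assembled from the same building blocks as $K_{\mathrm{I}},\dots,K_{\mathrm{IV}}$; in particular the exponential factor $e^{-\tau/u+\tau/v}$ is produced exactly when $\mu N=O(1)$, which is why the critical scale is again $\mu\sim1/N$.

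The main obstacle is the first step. The lack of a closed-form orthogonal-group analogue of the HCIZ integral means one cannot simply transcribe Theorem~\ref{coupledmultiplication}, and the coupling term $\tr(\Omega X_1X_2+(\Omega X_1X_2)^t)$ leads to a group integral that is not of the standard real-Laguerre or real-Jacobi Pfaffian type, so the correct de~Bruijn identity together with the explicit weight and Pfaffian kernel must essentially be found from scratch; an alternative would be to look for a genuine real analogue of the key integral identity used in the complex proof. Added to this are the familiar $\beta=1$ bookkeeping issues --- the even/odd-$N$ dichotomy and the rank-one correction term in the matrix kernel --- and the need to make all steepest-descent estimates uniform across the four regimes for each matrix entry. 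Once a Pfaffian structure and a contour representation are secured, the second and third steps are expected to be laborious but essentially routine elaborations of the complex case, and the same strategy applied to the real Jacobi-type product should yield the conjectural universal pattern there as well.
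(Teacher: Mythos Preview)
The statement you are attempting to prove is explicitly labeled a \emph{Conjecture} in the paper and is left open; there is no proof in the paper to compare your proposal against. The author offers it as one of two conjectures closing Section~\ref{sectionintroduction}, with no further discussion beyond the expectation that the Pfaffian analogue should exist and that the critical scale $\mu\sim 1/N$ should persist.

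Your outline is a sensible sketch of how one might approach the problem, and you correctly identify the decisive obstruction: the absence of a closed-form Harish-Chandra--Itzykson--Zuber integral over the orthogonal group. This is not a minor technicality --- it is precisely what makes the conjecture open rather than a routine extension. Your proposed workaround via a de~Bruijn-type identity is the natural thing to try, but as you yourself acknowledge, the explicit Pfaffian weight and kernel ``must essentially be found from scratch,'' and there is no guarantee that the resulting skew-orthogonal system admits a contour-integral representation amenable to the steepest-descent analysis of Section~\ref{sectionhardlimit}. Your heuristic for why the critical scale remains $1/N$ (the emergence of $e^{-\tau/u+\tau/v}$ exactly when $\mu N=O(1)$) is persuasive at the level of the $u,v$-integrand, but it presupposes that Step~2 delivers integrals of the same structural form, which is the unproven part. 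In short: your plan is the right shape, but Step~1 is a genuine research problem, not a gap that can be filled by elaboration.
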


To state the second conjecture, let $\alpha>0$ and   $m$ be  a fixed nonnegative integer, assume that   $\delta_1, \ldots, \delta_m$ and $\delta$ are complex numbers with absolute value less than $\alpha$. We consider two  complex random  matrices $X_1=[X_{1}(j,k)]_{1\leq j\leq N, 1\leq k\leq M}$ and $X_2=[X_{2}(k,j)]_{1\leq k\leq M,1\leq j\leq N}$ such that the    following conditions are satisfied:
\begin{enumerate}\item [(C1)]  The vector pairs  $\{X_{1}(j,k), X_{2}(k,j)\}_{1\leq j\leq N, 1\leq k\leq M}$  are independent,  and moreover   $\{X_{1}(j,k), X_{2}(k,j)\}_{m+1\leq j\leq N, 1\leq k\leq M}$  are   identically distributed and so are   $\{X_{1}(j,k), X_{2}(k,j)\}_{1\leq k\leq M}$   for any given $j\in \{1, \ldots, m\}$;
\item [(C2)] For any $j,k$,  $\mathds{E}[X_{1}(j,k)]=\mathds{E}[X_{2}(k,j)]=0$, $\mathds{E}[(X_{1}(j,k))^{2}]=\mathds{E}[(X_{2}(k,j))^{2}]=0$, $\mathds{E}[X_{1}(j,k) \overline{{X_{2}(k,j)}} ]=0$;
    \item [(C3)] When    $j\geq m+1$,  $\mathds{E}[|X_{1}(j,k)|^{2}]=\mathds{E}[|X_{2}(k,j)|^{2}]=\alpha/(\alpha^2-|\delta|^2)$ and  $\mathds{E}[X_{1}(j,k)  {X_{2}(k,j)}  ]= \bar{\delta}/(\alpha^2-|\delta|^2)$, while for any given $j\in \{1, \ldots, m\}$ $\mathds{E}[|X_{1}(j,k)|^{2}]=\mathds{E}[|X_{2}(k,j)|^{2}]=\alpha/(\alpha^2-|\delta_j|^2)$ and  $\mathds{E}[X_{1}(j,k)  {X_{2}(k,j)}  ]= \bar{\delta_{j}}/(\alpha^2-|\delta_j|^2)$;

\item [(C4)]   For any $j,k$,  $\mathds{E}[|X_{1}(j,k)|^{4}]<\infty$ and $\mathds{E}[|X_{2}(k,j)|^{4}]<\infty$.
\end{enumerate}
Note that the joint PDF \eqref{matrixpdf} with $L=N$  satisfies the above assumptions since $\Omega$ can  be taken to be diagonal according to the invariance of  Gaussian random variables.

\begin{Conj}  For $\alpha=(1+\mu)/(2\mu),   \delta=(1-\mu)/(2\mu), 0<\mu\leq 1$,  under the above assumptions (C1)-(C4),    Theorem   \ref{hardlimits} still holds true but with $\kappa=0$. \end{Conj}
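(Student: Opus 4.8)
The statement is a universality (moment-matching) assertion: conditions (C1)--(C4) prescribe only the first two moments of each coupled pair $(X_1(j,k),X_2(k,j))$, and by (C3)--(C2) these coincide with the covariance structure of the Gaussian model of \eqref{matrixpdf} with $L=N$ (with $\Omega$ taken diagonal with entries $\delta_1,\dots,\delta_m,\delta,\dots,\delta$, which is legitimate by the unitary invariance noted after (C4)), while (C4) is only a regularity hypothesis. The natural route is therefore a Lindeberg-type replacement of the general entries by Gaussian ones, powered by an optimal local law near the hard edge; once the limiting correlation kernel of the general model is shown to agree to leading order with that of the Gaussian model, Theorem \ref{hardlimits} (together with Proposition \ref{twoGpdf} and Theorem \ref{kernelpdf}) identifies it with $K_{\mathrm{I}},\dots,K_{\mathrm{IV}}$ in the four regimes.

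\emph{Step 1 (reduction and linearization).} Reduce to $\kappa=0$, $\nu=M-N$ fixed, $\delta_j$ real, with the common variance normalized as in (C3). Encode the squared singular values of $Y_2=X_1X_2$ through the Hermitization (equivalently, the chain linearization) of $X_1,X_2$: a $(2N+M)$-dimensional block matrix $H$ whose resolvent $R(z)=(H-z)^{-1}$ gives access to linear eigenvalue statistics of $Y_2^*Y_2$ and, after integration, to the correlation kernel and to finite-dimensional gap probabilities. Because the coupled pair $(X_1(j,k),X_2(k,j))$ sits in two distinct off-diagonal blocks of $H$ but is correlated, $H$ is a Wigner-type matrix with a prescribed (finite-range) correlation pattern; its deterministic equivalent is governed by a self-consistent vector equation depending on $\mu$ and on $\pi_l=1-\delta_l^2/\alpha^2$.

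\emph{Step 2 (a priori estimates).} Prove an optimal local law for $R(z)$ down to the relevant hard-edge scale in each of the four regimes of Theorem \ref{hardlimits}, and deduce rigidity for the smallest eigenvalues and, crucially, a lower tail bound for $\lambda_{\min}(Y_2^*Y_2)$ at the correct scale ($\sim\mu/N$, $\alpha^{-2}$, $N^{-2}$, $\mu^2$ in cases (i)--(iv)). This is the conceptual heart of the argument: the four scalings are precisely the four possible hard-edge behaviours, so the estimate has to be established uniformly within, and matched across, the regimes. In the regime $\mu\to0$ the product degenerates toward a (perturbed) Wishart matrix --- so one is effectively proving perturbed-Bessel hard-edge universality --- whereas for $\mu$ bounded away from $0$ it behaves like a product of two Ginibre matrices, i.e.\ one needs Meijer-$G$ hard-edge universality, and the critical regime $\mu N\to\tau/4$ interpolates between the two; controlling this crossover uniformly is the main obstacle.

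\emph{Step 3 (Green's function comparison).} Swap the pairs $(X_1(j,k),X_2(k,j))\to$ (Gaussian with matching covariance) one pair at a time, telescoping over the $NM$ pairs. For a smooth test function $f$ localized near the origin at the scale of the relevant regime, Taylor-expand $\mathbb{E}[f(\text{rescaled eigenvalues})]$ in the swapped pair: the constant, first-moment and (matched) second-moment terms cancel between the two distributions; the genuinely third-order contributions, which are \emph{not} matched by (C1)--(C4), are controlled by the standard cancellation of odd-order terms for Hermitizations of complex matrices together with the a priori bounds of Step 2, and the full remainder is bounded using the fourth-moment hypothesis (C4); the key point is that, at the hard edge and at the relevant scaling, the resolvent entries carry enough extra smallness that the summed error is $o(1)$. (If a direct estimate of the third-order term proves too lossy, one can instead interpolate via a Dyson-Brownian-motion/Ornstein--Uhlenbeck flow to insert a small Gaussian component, at which point the two-moment matching becomes effective.) Combining Steps 1--3 shows that all finite-dimensional gap probabilities, equivalently the limiting correlation kernel, agree with those of the Gaussian model, which by Theorem \ref{hardlimits} are $K_{\mathrm{I}},\dots,K_{\mathrm{IV}}$; since $\kappa=0$ by construction, this is exactly the asserted conclusion.
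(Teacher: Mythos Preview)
The statement you are attempting to prove is explicitly labeled a \emph{Conjecture} in the paper; the paper offers no proof and presents it as an open problem.  So there is no ``paper's own proof'' to compare against.

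Your proposal follows the by-now standard three-step scheme (linearization/Hermitization, optimal local law down to the relevant scale, Green's function comparison via Lindeberg swapping).  As a high-level strategy this is the natural candidate, but the sketch glosses over exactly the obstacles that make this a conjecture rather than a theorem.  First, Step~2 is the crux and is essentially asserted rather than argued: an optimal local law for a correlated Wigner-type linearization \emph{down to the hard-edge scale}, uniformly across four regimes in which the hard-edge exponent itself changes, is not available in the literature and would be a substantial result in its own right; the existing hard-edge universality proofs (even for a single Wishart matrix) typically require more than four moments or special structure, and here you only assume (C4).  Second, in Step~3 you dismiss the unmatched third-moment contributions by appealing to ``standard cancellation of odd-order terms for Hermitizations of complex matrices,'' but (C1)--(C4) impose no third-moment conditions beyond $\mathds{E}[X^2]=0$, and the usual complex-entry cancellation handles terms like $\mathds{E}[X^3]$ but not mixed moments such as $\mathds{E}[|X|^2 X]$, which can be nonzero here and enter the expansion because the two blocks of the linearization carry \emph{correlated} entries $X_1(j,k)$ and $X_2(k,j)$; controlling these at the hard-edge scale with only four moments is precisely the missing ingredient.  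Your parenthetical fallback to a Dyson-Brownian-motion regularization is also not routine at the hard edge for product ensembles.  In short, the outline is reasonable as a research program, but as written it does not constitute a proof: Steps~2 and~3 each hide a genuine open problem.
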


\section{Coupled multiplication with a random matrix} \label{sectpdf}

\subsection{Coupled multiplication with a Ginibre matrix} \label{sectpdfG}
For complex matrices $G$ of size $L \times M$, $X$ of size $M \times N$ and $\Omega $ of size $N\times L$ with $L, M\geq N$, suppose that  the joint  probability distribution of $G$  and $X$   is equal to
 \be  \label{twomatrixpdf}
  Z^{-1}  \exp\big\{ -\alpha \tr(G^{}G^{*})+\tr(\Omega GX+  (\Omega GX)^*)\big\} \, h(X) dG dX, \ee
where $dG=\prod_{j=1}^{L} \prod_{k=1}^{M} d\textrm{Re}\,{G_{j,k}} d\textrm{Im}\, G_{j,k}$ and $dX=\prod_{j=1}^{M} \prod_{k=1}^{N} d\textrm{Re}\,{X_{j,k}} d\textrm{Im}\, X_{j,k}$, and also suppose that $h(X)$ is invariant under left and right multiplication with unitary matrices, i.e., $h(UXV)=h(X)$ for any unitary matrices  $U\in U(M)$ and $V\in U(N)$.  We turn to the product  $Y = GX$ and study the squared singular values of $Y$.

The main result of this section can be stated as follows.
\begin{theorem}\label{coupledmultiplication} With the joint PDF defined in  \eqref{twomatrixpdf}, let $\delta_1, \ldots, \delta_N$ be singular values of $\Omega$ such that $0\leq \delta_j<\alpha$ for $j=1,\ldots,N$. Suppose that  $f_{k}(t)$ ($k=1, \ldots, N$) are continuous in $(0,\infty)$ such that all  $e^{\alpha t}f_{k}(t)$  are bounded in $[0, \infty)$, let 
\be h(X)=\frac{1}{\Delta(t)}\det[f_{k}(t_j)]_{j,k=1}^N, \quad 0<t_1, \ldots, t_N<\infty, \label{hPE}\ee
 where  $t_1, \ldots, t_N$ are    eigenvalues of $X^{*} X^{}$,  then the  squared singular values    of $Y=GX$ have a joint PDF on $[0,\infty)^N$
 \be \label{eigenvaluepdf} \mathcal{P}_{N}(x_1,\ldots,x_N)=\frac{1}{Z_N} \det[\xi_i(x_j))]_{i,j=1}^{N} \det[\eta_i(x_j))]_{i,j=1}^{N},    %\det[x^{\frac{M-N}{2}}_{j} K_{i-1+\nu}(2\alpha  \sqrt{x_j})]_{i,j=1}^{N},
 \ee
 where  $\xi_{i}(z)= {_0F_1}(L-N+1;\delta_{i}^{2}z)$ and
  \be \eta_{i}(z)= \int_{0}^{\infty} t^{L-N} e^{-\alpha t} \big(\frac{z}{t}\big)^{M-N} f_{i}\big(\frac{z}{t}\big) \frac{dt}{t}. \ee
The normalization constant can be evaluated by
  \be Z_N=N! ((L-N)!)^N \alpha^{-N(L-N+1)}\det\Big[\int_{0}^{\infty} e^{x\delta_{j}^{2}/\alpha} x^{M-N}f_{k}(x) dx\Big]_{j,k=1}^N.\ee
  \end{theorem}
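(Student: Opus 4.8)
The plan is to integrate out the Gaussian matrix $G$, reduce the bi-invariant matrix $X$ to its squared singular values, and then recognise the two resulting group integrals as producing the two determinants in \eqref{eigenvaluepdf}. First I would condition on $X$ and complete the square in the exponent of \eqref{twomatrixpdf}: this writes $G=\tfrac1\alpha\Omega^{*}X^{*}+G_{0}$ with $G_{0}$ an $L\times M$ complex Ginibre matrix of entrywise variance $1/\alpha$, so that $Y=GX=\tfrac1\alpha\Omega^{*}X^{*}X+G_{0}X$. The matrix $G_{0}X$ is a complex matrix‑variate Gaussian with density proportional to $(\det X^{*}X)^{-L}\exp\{-\alpha\tr(\,\cdot\,(X^{*}X)^{-1}\,\cdot^{*})\}$ (its rows are i.i.d.\ centred with covariance $\tfrac1\alpha X^{*}X$ up to transposition — the standard pushforward statement behind ``multiplication by a Ginibre matrix''). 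Expanding the exponent at $Y$ produces a term $-\tfrac1\alpha\tr(X^{*}X\Omega\Omega^{*})$ which exactly cancels the $+\tfrac1\alpha\tr(X^{*}X\Omega\Omega^{*})$ left over from completing the square, so the joint density of $(Y,X)$ is proportional to
\[
(\det X^{*}X)^{-L}\,e^{-\alpha\tr(Y(X^{*}X)^{-1}Y^{*})+2\,\mathrm{Re}\,\tr(\Omega Y)}\,h(X).
\]

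Next I would insert the singular value decompositions $X=U_{X}\Sigma_{X}V_{X}^{*}$ and $Y=U_{Y}\Sigma_{Y}W_{Y}^{*}$ with $\Sigma_{Y}=\mathrm{diag}(\sqrt{x_{1}},\dots,\sqrt{x_{N}})$, together with the radial decompositions $dX\propto\prod_{j}t_{j}^{M-N}\Delta(t)^{2}\,dt\,dU_{X}\,dV_{X}$ and $dY\propto\prod_{i}x_{i}^{L-N}\Delta(x)^{2}\,dx\,dU_{Y}\,dW_{Y}$ (Haar probability measures on the angular variables, $U_Y$ ranging over the $L\times N$ matrices with $U_Y^{*}U_Y=I_N$). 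Bi‑invariance of $h$ makes $h(X)=\Delta(t)^{-1}\det[f_{k}(t_{j})]$ a function of $t$ only, and neither $(X^{*}X)^{-1}=V_{X}\mathrm{diag}(1/t_{j})V_{X}^{*}$ nor $\det X^{*}X$ nor the coupling term involves $U_{X}$, so $U_{X}$ integrates to a constant. Writing $Y^{*}Y=W_{Y}\mathrm{diag}(x_{i})W_{Y}^{*}$ and then substituting $R:=W_{Y}^{*}V_{X}\in U(N)$, the exponential $\exp\{-\alpha\tr(\mathrm{diag}(x)R\,\mathrm{diag}(1/t)R^{*})\}$ loses its dependence on $W_{Y}$ while $\exp\{2\,\mathrm{Re}\,\tr(\Sigma_{Y}W_{Y}^{*}\Omega U_{Y})\}$ loses its dependence on $R$, so the integral over the angular variables factorises and
\begin{align*}
\mathcal{P}_{N}(x)\ \propto\ \prod_i x_i^{L-N}\Delta(x)^{2}\ &\cdot\ \underbrace{\int_{(0,\infty)^{N}}\prod_j t_j^{M-N-L}\Delta(t)\det[f_{k}(t_{j})]\Big(\int_{U(N)}e^{-\alpha\tr(\mathrm{diag}(x)R\,\mathrm{diag}(1/t)R^{*})}\,dR\Big)dt}_{=:A(x)}\\
&\cdot\ \underbrace{\int\!\!\int e^{2\,\mathrm{Re}\,\tr(\Sigma_{Y}W^{*}\Omega U)}\,dW\,dU}_{=:B(x)},
\end{align*}
the last integral being over $U(N)$ in $W$ and over $\{U:U^{*}U=I_N\}$ in $U$.

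For $A(x)$ I would evaluate the inner $R$‑integral by the Harish‑Chandra--Itzykson--Zuber formula, $\int_{U(N)}e^{-\alpha\tr(\mathrm{diag}(x)R\,\mathrm{diag}(1/t)R^{*})}\,dR\propto\det[e^{-\alpha x_{i}/t_{j}}]/(\Delta(x)\Delta(1/t))$; since $\Delta(1/t)=(-1)^{\binom N2}\Delta(t)\prod_j t_j^{-(N-1)}$ the Vandermondes in $t$ cancel, and Andr\'eief's identity turns the remaining $t$‑integral into $\Delta(x)^{-1}\det[\int_{0}^{\infty}t^{M-L-1}e^{-\alpha x_{i}/t}f_{k}(t)\,dt]_{i,k}$, while the substitution $t\mapsto z/t$ identifies $\int_{0}^{\infty}t^{M-L-1}e^{-\alpha z/t}f_{k}(t)\,dt=z^{-(L-N)}\eta_{k}(z)$, so $A(x)\propto\Delta(x)^{-1}\big(\prod_i x_i\big)^{-(L-N)}\det[\eta_{k}(x_{i})]$. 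For $B(x)$, absorbing the SVD unitaries of $\Omega$ into $U$ and $W$ reduces it to the case $\Omega=[\,\mathrm{diag}(\delta_1,\dots,\delta_N)\ \ 0\,]$; the remaining double group integral is a rectangular Harish‑Chandra / Berezin--Karpelevich‑type integral, equal to $c_{L,N}\det[\,{}_0F_1(L-N+1;\delta_i^{2}x_j)\,]/(\Delta(\delta^{2})\Delta(x))=c_{L,N}\det[\xi_i(x_j)]/\Delta(x)$, the constant being fixed by letting $\delta\to 0$ (both sides then tend to $1$). This identity I would justify either by citing the corresponding evaluation in the chiral‑RMT / products‑of‑Ginibre literature, or directly: expand $e^{2\,\mathrm{Re}\,\tr(\cdot)}=\sum_{p\ge0}(p!)^{-2}(\tr A)^{p}(\tr A^{*})^{p}$, apply Schur‑function orthogonality $\int_{U(N)}s_{\lambda}(MW)\overline{s_{\mu}(MW)}\,dW=\delta_{\lambda\mu}\,s_{\lambda}(MM^{*})/s_{\lambda}(1^{N})$ for the $W$‑average and the analogous average over the Stiefel manifold, and recognise the resulting series as a ${}_0F_1$ of matrix argument with parameter $L$, whose determinantal reduction yields the stated form.

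Substituting the two evaluations, the powers $\prod_i x_i^{L-N}$ and $\prod_i x_i^{-(L-N)}$ and two of the three Vandermondes cancel, leaving $\mathcal{P}_{N}(x)\propto\det[\xi_i(x_j)]\det[\eta_k(x_i)]$ as claimed. The normalisation follows at once from $Z_N=\int_{(0,\infty)^N}\det[\xi_i(x_j)]\det[\eta_k(x_i)]\,dx=N!\det[\int_0^\infty\xi_i(x)\eta_k(x)\,dx]$ (Andr\'eief once more): the inner integral, computed by the substitution $x=ts$ and the Laplace transform $\int_0^\infty t^{\kappa}e^{-\alpha t}\,{}_0F_1(\kappa+1;ct)\,dt=\kappa!\,\alpha^{-\kappa-1}e^{c/\alpha}$, reproduces exactly the $Z_N$ in the statement. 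Throughout, one checks convergence of the Gaussian, Mellin and Laplace integrals, which is precisely where the hypotheses $0\le\delta_j<\alpha$ and the boundedness of $e^{\alpha t}f_k(t)$ enter. The main obstacle is the double group integral giving $B(x)$: pinning down the rectangular Harish‑Chandra / Berezin--Karpelevich evaluation with the correct Bessel index $L-N$ and the correct normalisation is the technical heart of the proof, and the cleanest self‑contained derivation of it is the only genuinely delicate point; a secondary nuisance is the measure‑theoretic bookkeeping in the first two steps (non‑injectivity of $G\mapsto GX$, coincident singular values, and the possibly negative exponent $M-L-1$).
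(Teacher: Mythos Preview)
Your proposal is correct and follows essentially the same route as the paper: pass to the joint density of $(Y,X)$, perform singular value decompositions of both, evaluate the two angular integrals via the Harish--Chandra--Itzykson--Zuber formula and the rectangular Berezin--Karpelevich/Guhr--Wettig integral (the paper cites \cite{GW96,JSV97} for exactly the evaluation you call $B(x)$), then apply Andr\'eief twice, once to integrate out $t$ and once to compute $Z_N$.

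The only organisational difference is in reaching the joint $(Y,X)$ density. The paper first reduces from $M>N$ to square $X$ by writing $X=U\binom{X_0}{O}$ and integrating out the redundant block of $GU$, and then changes variables $G\mapsto Y=GX$ with Jacobian $(\det X^*X)^{-L}$. You instead complete the square in $G$ and invoke the Gaussian pushforward $G_0\mapsto G_0X$, which in one stroke handles both the non-square case and the change to $Y$; this is slightly more streamlined and makes the cancellation of the $\tfrac1\alpha\tr(X^*X\,\Omega\Omega^*)$ term transparent, whereas the paper's decomposition makes the passage from $L\times M$ to $L\times N$ more explicit and entirely deterministic. Either way one lands on the same density $(\det X^*X)^{-L}e^{-\alpha\tr(Y(X^*X)^{-1}Y^*)+2\,\mathrm{Re}\,\tr(\Omega Y)}h(X)$, and from there the two arguments coincide.
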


Theorem \ref{coupledmultiplication} shows that the coupled product of a complex Ginibre matrix and a  bi-invariant polynomial   ensemble produces a bi-orthogonal ensemble, with two sets of ``nontrivial" functions. This affords us random matrix realizations for a class of determinantal point processes which are bi-orthogonal ensembles but not polynomial ensembles.  The following proof is inspired  by these of \cite[Theorem 3.1]{AS15} and   \cite[Theorem 2.1]{KS14}.
\begin{proof} We proceed in three steps.

\textbf{Step 1: Reduction}.
We claim that the problem   can   be reduced to the study of case $M=N$. Let us assume that $M>N$. Then any matrix $X$ of size $M\times N$ can be decomposed as
\be X=U\binom{X_0}{O},
\ee
where $U$ is an $M\times M$ unitary matrix which can be uniquely taken to be in some specific form, $X_0$ is an $N\times N$ complex matrix and $O$ is a zero matrix of size $(M-N) \times N$; cf.  Lemma 2.1 and Appendix A in \cite{Fis12}. By  the results of \cite[Sect. 2]{Fis12}, we obtain the joint distribution of $G, X_0$ and $U$   proportional to   \be  \label{threematrixpdf}
  \det(X_{0}^{*}X_{0}^{})^{M-N}  \exp\Big\{ -\alpha \tr(G^{}G^{*})+\tr\big(\Omega GU\tbinom{X_0}{O}+  \big(\Omega G\tbinom{X_0}{O}\big)^*\big)\Big\} \, h(X_0) dG dX_0 [dU], \ee
where $[dU]$ denotes the induced  measure from the Haar measure of $M\times M$ unitary group; cf.  \cite[Eq.(6)]{Fis12}.

Make a change of variables $\widehat{G}=GU$ and rewrite $\widehat{G}=(G_0\ G_1)$ with two blocks $G_0$ of size $L\times N$ and   $G_1$ of size $L\times (M-N)$, then  $GX=G_0 X_0$ and  the joint distribution of $G_0, G_1, X_0$ and $U$ can be rewritten to be proportional   to
 \begin{align}  \label{fourmatrixpdf}
     \exp\big\{ -\alpha \tr(G_{0}^{}G_{0}^{*}+G_{1}^{}G_{1}^{*})+& \tr\big(\Omega G_0X_0+(\Omega G_0X_0)^*\big)\big\} \nonumber \\
     & \times \det(X_{0}^{*}X_{0}^{})^{M-N}\, h(X_0) dG_{0} dX_0 [dU] dG_{1}. \end{align}
 Noting the invariance of $h(X)$ given in \eqref{hPE} and integrating over $G_1$ and $U$, we immediately see that the joint probability distribution of $G_0$ and $X_0$  reads 
\begin{align}  \label{reducedmatrixpdf}
     \exp\big\{ -\alpha \tr(G_{0}^{}G_{0}^{*})+  \tr\big(\Omega G_0X_0+(\Omega G_0X_0)^*\big)\big\}
      \det(X_{0}^{*}X_{0}^{})^{M-N}\, h(X_0) dG_{0} dX_0   \end{align}
up to some constant. Furthermore, both $GX$ and $G_0 X_0$ have the same singular values.

\textbf{Step 2: Joint  singular value PDF  of $X$ and $Y$}. To get the squared singular values of the product $GX$ it suffices to study  the distribution defined in \eqref{reducedmatrixpdf}. For simplicity sake,  we replace the notation $G_0$ and $X_0$ with $G$ and $X$ respectively.

Since the change of variables of $G\mapsto Y=GX$ and $X\mapsto  X$ has a Jacobian $\det(X^{*}X^{})^{-L}$  where $X$ has  the full rank  $N$(cf. \cite[Theorem 3.2]{Mathai97}), $Y$ and $X$ have a joint distribution proportional to
\begin{align}  \label{YXmatrixpdf}
     \exp\big\{ -\alpha \tr\big(Y^{*}Y^{}(X^{*}X^{})^{-1}\big)+  \tr\big(\Omega Y+(\Omega Y)^*\big)\big\}
      \det(X^{*}X^{})^{M-N-L}\, h(X) dY dX.    \end{align}

Next, let  $\Lambda_{x}=\textrm{diag}\big(x_1, \ldots, x_N\big)$ and $\Lambda_{t}=\textrm{diag}\big(t_1, \ldots, t_N\big)$, according to  the singular value decomposition,  both $Y$ and $X$ can be written as
\be Y=U \begin{pmatrix}  \sqrt{\Lambda_{x}} \\ O\end{pmatrix}  V, \qquad X=W \sqrt{\Lambda_{t}}Q,\ee
where $U$ is an $L\times N$ complex matrix with $U^* U=I_N$, all $V, W$ and $Q$ are $N\times N$ unitary matrices. Then both the Jacobians  read 
\be dY\propto \prod_{k=1}^{N} x_{k}^{L-N}\Delta(x)^{2}dU dV dx_{1} \cdots dx_{N},\label{Jacobian1}\ee
and \be dX\propto  \Delta(t)^{2}dW dQ dt_{1} \cdots dt_{N},\label{Jacobian2}\ee
see e.g. \cite[Chapt. 3]{Fo10}. Together with  \eqref{Jacobian1} and \eqref{Jacobian2},   by the invariance of the Haar measure under the change $Q \mapsto QV$,  we know    that \eqref{YXmatrixpdf} is reduced to the distribution proportional to
\begin{align}  \exp&\big\{ -\alpha \tr\big(\Lambda_{x}Q^{-1}\Lambda^{-1}_{t} Q\big)+  \tr\big(\Omega U\sqrt{\Lambda_{x}}V+(\Omega U\sqrt{\Lambda_{x}}V)^*\big)\big\} \prod_{k=1}^{N} \big(x_{k}^{L-N} t_{k}^{M-N-L}\big) \nonumber\\
& \ \times  \Delta(x)^{2} \Delta(t) \det[f_{k}(t_j)]_{j,k=1}^N dU dV dWdQ  dx_{1} \cdots dx_{N}dt_{1} \cdots dt_{N}. \end{align}

We need to use the Harish-Chandra-Itzykson-Zuber integral formula (cf. \cite{HC57} and \cite{IZ80})
\be \int_{U(N)}e^{-\alpha \tr (\Lambda_{x}Q^{-1}\Lambda^{-1}_{t} Q )}dQ = C_{N}\frac{\det[e^{-\alpha x_{j}/t_{k}}]_{j,k=1}^N}{\Delta(x)\Delta(1/t)}
\ee
and its analogue (cf. \cite{GW96} and \cite{JSV97})
\begin{multline} \int_{\{U: U^* U=I_N\}} \int_{V\in U(N)}e^{-\tr\big(\Omega U\sqrt{\Lambda_{x}}V+(\Omega U\sqrt{\Lambda_{x}}V)^*\big)}dUdV  \\ =C_{L,N}\frac{\det[_0F_1(L-N+1;x_{j} \delta^{2}_{k})]_{j,k=1}^N}{\Delta(x)\Delta(\delta^2)},
\end{multline}
where $C_N$ depends only on $N$ and $C_{L,N}$ only on $L$ and $N$. Accordingly, integrate out $U, V, W, Q$ parts and note that $\Delta(1/t) =(-1)^{N(N-1)/2}\prod_{k=1}^{N} t_{k}^{1-N} \Delta(t)$,  we   obtain the joint distribution of squared singular values for $Y$ and $X$ which is proportional to
\begin{multline}   \det[e^{-\alpha x_{j}/t_{k}}]_{j,k=1}^N  \det[_0F_1(L-N+1;x_{j} \delta^{2}_{k})]_{j,k=1}^N  \det[t_{j}^{M-N}f_{k}(t_j)]_{j,k=1}^N\\ \times   \frac{1}{\Delta(\delta^2)}  \prod_{k=1}^{N} \big(\frac{x_{k}}{t_{k}}\big)^{L-N} \frac{dt_{1}}{t_1} \cdots \frac{dt_{N}}{t_N}dx_{1} \cdots dx_{N}. \label{jointPDFXY}\end{multline}

\textbf{Step 3: Singular value PDF of $Y$}. In order to derive  the joint PDF for the squared singular values of $Y$, we need to integrate out all variables $t_1, \ldots, t_N$ in \eqref{jointPDFXY}.    This can be done with the aid of the Andr\'{e}ief integral identity (see e.g. \cite[Sect. 3.1]{DG09}) so that
\begin{multline}   \int_{0}^{\infty}\cdots \int_{0}^{\infty}\det[e^{-\alpha x_{j}/t_{k}}]_{j,k=1}^N   \det[t_{j}^{M-N}f_{k}(t_j)]_{j,k=1}^N \prod_{k=1}^{N} \big(\frac{x_{k}}{t_{k}}\big)^{L-N} \frac{dt_{1}}{t_1} \cdots \frac{dt_{N}}{t_N} \\ = N! \det[\eta_{k}(x_j)]_{j,k=1}^N, \end{multline}
where
\begin{align} \eta_{k}(z)&= \int_{0}^{\infty}  e^{-\alpha \frac{z}{t}}   t^{M-N}f_{k}(t)\big(\frac{z}{t}\big)^{L-N} \frac{dt}{t}\nonumber \\
&= \int_{0}^{\infty} t^{L-N} e^{-\alpha t} \big(\frac{z}{t}\big)^{M-N} f_{k}\big(\frac{z}{t}\big) \frac{dt}{t}. \end{align}
This gives us the requested joint PDF \eqref{eigenvaluepdf}.

To evaluate the normalization constant, we make use of   the Andr\'{e}ief   identity again as follows
\begin{align} Z_N=N! \det\Big[\int_{0}^{\infty} {_0F_1}(L-N+1;x \delta^{2}_{j})\, \eta_{k}(x) dx\Big]_{j,k=1}^N. \end{align}
Change variables   $x\mapsto xt,t\mapsto t$,  integrate term by term in the inner integral and we then obtain
\begin{align}  &\int_{0}^{\infty}  {_0F_1}(L-N+1;x \delta^{2}_{j})\, \eta_{k}(x) dx  \nonumber \\
&=  \int_{0}^{\infty}  \int_{0}^{\infty} {_0F_1}(L-N+1;x \delta^{2}_{j})\, t^{L-N} e^{-\alpha t} \big(\frac{x}{t}\big)^{M-N} f_{k}\big(\frac{x}{t}\big) \frac{dt}{t} dx \nonumber \\
&=\int_{0}^{\infty} \Big( \int_{0}^{\infty} {_0F_1}(L-N+1;xt \delta^{2}_{j}) \,  t^{L-N} e^{-\alpha t}  dt\Big) x^{M-N} f_{k}(x) dx \nonumber\\
&=(L-N)!  \alpha^{-(L-N+1)} \int_{0}^{\infty} e^{x\delta_{j}^{2}/\alpha} x^{M-N}f_{k}(x) dx,\end{align}
from which the normalization constant follows. Here in the second identity above we have applied  the Fubini's theorem, since  the assumptions on  functions $f_{k}$ imply   $|f_{k}(x/t)|\leq C e^{-\alpha x/t}$  for some constant $C$.
 \end{proof}

We can apply Theorem \ref{coupledmultiplication} to any  bi-invariant random matrix ensemble  $X$     which can be coupled together with a Ginibre matrix and has a joint singular value PDF as in \eqref{hPE}.    A few  examples   immediately follow from the above theorem.
\begin{example} \label{Vpdf} For the joint PDF    \eqref{twomatrixpdf}, suppose that   $h(X)=\exp\{-\mathrm{Tr}V(X^{*}X)\}$  where $V$ is a polynomial  with positive leading coefficient and  $\delta_1, \ldots, \delta_N$ are singular values of $\Omega$. Then the  squared singular values    of $Y=GX$ has a joint PDF on $[0,\infty)^N$
 \be   \mathcal{P}_{N}(x_1,\ldots,x_N)=\frac{1}{Z_N} \det[\xi_i(x_j))]_{i,j=1}^{N} \det[\eta_i(x_j))]_{i,j=1}^{N}, 
 \ee
 where  $\xi_{i}(z)= {_0F_1}(L-N+1;\delta_{i}^{2}z)$,
  \be \eta_{i}(z)= \int_{0}^{\infty} t^{L-N} e^{-\alpha t -V(\frac{z}{t})}  \big(\frac{z}{t}\big)^{M-N+i-1}  \frac{dt}{t}, \ee
  and the normalization constant
  \be Z_N=N! ((L-N)!)^N \alpha^{-N(L-N+1)}\det\Big[\int_{0}^{\infty} x^{M-N+k-1}   e^{-V(x)+x\delta_{j}^{2}/\alpha} dx\Big]_{j,k=1}^N. \label{Zpartition}\ee
  \end{example}

Yet another family of random  matrix ensembles
with singularities of the form
\be  h(X)=\textrm{const} \cdot  \exp\{- \alpha \mathrm{Tr}(X^{*}X^{})-\beta^{d}\mathrm{Tr}(X^{*}X^{})^{-d}\}, \qquad  \beta>0 \ \text{and} \ d\in \mathbb{N},   \nonumber \ee
where $X$ is a    complex matrix  of size $M\times N$
 was studied    in \cite{ACM15,CI10,XDZ14} and   a hard edge limiting kernel  was obtained
 in terms of the Painlev\'{e} III hierarchy  \cite{ACM15,XDZ14}.   The singular value PDF for a coupled product with that reads as follows.

 \begin{example} \label{inversepdf} With     \eqref{twomatrixpdf}, let  $h(X)=\exp\{- \alpha \mathrm{Tr}(X^{*}X^{})-\beta^{d}\mathrm{Tr}(X^{*}X^{})^{-d}\}$  where  $\beta>0$ and $d\in \mathbb{N}$,  and  let $\delta_1, \ldots, \delta_N$ be singular values of $\Omega$. Then the  squared singular values    of $Y=GX$ has a joint PDF on $[0,\infty)^N$
 \be   \mathcal{P}_{N}(x_1,\ldots,x_N)=\frac{1}{Z_N} \det[\xi_i(x_j))]_{i,j=1}^{N} \det[\eta_i(x_j))]_{i,j=1}^{N}, \nonumber
 \ee
 where  $\xi_{i}(z)= {_0F_1}(L-N+1;\delta_{i}^{2}z)$,
  \be \eta_{i}(z)= \int_{0}^{\infty} t^{L-N} e^{-\alpha(t+\frac{z}{t})-\beta^{d}(\frac{z}{t})^{-d}}  \big(\frac{z}{t}\big)^{M-N+i-1}  \frac{dt}{t}, \nonumber \ee
  and the normalization constant
  \be Z_N=N! ((L-N)!)^N \alpha^{-N(L-N+1)}\det\Big[\int_{0}^{\infty}   x^{M-N+k-1} e^{-(\alpha-\delta_{j}^{2}/\alpha)x-(\beta/x)^{d}} dx\Big]_{j,k=1}^N. \label{Zpartition-2} \nonumber
  \ee
  \end{example}

We will get back to the random matrix ensembles stated in Examples  \ref{Vpdf} and \ref{inversepdf} in a forthcoming paper,  and expect similar hard edge  transition to occur as in Theorem \ref{hardlimits}, but  a detailed
study would lead us too far.

So far, we see that Proposition \ref{twoGpdf} is just a special case of  Example  \ref{Vpdf}.
\begin{proof}[Proof of Proposition \ref{twoGpdf}] Take $V(x)=\alpha x$ in Example  \ref{Vpdf}, recall \eqref{Kfunction} and we can  rewrite
\be \eta_{i}(z)= 2 (\sqrt{z})^{L+M-2N+i-1} K_{M-L+i-1}(2\alpha \sqrt{z}). \label{proofeq1}
 %\alpha^{M-L+i-1} z^{M-N+i-1}\int_{0}^{\infty} t^{-(M-L+i-1)} e^{- t - \frac{\alpha^2 z}{t} }  \frac{dt}{t},
\ee
Accordingly,   simple calculation shows that the constant given in \eqref{Zpartition}  is reduced to
\be Z_N=N! ((L-N)!)^N \alpha^{-N(L+M)-\frac{1}{2}N(N-1)} \Delta(\delta^2)
\prod_{j=1}^{N}\Big( \Gamma(M-N+j)\big(1-\frac{\delta_{j}^{2}}{\alpha^2} \big)^{-M}\Big). \label{proofeq2} \ee

 Combine \eqref{proofeq1} and \eqref{proofeq2},  recall \eqref{0F1I} and  we   have the desired result.  \end{proof}

\subsection{Coupled multiplication with a Jacobi matrix} \label{sectpdfJ}
Given two complex matrices   $X_1$ of size $(\kappa+N)\times (\nu +N)$  and  $X_2$ of size $(\nu +N)\times N$ with $\kappa, \nu\geq 0$, and a positive semidefinite $N\times N$ matrix  $\Sigma$,   we consider  the joint PDF which is proportional to
  \begin{align} \label{matrixpdfJ}  {_{1}\tilde{F}_{1}}(\nu+\nu'+2N;\kappa+N;X_1 X_2\Sigma   X_{2} ^* X_{1} ^*) \exp\!\big\{ -\alpha \tr  X_{2}^{*} X_{2}^{} \big\} \nonumber\\ \qquad \times {\det}^{\nu'-\kappa}{\Big(I_{\nu+N}-X_{1} X_{1}^{*} \Big)} \theta{\Big(I_{\nu+N}-X_{1} X_{1}^{*} \Big)} dX_1 d X_2,
  \end{align}
  where $\Sigma <\sqrt{\alpha} I_N$ and $\nu'$ is a non-negative integer such that $\nu+\nu'\geq \kappa$.
  Here ${_{1}\tilde{F}_{1}}$  is a hypergeometric function of matrix argument (see \cite{James64,GR89} for more details)  and the Heaviside step function of  matrix argument  defined on Hermitian matrices $H$ as
  \be \theta(H)=\begin{cases} 1, & \mbox{if $H$ is positive definite}, \\
   0, & \mbox{other}.
  \end{cases}  \nonumber \ee
This is expected to be closely related with the non-central distribution in MANOVA (see e.g. \cite{A03}), which may  be derived from the joint distribution  proportional to
  \begin{align}
  \exp\!\big\{ -  \tr(Z_{1}^{}Z_{1}^{*} + Z_{2}^{}Z_{2}^{*})\Sigma_{0}^{-1}  +\tr(\Omega Z_1 X_2+  (\Omega Z_1 X_2)^*  )
  -\alpha \tr  X_{2}^{*} X_{2}^{}\big\}, 
    \end{align}
  where $Z_1$ and $Z_2$ are rectangular matrices of sizes   $(\kappa+N)\times (\nu +N)$ and $(\kappa+N)\times (\nu' +N)$, respectively.  Set $X_1=(Z_{1}^{}Z_{1}^{*} + Z_{2}^{}Z_{2}^{*})^{-1/2}Z_{1}$, then $X_1$ and $X_2$ are expected to be distributed as in \eqref{matrixpdfJ} but with $\Sigma=\Omega \Sigma_0 \Omega^{*}$; cf. \cite[Sect. 8]{James64}.

Instead of \eqref{matrixpdfJ}, with the same notations as in \eqref{twomatrixpdf} we now turn to a more general PDF proportional to
\begin{align} \label{matrixpdfJ2}  {_{1}\tilde{F}_{1}}&(\nu+\nu'+2N;\kappa+N;GX\Sigma   X ^* G ^*) \, h(X)  \nonumber\\ \qquad &\times {\det}^{\nu'-\kappa}{\Big(I_{\nu+N}-G G^{*} \Big)} \theta{\Big(I_{\nu+N}-G G^{*} \Big)} dG dX,
  \end{align}
 where  $\Sigma <\sqrt{\alpha} I_N$ and $\nu,  \kappa,  \nu'$ are non-negative integers such that $\nu+\nu'\geq \kappa$.
  Likewise, the squared singular values of $GX$ also forms a determinantal point process.

\begin{theorem}\label{coupledmultiplicationJ} With the joint PDF defined in  \eqref{matrixpdfJ2}, let $\delta_{1}^{2}/\alpha, \ldots, \delta_{N}^{2}/\alpha$ be eigenvalues $\Sigma$ with $\alpha>0$ and all $0\leq \delta_j<\alpha$. Suppose that $f_{k}(t)$ ($k=1, \ldots, N$) are continuous in $(0,\infty)$ such that  all $e^{\alpha t}f_{k}(t)$  are bounded in $[0, \infty)$, let 
\be h(X)=\frac{1}{\Delta(t)}\det[f_{k}(t_j)]_{j,k=1}^N, \quad 0<t_1, \ldots, t_N<\infty, \label{hPEJ} \nonumber
\ee
 where  $t_1, \ldots, t_N$ are    eigenvalues of $X^{*} X^{}$, then the  squared singular values    of $Y=GX$ have a joint PDF on $[0,\infty)^N$
 \be \label{eigenvaluepdfJ} \mathcal{P}_{N}(x_1,\ldots,x_N)=\frac{1}{Z_N} \det[\xi_i(x_j))]_{i,j=1}^{N} \det[\eta_i(x_j))]_{i,j=1}^{N},    \nonumber  \ee
 where  $\xi_{i}(z)= z^{\kappa} {_1F_1}(\nu+\nu'+N+1;\kappa+1;\delta_{i}^{2}z/\alpha)$ and
  \be \eta_{i}(z)= \int_{0}^{1} (1-t)^{\nu+\nu'-\kappa+N-1}    \big(\frac{z}{t}\big)^{\nu-\kappa} f_{i}\big(\frac{z}{t}\big) \frac{dt}{t}.  \nonumber\ee
The normalization constant can be evaluated by
  \be Z_N=N! \bigg(\frac{\Gamma(\kappa+1) \Gamma(\nu+\nu'-\kappa+N)}{\Gamma(\nu+\nu'+N+1)}\bigg) ^N  \det\Big[\int_{0}^{\infty} x^{\nu} e^{x\delta_{j}^{2}/\alpha} f_{k}(x) dx\Big]_{j,k=1}^N. \nonumber\ee
  \end{theorem}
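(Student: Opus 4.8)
The plan is to run the three-step argument from the proof of Theorem~\ref{coupledmultiplication} essentially line by line, replacing each Gaussian/Laguerre ingredient by its Jacobi/Beta counterpart. \emph{Step 1 (reduction to $M=N$).} Exactly as before, write $X=U\binom{X_0}{O}$ in the canonical chart, pick up the Jacobian factor ${\det}(X_0^*X_0)^{\nu}$, use $h(X)=h(X_0)$ by bi-invariance, and note that both ${_{1}\tilde{F}_{1}}(\nu+\nu'+2N;\kappa+N;GX\Sigma X^*G^*)$ and ${\det}^{\nu'-\kappa}(I-GG^*)\,\theta(I-GG^*)$ depend only on $GX=G_0X_0$ and on the block $G_0$ of $\widehat G=GU=(G_0\ G_1)$; integrating out $G_1$ and $U$ (integrability being supplied, as in the Ginibre case, by the bound on $e^{\alpha t}f_k(t)$) reduces the problem to a square $X$ of full rank.

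\emph{Step 2 (joint singular value PDF of $X$ and $Y=GX$).} Change variables $G\mapsto Y=GX$ with Jacobian ${\det}(X^*X)^{-L}$; since $GX\Sigma X^*G^*=Y\Sigma Y^*$ and $GG^*=Y(X^*X)^{-1}Y^*$, the support constraint and both matrix-special-function factors become functions of $Y$, $\Sigma$ and $X^*X$ alone. Insert the singular value decompositions $Y=U\binom{\sqrt{\Lambda_x}}{O}V$, $X=W\sqrt{\Lambda_t}Q$ with Jacobians \eqref{Jacobian1}--\eqref{Jacobian2}, and use Haar invariance to reduce the dependence on $U,V,W,Q$ to a single unitary inside each remaining integral. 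Two group integrals are then needed: (a) a splitting/Itzykson--Zuber-type identity for the hypergeometric function of matrix argument (cf.\ \cite{James64,GR89}), which evaluates the relevant unitary integral of ${_{1}\tilde{F}_{1}}(\nu+\nu'+2N;\kappa+N;\cdot)$ to a constant times $\Delta(x)^{-1}\Delta(\delta^2)^{-1}\det[\xi_i(x_j)]_{i,j=1}^N$ with $\xi_i(z)=z^{\kappa}\,{_1F_1}(\nu+\nu'+N+1;\kappa+1;\delta_i^2z/\alpha)$ --- the factor $z^{\kappa}$ and the drop of the lower parameter from $\kappa+N$ to $\kappa+1$ reflecting the $\kappa$ trivial zero eigenvalues of the matrix argument and the reduction formula for ${_{1}\tilde{F}_{1}}$; and (b) the Jacobi analogue of the Harish-Chandra--Itzykson--Zuber integral, namely the $U(N)$-integral of ${\det}(I-\Lambda_x^{1/2}R^*\Lambda_t^{-1}R\,\Lambda_x^{1/2})^{\nu'-\kappa}\theta(\cdots)$, which produces a constant times $\Delta(x)^{-1}\Delta(1/t)^{-1}$ times a determinant built from powers of $(1-x_j/t_k)_+$. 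Collecting the Vandermonde factors via $\Delta(1/t)=(-1)^{N(N-1)/2}\prod_k t_k^{1-N}\Delta(t)$ yields the joint PDF of the squared singular values of $(Y,X)$ as a triple product of determinants with an explicit prefactor, in direct analogy with \eqref{jointPDFXY}.

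\emph{Step 3 (integrating out $t$, and the normalization).} Andr\'{e}ief's identity integrates $t_1,\dots,t_N$ out, the substitution $t\mapsto z/t$ turning the integration range into $[0,1]$ and producing $N!\det[\eta_k(x_j)]_{j,k=1}^N$ with $\eta_i$ as stated; this gives the claimed joint PDF. A second application of Andr\'{e}ief reduces $Z_N$ to $N!\det\big[\int_0^\infty \xi_j(x)\eta_k(x)\,dx\big]_{j,k=1}^N$; swapping the order of integration and rescaling $x\mapsto xt$ leaves an inner integral of the form $\int_0^1 t^{a}(1-t)^{b}\,{_1F_1}(\nu+\nu'+N+1;\kappa+1;\delta_j^2xt/\alpha)\,dt$, which the Euler (Beta-type) integral for ${_1F_1}$ converts into a multiple of $\int_0^\infty x^{\nu}e^{x\delta_j^2/\alpha}f_k(x)\,dx$, and matching the accumulated Gamma factors yields the displayed $Z_N$. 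The genuine obstacle is Step 2: identifying the two group integrals together with their exact constants --- in particular proving the reduction of the matrix-argument ${_{1}\tilde{F}_{1}}$ to the scalar ${_1F_1}$ with prefactor $z^{\kappa}$ and lower parameter $\kappa+1$, and pinning down the Jacobi HCIZ integral --- after which the Jacobians, the Vandermonde cancellations and the two Andr\'{e}ief reductions are routine adaptations of the Ginibre argument.
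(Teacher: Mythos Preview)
Your overall three-step strategy matches the paper's proof, and Steps~2 and~3 are essentially right in spirit. However, there is a genuine error in Step~1 that propagates through the rest.

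You assert that ${\det}^{\nu'-\kappa}(I-GG^*)\,\theta(I-GG^*)$ depends only on the block $G_0$ of $\widehat G=GU=(G_0\ G_1)$. This is false: since $GG^*=\widehat G\widehat G^*=G_0G_0^*+G_1G_1^*$, the constraint and the determinant involve $G_1$ nontrivially. Unlike the Ginibre case, where $G_1$ decouples automatically as a free Gaussian, here one must perform the change of variables $G_1\mapsto (I-G_0G_0^*)^{1/2}G_1$. This factorizes $I-GG^*=(I-G_0G_0^*)^{1/2}(I-G_1G_1^*)(I-G_0G_0^*)^{1/2}$ and contributes a Jacobian $\det(I-G_0G_0^*)^{\nu}$ (since $G_1$ has $\nu$ columns). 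After integrating out the now-decoupled $G_1$, the surviving weight on $G_0$ is ${\det}^{\nu+\nu'-\kappa}(I-G_0G_0^*)\,\theta(I-G_0G_0^*)$, not ${\det}^{\nu'-\kappa}(\cdot)$.

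This exponent shift is exactly what produces the correct power $(1-t)^{\nu+\nu'-\kappa+N-1}$ in $\eta_i$ and in the Jacobi HCIZ-type integral you quote in Step~2(b); with your exponent $\nu'-\kappa$ you would obtain $(1-t)^{\nu'-\kappa+N-1}$ instead, contradicting the stated formula. Once this decoupling step is inserted, the paper proceeds just as you outline: the unitary integral of ${_{1}\tilde{F}_{1}}$ (via the determinantal reduction in \cite{GR89,Liu14}) yields the $\xi_i$, the group integral of ${\det}^{\nu+\nu'-\kappa}(I-\Lambda_t^{-1}Q\Lambda_xQ^{-1})\,\theta(\cdot)$ (via \cite[Theorem~2.3]{KKS15}) yields the determinant of $(1-x_j/t_k)_+^{\nu+\nu'-\kappa+N-1}$, and the two Andr\'{e}ief reductions together with the Euler integral $\int_0^1 t^{\kappa}(1-t)^{a-\kappa-1}{_1F_1}(a+1;\kappa+1;zt)\,dt=\tfrac{\Gamma(\kappa+1)\Gamma(a-\kappa)}{\Gamma(a+1)}e^{z}$ finish the computation of $Z_N$.
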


A corollary    immediately follows from the above theorem.
\begin{corollary} \label{pdfJ} For the joint PDF    \eqref{matrixpdfJ2}, let  $h(X)=\exp\{-\alpha\mathrm{Tr}(X^{*}X) \}$. With the same notations as in Theorem \ref{coupledmultiplicationJ}, then the  squared singular values    of $Y=GX$ have a joint PDF on $[0,\infty)^N$
 \be   \mathcal{P}_{N}(x_1,\ldots,x_N)=\frac{1}{Z_N} \det[\xi_i(x_j))]_{i,j=1}^{N} \det[\eta_i(x_j))]_{i,j=1}^{N}, \label{BiEJ}
 \ee
 where  $\xi_{i}(z)= z^{\kappa} {_1F_1}(\nu+\nu'+N+1;\kappa+1;\delta_{i}^{2}z/\alpha)$,
  \be \eta_{i}(z)= \int_{0}^{1} (1-t)^{\nu+\nu'-\kappa+N-1}    \big(\frac{z}{t}\big)^{\nu-\kappa+i-1} e^{-\alpha\frac{z}{t}} \frac{dt}{t}, \ee
with  $0\leq \delta_{j}<\alpha$ for $j=1, \ldots, N$ and the normalization constant
  \begin{multline} Z_N=N! \alpha^{-\frac{1}{2}N(N-1)}\Delta(\delta^2)   \\ \times \bigg(\frac{\Gamma(\kappa+1) \Gamma(\nu+\nu'-\kappa+N)}{\Gamma(\nu+\nu'+N+1)}\bigg) ^N  \prod_{j=1}^{N}\bigg(\Gamma(\nu+j)   \Big(\alpha-  \frac{\delta_{j}^{2}}{\alpha} \Big)^{-\nu-N}\bigg).\end{multline}
  \end{corollary}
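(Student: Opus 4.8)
The plan is to derive Corollary~\ref{pdfJ} as the special case of Theorem~\ref{coupledmultiplicationJ} corresponding to $h(X)=\exp\{-\alpha\,\mathrm{Tr}(X^{*}X)\}$. First I would put this $h$ in the determinantal form \eqref{hPEJ} required by the theorem: taking $f_{k}(t)=t^{k-1}e^{-\alpha t}$ for $k=1,\ldots,N$, one pulls the common factor $\prod_{j}e^{-\alpha t_{j}}$ out of $\det[f_{k}(t_{j})]_{j,k=1}^{N}$ and recognizes the remaining determinant $\det[t_{j}^{k-1}]_{j,k=1}^{N}=\Delta(t)$, so that $\frac{1}{\Delta(t)}\det[f_{k}(t_{j})]_{j,k=1}^{N}=\prod_{j}e^{-\alpha t_{j}}=e^{-\alpha\,\mathrm{Tr}(X^{*}X)}$. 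Strictly speaking $e^{\alpha t}f_{k}(t)=t^{k-1}$ is only polynomially, not boundedly, bounded; this is harmless, since one still has $|f_{k}(z/t)|\le C_{\varepsilon}e^{-(\alpha-\varepsilon)z/t}$ for every $\varepsilon>0$, which suffices to legitimize the Fubini step in the proof of Theorem~\ref{coupledmultiplicationJ}, and alternatively one may argue by analytic continuation in $\alpha$.

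Next I would substitute $f_{i}(s)=s^{i-1}e^{-\alpha s}$ into the expression for $\eta_{i}$ furnished by Theorem~\ref{coupledmultiplicationJ}, which immediately gives
\[
\eta_{i}(z)=\int_{0}^{1}(1-t)^{\nu+\nu'-\kappa+N-1}\Big(\frac{z}{t}\Big)^{\nu-\kappa+i-1}e^{-\alpha z/t}\,\frac{dt}{t},
\]
the stated $\eta_{i}$, while the functions $\xi_{i}$ are carried over unchanged. Convergence of this integral at $t\to1$ uses $\nu+\nu'-\kappa+N>0$ (which holds since $\nu+\nu'\geq\kappa$ and $N\geq1$), and at $t\to0$ the factor $e^{-\alpha z/t}$ decays faster than any power of $z/t$.

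For the normalization I would feed $f_{k}(x)=x^{k-1}e^{-\alpha x}$ into the determinant $\det\big[\int_{0}^{\infty}x^{\nu}e^{x\delta_{j}^{2}/\alpha}f_{k}(x)\,dx\big]_{j,k=1}^{N}$ given by the theorem. Each entry becomes a Gamma integral, $\int_{0}^{\infty}x^{\nu+k-1}e^{-(\alpha-\delta_{j}^{2}/\alpha)x}\,dx=\Gamma(\nu+k)\,(\alpha-\delta_{j}^{2}/\alpha)^{-\nu-k}$, which is finite because $0\leq\delta_{j}<\alpha$ forces $\alpha-\delta_{j}^{2}/\alpha>0$. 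Factoring $\Gamma(\nu+k)$ out of the $k$-th column and $(\alpha-\delta_{j}^{2}/\alpha)^{-\nu-1}$ out of the $j$-th row leaves the Vandermonde determinant $\det\big[(\alpha-\delta_{j}^{2}/\alpha)^{-(k-1)}\big]_{j,k=1}^{N}=\prod_{1\leq i<j\leq N}\big((\alpha-\delta_{i}^{2}/\alpha)^{-1}-(\alpha-\delta_{j}^{2}/\alpha)^{-1}\big)$.

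Finally I would simplify this product: writing each difference as $\dfrac{(\delta_{j}^{2}-\delta_{i}^{2})/\alpha}{(\alpha-\delta_{i}^{2}/\alpha)(\alpha-\delta_{j}^{2}/\alpha)}$ and collecting the $N(N-1)/2$ numerator factors of $\alpha^{-1}$ and the resulting $\Delta(\delta^{2})$, together with the denominator $\prod_{j=1}^{N}(\alpha-\delta_{j}^{2}/\alpha)^{-(N-1)}$, one obtains that the full determinant equals $\alpha^{-N(N-1)/2}\,\Delta(\delta^{2})\prod_{j=1}^{N}\Gamma(\nu+j)(\alpha-\delta_{j}^{2}/\alpha)^{-\nu-N}$; multiplying by the prefactor $N!\,\big(\Gamma(\kappa+1)\Gamma(\nu+\nu'-\kappa+N)/\Gamma(\nu+\nu'+N+1)\big)^{N}$ from the theorem reproduces exactly the asserted $Z_{N}$. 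No step is a serious obstacle; the only points requiring mild care are the bookkeeping of powers in the Vandermonde reduction and the verification (or circumvention, as indicated above) of the boundedness hypothesis on the $f_{k}$.
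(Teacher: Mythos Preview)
Your proposal is correct and follows essentially the same route as the paper, which simply presents Corollary~\ref{pdfJ} as an immediate specialization of Theorem~\ref{coupledmultiplicationJ} without spelling out the details. Your observation that the choice $f_{k}(t)=t^{k-1}e^{-\alpha t}$ does not literally satisfy the boundedness hypothesis $e^{\alpha t}f_{k}(t)\in L^{\infty}$ is a good catch that the paper glosses over; your suggested remedies (replacing the bound by $|f_{k}(z/t)|\le C_{\varepsilon}e^{-(\alpha-\varepsilon)z/t}$ to justify Fubini, or analytic continuation in $\alpha$) are both adequate.
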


\begin{proof}[Proof of Theorem \ref{coupledmultiplicationJ}]   Since we can proceed almost in the same steps as in Theorem \ref{coupledmultiplication}, we just point out some different places and leave the details to the reader.

\textbf{Step 1: Reduction}.
Let us assume that $\nu>0$. Since  any matrix $X$ of size $M\times N$ can be decomposed as
\be X=U\binom{X_0}{O}, \nonumber
\ee
where $U$ is a  $(\nu+N)\times (\nu+N)$ unitary matrix  and $X_0$ is an $N\times N$ complex matrix, setting $GU=(G_0\ G_1)$ we arrive at the joint distribution of $G_0, G_1,  X_0$ and $U$   proportional to
\begin{multline*}  %\label{threematrixpdfJ}
{_{1}\tilde{F}_{1}}(\nu+\nu'+2N;\kappa+N;G_{0}X_{0}\Sigma   X_{0}^{*} G_{0}^{*})  \, h(X_0) \, {{\det}^{\nu}(X_{0}^{*}X_{0}^{})}   \\   \times  {\det}^{\nu'-\kappa}{\Big(I-G_{0} G_{0}^{*}-G_{1} G_{1}^{*} \Big)}
 \theta{\Big(I-G_{0} G_{0}^{*}-G_{1} G_{1}^{*} \Big)}
 \,   dG_{0} dG_{1} dX_{0} [dU]. 
\end{multline*}

Make a change of variables $G_1\mapsto \Big(I-G_{0} G_{0}^{*}\Big)^{1/2} G_1$ and    integrate over $G_1$ and $U$, we immediately see that the joint probability distribution of $G_0$ and $X_0$  reads 
\begin{multline}  \label{reducedmatrixpdfJ}
{_{1}\tilde{F}_{1}}(\nu+\nu'+2N;\kappa+N;G_{0}X_{0}\Sigma   X_{0}^{*} G_{0}^{*})  \, h(X_0) \,{{\det}^{\nu}(X_{0}^{*}X_{0}^{})}   \\   \times  {\det}^{\nu'+\nu-\kappa}{\Big(I-G_{0} G_{0}^{*} \Big)}\,
 \theta{\Big(I-G_{0} G_{0}^{*}  \Big)}
 \,   dG_{0}   dX_{0}.
\end{multline}
up to some constant. Furthermore, both $GX$ and $G_0 X_0$ have the same singular values, which shows that    we  only need to  focus on the joint distribution  \eqref{reducedmatrixpdfJ}.

\textbf{Step 2: Joint  singular value PDF  of $X$ and $Y$}. For convenience,    we  next  replace the notation $G_0$ and $X_0$ with $G$ and $X$ respectively in \eqref{reducedmatrixpdfJ}.
Since the change of variables of $G\mapsto Y=GX$ and $X\mapsto  X$ has a Jacobian $\det(X^{*}X^{})^{-L}$ (cf. \cite[Theorem 3.2]{Mathai97}), $Y$ and $X$ have a joint distribution proportional to
\begin{multline}  \label{YXmatrixpdfJ}
{_{1}\tilde{F}_{1}}(\nu+\nu'+2N;\kappa+N; Y\Sigma   Y^{*})  \, h(X)\, {{\det}^{\nu-\kappa-N}(X^{*}X^{})}   \\   \times  {\det}^{\nu'+\nu-\kappa}{\Big(I-(X^{*}X)^{-1}Y^{*} Y\Big)}  \,  \theta{\Big(X^{*}X-Y^{*} Y  \Big)} dY   dX.
\end{multline}

Next, let  $\Lambda_{x}=\textrm{diag}\big(x_1, \ldots, x_N\big)$, according to  the singular value decomposition,  write
\be Y=U \begin{pmatrix}  \sqrt{\Lambda_{x}} \\ O\end{pmatrix}  V, \qquad X=W \sqrt{\Lambda_{t}}Q, \nonumber \ee
 note the fact ${_{1}\tilde{F}_{1}}(\cdot;\cdot; Y\Sigma   Y^{*})={_{1}\tilde{F}_{1}}(\cdot;\cdot;  \Sigma   Y^{*} Y)$  and change $Q \mapsto QV$,  due to the invariance of the Haar measure  we know    that \eqref{YXmatrixpdfJ} is reduced to the distribution proportional to
\begin{align} &{_{1}\tilde{F}_{1}}(\cdot;\cdot;  \Sigma V^{-1}  \Lambda_{x} V) \,  {\det}^{\nu'+\nu-\kappa}{\Big(I-\Lambda^{-1}_{t} Q\Lambda_{x}Q^{-1}\Big)} \,  \theta{\Big(\Lambda_{t}- Q\Lambda_{x}Q^{-1}\Big)}    \prod_{k=1}^{N} \big(x_{k}^{\kappa} t_{k}^{\nu-\kappa-N}\big) \nonumber\\
& \ \times  \Delta(x)^{2} \Delta(t) \det[f_{k}(t_j)]_{j,k=1}^N dU dV dWdQ  dx_{1} \cdots dx_{N}dt_{1} \cdots dt_{N}. \nonumber\end{align}

  We need to use the following two    integral formulas  over the unitary group
\begin{multline}   \int_{ U(N)}{_{1}\tilde{F}_{1}}(\nu+\nu'+2N;\kappa+N;  \Sigma V^{-1}  \Lambda_{x} V) dV \propto \\   \frac{1}{\Delta(x)\Delta(\delta^2)} \det\!\left[_1F_1(\nu+\nu'+N+1;\kappa+1;x_{j} \delta^{2}_{k}/\alpha)\right]_{j,k=1}^N, 
\end{multline}
where the ordinary hypergeometric function appears inside the determinant,  
and
\begin{multline}   \int_{U(N)}    {\det}^{\nu'+\nu-\kappa}{\Big(I-\Lambda^{-1}_{t} Q\Lambda_{x}Q^{-1}\Big)} \, \theta{\Big(\Lambda_{t}- Q\Lambda_{x}Q^{-1}\Big)}  dQ   \propto \\   \frac{1}{\Delta(x)\Delta(1/t)} \det\!\Big[ \Big(1-\frac{x_{j}}{ t_{k}} \Big)^{\nu'+\nu-\kappa+N-1}_{+}\Big]_{j,k=1}^N,
\end{multline}
where $x_{+}=\max\{0, x\}$; see  e.g.   \cite[Sect. 4]{GR89} or  \cite[Sect. 2]{Liu14} for the first formula, and  \cite[Theorem 2.3]{KKS15} for the later.
  Accordingly, integrate out $U, V, W, Q$ parts and note that $\Delta(1/t) =(-1)^{N(N-1)/2}\prod_{k=1}^{N} t_{k}^{1-N} \Delta(t)$,  we  thus arrive at the joint distribution of squared singular values for $Y$ and $X$   proportional to
\begin{multline}     \det\!\left[x_{j}^{\kappa} {_1F_1(\nu+\nu'+N+1;\kappa+1; x_{j} \delta^{2}_{k}/\alpha)}\right]_{j,k=1}^N
 \det[t_{j}^{\nu-\kappa-1}f_{k}(t_j)]_{j,k=1}^N
 \\ \times     \det\!\Big[ \Big(1-\frac{x_{j}}{ t_{k}} \Big)^{\nu'+\nu-\kappa+N-1}_{+}\Big]_{j,k=1}^N   .
 \label{jointPDFXYJ}\end{multline}

\textbf{Step 3: Singular value PDF of $Y$}. Integrating out all variables $t_1, \ldots, t_N$ in \eqref{jointPDFXYJ} and  using the Andr\'{e}ief integral identity,
 we have  the requested joint PDF   after some simple manipulations. To evaluate the normalization constant, we make use of   the Andr\'{e}ief   identity again to obtain  
\begin{align} Z_N=N! \det\Big[\int_{0}^{\infty}
x^{\kappa} {_1F_1}(\nu+\nu'+N+1;\kappa+1;\delta_{j}^{2}x/\alpha)\, \eta_{k}(x) dx\Big]_{j,k=1}^N. \nonumber \end{align}
Change variables   $x\mapsto xt,t\mapsto t$,  integrate term by term in the inner integral and we then  get for $a:=\nu+\nu'+N$
\begin{align}  &\int_{0}^{\infty}  x^{\kappa} {_1F_1}(\nu+\nu'+N+1;\kappa+1;\delta_{j}^{2}x/\alpha) \, \eta_{k}(x) dx  \nonumber \\
&=  \int_{0}^{\infty}  \int_{0}^{1}  x^{\kappa} {_1F_1}(a+1;\kappa+1;\delta_{j}^{2}x/\alpha)\, (1-t)^{a-\kappa-1}  \big(\frac{x}{t}\big)^{\nu-\kappa} f_{k}\big(\frac{x}{t}\big) \frac{dt}{t} dx \nonumber \\
&=\int_{0}^{\infty} \Big( \int_{0}^{1}    t^{\kappa}(1-t)^{a-\kappa-1}  {_1F_1}(a+1;\kappa+1;\delta_{j}^{2}xt/\alpha) dt\Big) x^{\nu} f_{k}(x) dx \nonumber\\
&=\frac{\Gamma(\kappa+1) \Gamma(\nu+\nu'-\kappa+N)}{\Gamma(\nu+\nu'+N+1)}  \int_{0}^{\infty} e^{x\delta_{j}^{2}/\alpha} x^{\nu}f_{k}(x) dx, \label{calculationJ}\end{align}
from which the desired normalization constant follows.  Here in the second identity above we have applied  the Fubini's theorem, since  the assumptions on  functions $f_{k}$ imply   $|f_{k}(x/t)|\leq C e^{-\alpha x/t}$  for some constant $C$.
 \end{proof}

\section{Double integrals  for correlation kernels} \label{sectkernel}

As mentioned earlier, the joint eigenvalue density \eqref{eigenvaluepdf} is an example of biorthogonal ensembles in Borodin's sense   \cite{Bo98}
  \be \mathcal{Q}_{N}(x_1,\ldots,x_N)=\frac{1}{Z_N} \det[\eta_i(x_j)]_{i,j=1}^{N}\det[\xi_i(x_j)]_{i,j=1}^{N}, \qquad  x_1, \ldots, x_N \in I,\label{bi-ensemblespdf}\ee
where $I$ is a union of finite intervals of $\mathbb{R}$.  The significance of the structure  \eqref{bi-ensemblespdf}  is that there exists a systematic way to compute the corresponding $k$-point correlation functions defined by
\be \rho_{k}(x_1,\dots,x_k)=\frac{N!}{(N-k)!}\int \cdots \int \mathcal{Q}_{N}(x_1,\ldots,x_N)\, d x_{k+1} \cdots d x_{N}, \nonumber \ee see e.g. \cite[Eq.(5.1)]{Fo10}.
The following proposition due to Borodin provides a solution to derive the correlation kernel which is of vital  importance in the study of determinantal point processes.
\begin{prop}[{\cite[Proposition~2.2]{Bo98}}] \label{PB}
Let $g_{i,j} :=  \int  \eta_i(x) \xi_j(x) \, dx$, suppose that  $[g_{i,j}]_{i,j=1}^{n}$ be invertible for each
$n=1,2,\dots$. Defining $c_{i,j}$ by
\begin{equation}\label{7.eb1'}
\big([c_{i,j}]_{i,j=1}^{N}\big)^t =
\big ( [g_{i,j}]_{i,j=1}^{N} \big )^{-1},
\end{equation}
and  setting \begin{equation}\label{7.eb2'}
K_N(x,y) =  \sum_{i,j=1}^N c_{i,j} \eta_i(x) \xi_j(y),
\end{equation}
  we then have
\begin{equation}
\rho_{k}(x_1,\dots,x_k) =
\det [ K_N(x_j,x_l) ]_{j,l=1}^{k}. \nonumber
\end{equation}
\end{prop}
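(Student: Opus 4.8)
The plan is to reproduce the standard argument for determinantal point processes in three moves: normalize the density, biorthogonalize the two families, and then run the Dyson--Mehta integration recursion. First I would note that, since $\mathcal{Q}_N$ integrates to $1$, the Andr\'{e}ief integral identity forces
\[
Z_N=\int\cdots\int\det[\eta_i(x_j)]_{i,j=1}^N\det[\xi_i(x_j)]_{i,j=1}^N\,dx_1\cdots dx_N=N!\,\det[g_{i,j}]_{i,j=1}^N,
\]
which is nonzero by hypothesis.

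Next, using that $G:=[g_{i,j}]_{i,j=1}^N$ is invertible, I would leave $\eta_i$ unchanged and replace $\xi_j$ by $\widetilde{\xi}_j:=\sum_{l}(G^{-1})_{l,j}\,\xi_l$, so that $\int\eta_i\,\widetilde{\xi}_j\,dx=\sum_l g_{i,l}(G^{-1})_{l,j}=\delta_{i,j}$. This linear substitution multiplies $\det[\xi_i(x_j)]$ by $\det(G^{-1})$ and leaves $\det[\eta_i(x_j)]$ untouched, so together with the value of $Z_N$ the density becomes $\frac{1}{N!}\det[\eta_i(x_j)]\det[\widetilde{\xi}_i(x_j)]$; moreover the kernel is unchanged, since by the definition $c_{i,l}=(G^{-1})_{l,i}$ one has $\sum_i\eta_i(x)\widetilde{\xi}_i(y)=\sum_{i,l}(G^{-1})_{l,i}\eta_i(x)\xi_l(y)=\sum_{i,l}c_{i,l}\,\eta_i(x)\xi_l(y)=K_N(x,y)$. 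Hence it suffices to treat the biorthonormal case $\int\eta_i\xi_j\,dx=\delta_{i,j}$, $Z_N=N!$, with $K_N(x,y)=\sum_{i=1}^N\eta_i(x)\xi_i(y)$.

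In that case Cauchy--Binet identifies the density with $\frac{1}{N!}\det[K_N(x_i,x_j)]_{i,j=1}^N$, and biorthonormality yields the reproducing relations
\[
\int K_N(x,y)K_N(y,z)\,dy=K_N(x,z),\qquad \int K_N(x,x)\,dx=N.
\]
From here I would invoke the classical integration lemma: for any $f$ with $\int f(x,y)f(y,z)\,dy=f(x,z)$ and $q:=\int f(x,x)\,dx$, integrating $x_n$ out of the permutation expansion of $\det[f(x_i,x_j)]_{i,j=1}^n$ and collapsing the two factors that contain $x_n$ through $\int f(x_i,y)f(y,x_j)\,dy=f(x_i,x_j)$ gives $\int\det[f(x_i,x_j)]_{i,j=1}^n\,dx_n=(q-n+1)\det[f(x_i,x_j)]_{i,j=1}^{n-1}$. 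Applying this with $q=N$ to integrate out $x_N,x_{N-1},\ldots,x_{k+1}$ in turn, the prefactors telescope to $\prod_{n=k+1}^N(N-n+1)=(N-k)!$, whence
\[
\rho_k(x_1,\ldots,x_k)=\frac{N!}{(N-k)!}\cdot\frac{1}{N!}\cdot(N-k)!\,\det[K_N(x_i,x_j)]_{i,j=1}^k=\det[K_N(x_i,x_j)]_{i,j=1}^k.
\]

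The one genuinely delicate step is the integration lemma: one must keep track of the signs in the permutation sum and verify that the terms with a repeated index among $x_1,\ldots,x_{n-1}$ recombine, via the reproducing property, into the $(n-1)$-point minors with the correct total multiplicity $-(n-1)$, the diagonal terms supplying the $+q$; everything else is bookkeeping. The only other point needing a word is integrability, namely that Fubini applies as the variables are integrated out; this is immediate in the cases of interest, since the standing hypothesis that $e^{\alpha t}f_k(t)$ is bounded makes the $\eta_i$ decay and $K_N$ integrable on the relevant products --- the one place where the concrete construction feeds into an otherwise purely algebraic argument.
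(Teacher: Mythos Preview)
The paper does not actually prove this proposition: it is quoted verbatim from Borodin \cite[Proposition~2.2]{Bo98} and then immediately applied to prove Theorem~\ref{kernelpdf}. So there is no ``paper's own proof'' to compare against.

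Your argument is correct and is essentially the standard proof of this result (as in Borodin's original paper or in textbook treatments such as \cite[Chapter~5]{Fo10}): compute $Z_N$ via Andr\'{e}ief, biorthogonalize by the linear change $\widetilde{\xi}_j=\sum_l(G^{-1})_{l,j}\xi_l$, check that this leaves $K_N$ unchanged because $c_{i,l}=(G^{-1})_{l,i}$, rewrite the density as $\frac{1}{N!}\det[K_N(x_i,x_j)]$ via Cauchy--Binet, and then run the Dyson--Mehta recursion using the reproducing property. The index bookkeeping in your biorthogonalization step is right, and the integration lemma you sketch is the standard one. Your closing remark about integrability is a nice touch but is not strictly part of the proposition as stated; the proposition is a purely algebraic/combinatorial identity once the Gram matrix and its principal minors are assumed invertible.
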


Next, we first use Proposition \ref{PB} to complete the proof of Theorem   \ref{kernelpdf}.% cf. \cite{DF08}, \cite{FL15} or  \cite{BBP}.
\begin{proof}[Proof of Theorem \ref{kernelpdf}] Starting with the eigenvalue PDF \eqref{eigenvaluepdfGinibre}, with   \eqref{bi-ensemblespdf} in mind  we set
\be \eta_i(x)= x^{\frac{\nu+i-1}{2}} K_{\nu-\kappa+i-1}(2\alpha\sqrt{x}),  \qquad \xi_i(x)=I_{\kappa}(2\delta_i\sqrt{x}). \label{explicitform} \ee
In order to  calculate the integral $g_{i,j}$ as  presented in Proposition \ref{PB}, we   make  use of the integral formula involving Bessel functions as in \cite{AS15},
 \be    \int_{0}^{\infty} t^{\mu+\nu+1} K_{\mu}(a t) I_{\nu}(b t)  dt=\frac{2^{\mu+\nu}b^{\nu}\Gamma(\mu+\nu+1)}{ a^{\mu+2\nu+2}}  (1-\frac{b^2}{a^2})^{-\mu-\nu-1}, \label{besselintegral} \ee
which  can be derived by applying Euler's transformation for Gaussian hypergeometric functions $_{2}F_{1}$ in    \cite[6.576.5]{GR07} and then by taking a special case (noting that  we have  in essence given a  direct derivation for the formula \eqref{besselintegral} in the proof of Proposition \ref{twoGpdf}; cf.  \eqref{proofeq2}). Here $a>b>0$, and $\mu+\nu+1>0$.  Then by \eqref{besselintegral} we have
\begin{align}
g_{i,j}&=\int_0^\infty x^{\frac{\nu+i-1}{2}} K_{\nu-\kappa+i-1}(2\alpha\sqrt{x})\, I_{\kappa}(2\delta_j\sqrt{x}) \, dx \nonumber\\
&= \Big(1-\frac{\delta_{j}^2}{\alpha^2}\Big)^{-\nu-N} \frac{\Gamma(\nu+i)\delta_{j}^{\kappa}}{2\alpha^{\nu+\kappa+i+1}} \Big(1-\frac{\delta_{j}^2}{\alpha^2}\Big)^{N-i}.
\end{align}

According to Proposition \ref{PB}, with $G=[g_{i,j}]_{i,j=1}^N$, let $C=(G^{-1})^{t}$, the entries $c_{i,j}$ of $C$ then satisfy
\be \label{bv}  \sum_{i=1}^{N}   \Big(1-\frac{\delta_{k}^2}{\alpha^2}\Big)^{-\nu-N} \frac{\Gamma(\nu+i)\delta_{k}^{\kappa}}{2\alpha^{\nu+\kappa+i+1}} \Big(1-\frac{\delta_{k}^2}{\alpha^2}\Big)^{N-i}  \  c_{i,j}=\delta_{j,k}, \nonumber\ee
that is,
\be \label{bv2}  \sum_{i=1}^{N}    \frac{\Gamma(\nu+i)}{2\alpha^{\nu+\kappa+i+1}} \Big(1-\frac{\delta_{k}^2}{\alpha^2}\Big)^{N-i}  \  c_{i,j}=\delta_{k}^{-\kappa}\Big(1-\frac{\delta_{k}^2}{\alpha^2}\Big)^{\nu+N}\delta_{j,k}.\ee
Without loss of generality, we assume that $\delta_1, \ldots, \delta_N$ are pairwise distinct; otherwise, the requested result follows from taking proper limit and using of L'Hospital's rule. In this case, the above equations imply
 \be \sum_{i=1}^{N}   \frac{\Gamma(\nu+i)}{2\alpha^{\nu+\kappa+i+1}}  \, u^{N-i}  \,  c_{i,j}  =   \delta_{j}^{-\kappa} \Big(1-\frac{\delta_{j}^2}{\alpha^2}\Big)^{\nu+N} \prod_{l=1,l\neq j}^{N} \frac{u-\big(1-\frac{\delta_{l}^2}{\alpha^2}\big)}{\big(1-\frac{\delta_{j}^2}{\alpha^2}\big)-\big(1-\frac{\delta_{l}^2}{\alpha^2}\big)}, \label{csumidentity}\ee
as can be verified by noting that both sides are polynomials of degree $N-1$ in $u$ and take the same values at $N$ different points since  \eqref{bv2}  holds true.

Using this implicit formula for $\{c_{i,j}\}$ we are ready to   show that  \eqref{7.eb2'}  implies the double contour integral
formula  \eqref{kernelCD}. Keep  \eqref{Kfunction} in mind and also  note that for a positive integer $l$ (cf. Hankel's formula for the reciprocal gamma function)
\be z^{l-1}=\frac{\Gamma(l)}{ 2\pi i }\int_{\mathcal{C}_{\mathrm{out}}} u^{-l} e^{zu}du, \label{inversegamma}\ee 
we have from  \eqref{7.eb2'}  that
  \begin{align} K_N(x,y) &=  \sum_{j=1}^{N}   \xi_j(y)   \sum_{i=1}^{N} \frac{1}{2}x^{\frac{\kappa}{2}}   \int_{0}^{\infty} \frac{dt}{t} \, \big(\frac{\alpha x}{t}\big)^{i+\nu-1-\kappa} e^{-t-\frac{\alpha^2 x}{t}}   \, c_{i,j}\nonumber \\
     &=  \sum_{j=1}^{N}   \xi_j(y)    \frac{1}{2}x^{\frac{\kappa}{2}}  \int_{0}^{\infty} \frac{dt}{t} \big(\frac{x}{t}\big)^{-\kappa} e^{-t-\frac{\alpha^2 x}{t}} \nonumber \\
     & \quad \times \frac{2\alpha^2}{2\pi i}  \sum_{i=1}^{N} \int_{\mathcal{C}_{\mathrm{out}}}  du  \, e^{\frac{\alpha^2 x}{t}u}   \frac{\Gamma(\nu+i)}{2\alpha^{\nu+\kappa+i+1}} u^{-\nu-i} \, c_{i,j}, \label{utfunction}
   \end{align}
   where   the simple closed contour  $\mathcal{C}_{\mathrm{out}}$ is chosen such that it  doesn't   depend on any parameters $\alpha, \delta_1, \ldots, \delta_N$ 
   and  $\mathrm{Re}(z)<1$ for $z\in \mathcal{C}_{\mathrm{out}}$. Then for  $x>0$  the integrand with variables $u,t $ on the RHS of the second identity of   \eqref{utfunction} permits us to exchange the order of integration. Combine   the identity  \eqref{csumidentity} and  we thus get  
    \begin{align} K_N(x,y) &=  \frac{2\alpha^2}{2\pi i}\sum_{j=1}^{N}   \xi_j(y)   \delta_{j}^{-\kappa} \Big(1-\frac{\delta_{j}^2}{\alpha^2}\Big)^{\nu+N}  \int_{\mathcal{C}_{\mathrm{out}}}  du  \,    u^{-\nu-N} \nonumber \\
     & \quad \times  \prod_{l=1,l\neq j}^{N} \frac{u-\big(1-\frac{\delta_{l}^2}{\alpha^2}\big)}{\big(1-\frac{\delta_{j}^2}{\alpha^2}\big)-\big(1-\frac{\delta_{l}^2}{\alpha^2}\big)}
     \, \frac{1}{2}x^{-\frac{\kappa}{2}}  \int_{0}^{\infty} \frac{dt}{t} t^{\kappa} e^{-t-\frac{\alpha^2 x}{t}(1-u)}.  \label{sumj}
  \end{align}

  Finally,   recall \eqref{Kfunction} and \eqref{explicitform}, we rewrite the  summation in \eqref{sumj}   as
\begin{align}
  K_N(x,y)  = \frac{2\alpha^2}{2\pi i} \sum_{j=1}^{N} &  I_{\kappa}(2\delta_j\sqrt{ y})   \delta_{j}^{-\kappa} \Big(1-\frac{\delta_{j}^2}{\alpha^2}\Big)^{\nu+N}        \int_{\mathcal{C}_{\mathrm{out}}}  du   \, u^{-\nu-N}  (\alpha\sqrt{1-u})^{\kappa} \nonumber \\
    & \times   K_{-\kappa}(2\alpha\sqrt{(1-u)x}) \prod_{l=1,l\neq j}^{N} \frac{u-\big(1-\frac{\delta_{l}^2}{\alpha^2}\big)}{\big(1-\frac{\delta_{j}^2}{\alpha^2}\big)-\big(1-\frac{\delta_{l}^2}{\alpha^2}\big)}.
\end{align}
  We recognise the above summation over $j$ as the  summation  of the residues at $\{1-\delta_{j}^{2}/\alpha^2\}$ of the $v$-function
\be v^{\nu+N} (\alpha\sqrt{1-v})^{-\kappa} I_{\kappa}(2\alpha\sqrt{(1-v)y})  \frac{1}{u-v} \prod_{l=1}^{N}\frac{u-(1-\delta_{l}^{2}/\alpha^2)}{v-(1-\delta_{l}^{2}/\alpha^2)} ,\ee
  application of the residue theorem then gives   the required result.  Here $\mathcal{C}_{\mathrm{in}}$ is a counterclockwise contour encircling $1-\delta_{1}^{2}/\alpha^2,\ldots, 1-\delta_{N}^{2}/\alpha^2$ but not any $u\in \mathcal{C}_{\mathrm{out}}$. In particular, we can choose the two contours as described in the theorem.  
\end{proof}

Note that in order to derive the double contour integral in  Theorem \ref{kernelpdf} we have made the best of  nice formulas for integrals  of Bessel functions, a question arises naturally: Are there double contour integrals for correlation kernels of the bi-orthogonal ensembles   \eqref{eigenvaluepdf}?  And even more specifically, is there a relationship between the correlation kernels associated with singular values of $GX$ and $X$? When $G$ and $X$ are independent, for $G$ being a Ginibre or truncated unitary matrix,  Claeys, Kuijlaars and Wang found a nice relation; see  \cite[Lemma 2.14]{CKW15}. It is really a challenge for us to extend their result to the coupled product case.

Secondly,  we have a Jacobi-type  analogue of  Theorem \ref{kernelpdf}. For this purpose, we need to define two   functions which can be treated as being  of mutual duality for an integral representation of correlation kernel. One is, as an entire function of $z$,  for $\nu>\kappa >-1$,  
\begin{align} f_1(\nu, \kappa; z)&=\frac{ \Gamma(\nu+1)}{\Gamma(\kappa+1) \Gamma(\nu-\kappa)}{_1F_1(\nu+1;\kappa+1; z)} \label{deff1}\\
&=\frac{ \Gamma(\nu+1)}{\Gamma(\nu-\kappa)}\frac{1}{2\pi i}  \int_{\mathcal{C}_{\mathrm{0}}}    \,       s^{-\kappa-1}
 e^{s}\left(1-\frac{z}{s}\right)^{-\nu-1}ds,\label{deff1-1}\end{align}
where $\mathcal{C}_{\mathrm{0}}$ is a   counterclockwise contour around the origin. The other is,  for $\nu>0$ and $\kappa\in \mathbb{R}$,  
\be f_2(\nu, \kappa; z)=\int_{0}^{1} (1-t)^{\nu-1}t^{\kappa-1}e^{-\frac{z}{t}}dt, \qquad |\textrm{arg}(z)|<\frac{\pi}{2}. \label{deff2} \ee
\begin{theorem}\label{kernelpdfJ}  
%%With the joint  PDF defined in  \eqref{matrixpdfJ}, let  $Y_2=X_1 X_2$. Then the joint eigenvalue  PDF    for $Y_{2}^{*} Y_{2}^{}$ can be  written in the form
%% \be \mathcal{P}_{N}(x_1,\ldots,x_N)=\frac{1}{N!} \det[K_N(x_i,x_j)]_{i,j=1}^{N},
%% \ee
%%with correlation kernel
The correlation kernel  for the biorthogonal ensemble  \eqref{BiEJ} is given by 
 \begin{multline}
K_N(x,y)=\frac{ \alpha }{(2\pi i)^2}\Big(\frac{y}{x}\Big)^{\kappa}\int_{\mathcal{C}_{\mathrm{out}}} du \int_{\mathcal{C}_{\mathrm{in}}} dv\,  f_{2}(\nu+\nu'-\kappa+N, \kappa; \alpha (1-u)x)
   \\  \times   {f_1(\nu+\nu'+N, \kappa; \alpha (1-v)y)}\, \frac{1}{u-v} \Big(\frac{u}{v}\Big)^{-\nu-N}\prod_{l=1}^{N}\frac{u-(1-\delta_{l}^{2}/\alpha^2)}{v-(1-\delta_{l}^{2}/\alpha^2)}, \label{kernelCDJ}\end{multline}
where  $\mathcal{C}_{\mathrm{in}}$ is a counterclockwise contour encircling $1-\delta_{1}^{2}/\alpha^2,\ldots, 1-\delta_{N}^{2}/\alpha^2$, and $\mathcal{C}_{\mathrm{out}}$   is a   simple   counterclockwise  contour around the origin with  $\mathrm{Re}(z)<1$ for $z\in \mathcal{C}_{\mathrm{out}}$  such that       
   $\mathcal{C}_{\mathrm{in}}$ is entirely  to the right side of  $\mathcal{C}_{\mathrm{out}}$.  When $0< \delta_{j}<\alpha$ for $j=1, \ldots, N$, we can also choose contours such that  
   $\mathcal{C}_{\mathrm{in}}$ is  contained entirely in  $\mathcal{C}_{\mathrm{out}}$.
 \end{theorem}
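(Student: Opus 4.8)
The plan is to follow the proof of Theorem~\ref{kernelpdf} essentially verbatim, with the pair $(f_2,f_1)$ of \eqref{deff2},\eqref{deff1} playing the role that $(K_{-\kappa},I_{\kappa})$ played there. Write the biorthogonal ensemble \eqref{BiEJ} in the form \eqref{bi-ensemblespdf} with
\[
\xi_j(x)=x^{\kappa}\,{}_1F_1(\nu+\nu'+N+1;\kappa+1;\delta_j^2 x/\alpha),\qquad \eta_i(x)=\int_0^1(1-t)^{\nu+\nu'-\kappa+N-1}\Big(\frac{x}{t}\Big)^{\nu-\kappa+i-1}e^{-\alpha x/t}\,\frac{dt}{t},
\]
and set $w_j:=1-\delta_j^2/\alpha^2$. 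First I would compute $g_{i,j}=\int_0^\infty\eta_i(x)\xi_j(x)\,dx$: this is exactly the integral already evaluated in the proof of Theorem~\ref{coupledmultiplicationJ} (equation \eqref{calculationJ} with $f_k(x)=x^{i-1}e^{-\alpha x}$), which gives $g_{i,j}=\gamma\,\Gamma(\nu+i)\,\alpha^{-(\nu+i)}\,w_j^{-(\nu+i)}$ with $\gamma=\Gamma(\kappa+1)\Gamma(\nu+\nu'-\kappa+N)/\Gamma(\nu+\nu'+N+1)$. Since $g_{i,j}$ has the same product shape in $i$ and $j$ as in the Ginibre case, the manipulation behind \eqref{csumidentity} goes through unchanged (assuming the $\delta_j$ pairwise distinct, the coincident case following by L'H\^{o}pital's rule as for Proposition~\ref{twoGpdf}): the matrix $C=(G^{-1})^t$ of Proposition~\ref{PB} satisfies $\sum_{i=1}^N \Gamma(\nu+i)\,\alpha^{-(\nu+i)}c_{i,j}\,w^{N-i}=\gamma^{-1}\,w_j^{\nu+N}\prod_{l\neq j}\frac{w-w_l}{w_j-w_l}$.

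Next I would introduce the outer contour. The one genuinely new point is the bookkeeping that puts $\eta_i$ into an $f_2$-form with the \emph{fixed} second parameter $\kappa$: split $(x/t)^{\nu-\kappa+i-1}=(x/t)^{-\kappa}(x/t)^{\nu+i-1}$, apply Hankel's formula \eqref{inversegamma} to $(x/t)^{\nu+i-1}$, rescale the integration variable by $\alpha$, and absorb $(x/t)^{-\kappa}\,dt/t=x^{-\kappa}t^{\kappa-1}\,dt$; the inner $t$-integral is then an instance of \eqref{deff2}, so that
\[
\eta_i(x)=\frac{\Gamma(\nu+i)}{2\pi i}\,\alpha^{-(\nu+i-1)}\,x^{-\kappa}\int_{\mathcal{C}_{\mathrm{out}}}u^{-(\nu+i)}\,f_2\big(\nu+\nu'-\kappa+N,\,\kappa;\,\alpha(1-u)x\big)\,du .
\]
Both the interchange of the $t$- and $u$-integrations and the membership of $\alpha(1-u)x$ in the domain $|\arg z|<\pi/2$ of \eqref{deff2} require $\mathrm{Re}(z)<1$ on $\mathcal{C}_{\mathrm{out}}$ and $x>0$ — which is precisely the stated hypothesis on $\mathcal{C}_{\mathrm{out}}$. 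Feeding this into $K_N(x,y)=\sum_{i,j}c_{i,j}\eta_i(x)\xi_j(y)$ and carrying out the sum over $i$ by the interpolation identity collapses the $u$-integrand to $f_2(\cdots)\,u^{-\nu-N}w_j^{\nu+N}\prod_{l\neq j}(u-w_l)/(w_j-w_l)$, leaving a single sum over $j$.

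Finally I would create the inner contour from that $j$-sum. By \eqref{deff1} and the elementary identity $\delta_j^2/\alpha=\alpha(1-w_j)$ one has $\xi_j(y)=\gamma\,y^{\kappa}\,f_1(\nu+\nu'+N,\kappa;\alpha(1-w_j)y)$ — the $\Gamma$-prefactor here is the same $\gamma$ as above, because $\nu+\nu'-\kappa+N=\nu+\nu'+N-\kappa$, and it cancels the $\gamma^{-1}$ carried by $c_{i,j}$, leaving the overall constant $\alpha$ and the factor $(y/x)^{\kappa}$. Since $v\mapsto f_1(\nu+\nu'+N,\kappa;\alpha(1-v)y)$ is entire (cf. \eqref{deff1-1}), the $j$-th summand $f_1(\nu+\nu'+N,\kappa;\alpha(1-w_j)y)\,w_j^{\nu+N}\prod_{l\neq j}(u-w_l)/(w_j-w_l)$ is precisely the residue at $v=w_j$ of $f_1(\nu+\nu'+N,\kappa;\alpha(1-v)y)\,v^{\nu+N}(u-v)^{-1}\prod_{l=1}^N(u-w_l)/(v-w_l)$, so the residue theorem converts the $j$-sum into a contour integral over $\mathcal{C}_{\mathrm{in}}$ and yields \eqref{kernelCDJ}; since $v=u$ is never enclosed by $\mathcal{C}_{\mathrm{in}}$, the alternative placement of $\mathcal{C}_{\mathrm{in}}$ inside $\mathcal{C}_{\mathrm{out}}$ when all $\delta_j>0$ follows by deformation, exactly as in Theorem~\ref{kernelpdf}. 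The conceptual content is thus wholly that of Theorem~\ref{kernelpdf}, and the only obstacle I anticipate is computational: tracking the powers of $\alpha$ and the $\Gamma$-factors through the Hankel substitution so the spurious constants cancel, and checking that the standing constraints $\nu+\nu'\geq\kappa$, $N\geq 1$, $\kappa\geq 0$ force $\nu+\nu'+N>\kappa$ and $\nu+\nu'-\kappa+N>0$, i.e. that the parameters entering $f_1$ and $f_2$ always fall in the admissible ranges of \eqref{deff1}--\eqref{deff2}.
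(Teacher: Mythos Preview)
Your proposal is correct and follows the same route as the paper's own proof: compute $g_{i,j}$ via the calculation of \eqref{calculationJ}, derive the interpolation identity for the $c_{i,j}$ exactly as in \eqref{csumidentity}, rewrite $\eta_i$ by Hankel's formula so that the $t$-integral becomes $f_2$, sum over $i$, and then recast the remaining $j$-sum as residues to obtain the $\mathcal{C}_{\mathrm{in}}$-integral over $f_1$. Your write-up is in fact more explicit than the paper's (which simply says ``proceed as in Theorem~\ref{kernelpdf}'' and records the key formulas), and your closing checks on the parameter ranges for $f_1,f_2$ are a welcome addition.
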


\begin{proof} We proceed in a similar way as in  Theorem \ref{kernelpdf} and just give a brief derivation as follows. With   Corollary \ref{pdfJ}  in mind,  simple calculation  in the same way as in \eqref{calculationJ} shows us
\begin{align}
g_{i,j} :&=\int_0^\infty \eta_{i}(x)\xi_{j}(x) dx   \nonumber\\
&= \frac{\Gamma(\kappa+1) \Gamma(\nu+\nu'-\kappa+N)}{\Gamma(\nu+\nu'+N+1)}  \frac{\Gamma(\nu+i) }{ \alpha^{\nu+i}} \Big(1-\frac{\delta_{j}^2}{\alpha^2}\Big)^{-\nu-i}.
\end{align}

According to Proposition \ref{PB}, with $G=[g_{i,j}]_{i,j=1}^N$, let $C=(G^{-1})^{t}$, the entries $c_{i,j}$ of $C$ then satisfy
   identical equations
 \begin{multline} \sum_{i=1}^{N}   \frac{\Gamma(\nu+i)}{2\alpha^{\nu+\kappa+i+1}}  \, u^{N-i}  \,  c_{i,j}  =   \frac{\Gamma(\nu+\nu'+N+1)}{\Gamma(\kappa+1) \Gamma(\nu+\nu'-\kappa+N)}  \\
 \times \Big(1-\frac{\delta_{j}^2}{\alpha^2}\Big)^{\nu+N} \prod_{l=1,l\neq j}^{N} \frac{u-\big(1-\frac{\delta_{l}^2}{\alpha^2}\big)}{\big(1-\frac{\delta_{j}^2}{\alpha^2}\big)-\big(1-\frac{\delta_{l}^2}{\alpha^2}\big)}. \label{csumidentityJ}\end{multline}
By   Hankel's formula \eqref{inversegamma}, we have from  \eqref{7.eb2'}  that
  \begin{align} K_N(x,y) &= \frac{\alpha}{2\pi i} \sum_{j }   \xi_j(y)       \int_{0}^{1} \frac{dt}{t}
   \big(\frac{x}{t}\big)^{-\kappa} (1-t)^{\nu+\nu'-\kappa+N-1}    e^{-\alpha\frac{x}{t}}  \nonumber \\
  & \quad \times \sum_{i }   \frac{\Gamma(\nu+i)}{ \alpha^{\nu+i}}\int_{\mathcal{C}_{\mathrm{out}}}  du  \, e^{\frac{\alpha x}{t}u}    u^{-\nu-i} \, c_{i,j}.
   \end{align}
 Thus, using  the identity  \eqref{csumidentityJ},  exchanging the order of integration and  then applying residue theorem imply  the required result.
\end{proof}

Finally,  let's  extract  some key ideas behind the proofs of both Theorems \ref{kernelpdf} and \ref{kernelpdfJ} and draw a general procedure in giving double contour integrals for correlation kernels in a class of bi-orthogonal ensembles; see \cite{BBP,CKW15,DF08,FL15b} for relevant examples.  
\begin{remark} \label{generldoubleI}For the bi-orthogonal ensemble  \eqref{bi-ensemblespdf},   suppose that the following conditions hold true:
\begin{itemize}
\item[(i)] There exist two functions $g(t,x)$,  $\Phi(t,x)$ and $N$ generic parameters $a_1, \ldots, a_N$ such that
\be \eta_i(x)=\int (xt)^{i-1} g(t,x) dt,  \qquad  \xi_j(x)=\Phi(a_j,x), \qquad x\in I.  \label{generaleq1}\ee
 \item[(ii)] There exist $h(x), q(x)$ and   polynomials  $L_{k}(x)$ of degree $k$ such that   \be   \int_{I} \eta_{i}(x)\xi_{j}(x) dx =\frac{1}{b_i h(a_j)}L_{i-1}(a_j)\label{generaleq2-1}\ee
where  \be b_i=\int z^{i-1} q(z)dz.\label{generaleq32-2}\ee
\item[(iii)] There exist $\tilde{g}(t,x)$ and a contour $\mathcal{C}_1$ not containing    $\{a_j\}$ such that      \be   z^{i-1}=\int_{\mathcal{C}_1} \frac{1}{h(u)} L_{i-1}(u) \tilde{g}(u,z) du. \label{generaleq3}\ee
 \end{itemize}
 Setting \be \Psi(u,x)= \iint \tilde{g}(u,xzt)q(z) g(t,x)dz dt,\label{generaleq4}\ee
 if both $h(z)$ and $\Phi(z,x)$ are analytic functions of $z$ in some proper domain  containing all $a_j$, then with certain conditions  such as integrability on $g, \tilde{g}, \Psi$ and $q$  the correlation kernel should be given by 
\be
K_N(x,y)=\frac{1 }{ 2\pi i } \int_{\mathcal{C}_{1}} du \int_{\mathcal{C}_{2}} dv\, \Psi(u,x) \Phi(v,y)\frac{h(v)}{h(u)}
   \frac{1}{u-v}  \prod_{l=1}^{N}\frac{u- a_l} {v-a_l}, \label{kernelCDge}\ee
where  $\mathcal{C}_{2}$ is a counterclockwise contour encircling $a_1,\ldots, a_N$, but does not intersect with $\mathcal{C}_{1}$.   
The integral transform  \eqref{generaleq3} connects polynomials  $L_{i}(z)$ and $z^{i}$, and  in a practical application of Proposition \ref{generldoubleI}  the most difficult part usually lies in finding  of a  suitable kernel  function  $\tilde{g}(t,x)$ as stated in Condition (iii).

 \end{remark}

\section{Hard edge limits}\label{sectionhardlimit}
In this section we are devoted to  the proof of  Theorem
\ref{hardlimits} and hard edge limits  of the kernel \eqref{kernelCDJ}, for which the same hard edge  transition phenomenon is observed.

 \begin{proof}[\textbf{Proof of Theorem \ref{hardlimits}}]
 First, under the assumptions   \eqref{finiterank} we can rewrite the correlation kernel  \eqref{kernelCD} as
  \begin{multline}
K_N(x,y)=\frac{2\alpha^2}{(2\pi i)^2}\int_{\mathcal{C}_{\mathrm{out}}} du \int_{\mathcal{C}_{\mathrm{in}}} dv\,K_{-\kappa}(2\alpha\sqrt{(1-u)x})\,  I_{\kappa}(2\alpha\sqrt{(1-v)y}) \\  \times \frac{1}{u-v} \Big(\frac{1-u}{1-v}\Big)^{\kappa/2}\Big(\frac{v}{u}\Big)^{\nu+m} \bigg(\frac{1-\frac{1}{u} \frac{4\mu}{(1+\mu)^2}}{1-\frac{1}{v}\frac{4\mu}{(1+\mu)^2}}\bigg)^{N-m} \prod_{l=1}^{m}\frac{u-(1-\frac{\delta_{l}^{2}}{\alpha^2})}{v-(1-\frac{\delta_{l}^{2}}{\alpha^2})}, \label{kernelCD2}\end{multline}
 where  $\mathcal{C}_{\mathrm{in}}$  encircles  $1-\delta_{1}^{2}/\alpha^2,\ldots, 1-\delta_{m}^{2}/\alpha^2$ and $4\mu/(1+\mu)^2$.  Next, we  prove Parts (i)--(iv) under the corresponding conditions  respectively.

 For Part (i) where $\mu N \ra \infty$ and  $0\leq \delta_{j}<\alpha$ for $j=1, \ldots, m$,  we choose   the contours such that  $ \mathcal{C}_{\mathrm{out}}$ goes around the origin    with  $\mathrm{Re}(z)<1$ for $z\in \mathcal{C}_{\mathrm{out}}$ and 
   $\mathcal{C}_{\mathrm{in}}$ is entirely  to the right side of  $\mathcal{C}_{\mathrm{out}}$.  
 In order to take limits smoothly,   we need to substitute   $K_{\nu}$ and $I_{\nu}$ into  \eqref{kernelCD2}   with their integral representations  respectively given by
     \eqref{Kfunction}  and \be I_{\kappa}(z) =\big(\frac{z}{2}\big)^{\kappa}\frac{1}{2\pi i}\int_{\mathcal{C}_{\mathrm{0}}}  ds  \,   s^{-\kappa-1}
 e^{s+\frac{z^2}{4s}}, \label{IBessel}\ee  
 which can be obtained by applying  the integral  representation of the reciprocal gamma function (cf. \eqref{inversegamma}) to  the RHS of \eqref{Ifunction}.
 Changing $u$ to $4\mu(1+\mu)^{-2}Nu$ and $v$ to $4\mu(1+\mu)^{-2}Nv$, we then use  Fubini's theorem to get
    \begin{align}
\frac{\mu }{N} K_N  \big( \frac{\mu }{N} \xi,  \frac{\mu }{N} \eta\big)&=\left(  \frac{\eta}{\xi}\right)^{\kappa/2}  \frac{1}{2\pi i}\int_{0}^{\infty}  dt  \int_{\mathcal{C}_{\mathrm{0}}}  ds  \,     t^{\kappa-1} s^{-\kappa-1}
 e^{s-t}  \widetilde{K}_{N}(\frac{\eta}{s},\frac{\xi}{t})  \label{kernelCD3}\end{align}
 where   \begin{align}
  \widetilde{K}_{N}(\frac{\eta}{s},\frac{\xi}{t})&=\exp\Big\{\frac{\eta }{s}\frac{(1+\mu)^2}{4\mu N}-\frac{\xi }{t}\frac{(1+\mu)^2}{4\mu N}\Big\}\,  \frac{1}{(2\pi i)^2}\int_{\mathcal{C}_{\mathrm{out}}} du \int_{\mathcal{C}_{\mathrm{in}}} dv\, e^{\frac{\xi u}{t}-\frac{\eta v}{s}}  \nonumber \\
 &\ \times \frac{1}{u-v}
\Big(\frac{v}{u}\Big)^{\nu+m} \Big(\frac{1-\frac{1}{Nu} }{1-\frac{1}{Nv}}\Big)^{N-m} 
  \prod_{l=1}^{m}\frac{u-\frac{(1+\mu)^2}{4\mu N}\big(1-\frac{\delta_{l}^{2}}{\alpha^2}\big)}{v-\frac{(1+\mu)^2}{4\mu N}\big(1-\frac{\delta_{l}^{2}}{\alpha^2}\big)}. \label{kernelCD4}\end{align}
    Here   $ \mathcal{C}_{\mathrm{out}}$  is  a counterclockwise contour around the origin and entirely to its right side    $\mathcal{C}_{\mathrm{in}}$  encircles  $1/N$ and $a_{l}:=(4\mu N)^{-1}(1+\mu)^{2}(1-\delta_{l}^{2}/\alpha^2)$ for $l=1, \ldots, m$. 
 
  To take  limit  as $N \ra \infty$ in \eqref{kernelCD4}, we need to deform the two contours. For this,  denote by $\mathcal{L}_{c_1,c_2;r}$  with $c_1<c_2$ and $r>0$ a  rectangular contour  connecting four points $(c_1, \pm r), (c_2, \pm r)$ in a counterclockwise direction.  Take $b_1, b_2$  such that $0<b_{1}<\min\{1/N, a_1, \ldots, a_m\}$ and $b_{2}>\max\{1/N, a_1, \ldots, a_m\}$, so we can specify $\mathcal{C}_{\mathrm{out}}$ and $ \mathcal{C}_{\mathrm{in}}$ with 
rectangular contours  $\mathcal{L}_{-b_{1}/2, b_{1}/2;2}$  and  $\mathcal{L}_{b_{1}, b_{2};1}$, respectively.   For convenience, let's  use  an abbreviated notation for the RHS of  \eqref{kernelCD4}.   We thus arrive at 
  \begin{align}
  \widetilde{K}_{N}&(\frac{\eta}{s},\frac{\xi}{t})=\int_{\mathcal{L}_{-b_{1}/2, b_{1}/2;2}} du \int_{\mathcal{L}_{b_{1}, b_{2};1}} dv\, \Big(\cdot\Big)  \nonumber \\
 &= \int_{\mathcal{L}_{-b_{1}/2, 2b_{2};2}} du \int_{\mathcal{L}_{b_{1}, b_{2};1}} dv\, \Big(\cdot\Big) -\int_{\mathcal{L}_{b_{1}/2, 2b_{2};2}} du \int_{\mathcal{L}_{b_{1}, b_{2};1}} dv\, \Big(\cdot\Big)  \label{kernelCD5-11}\\
 &= \int_{\mathcal{L}_{-b_{1}/2, 2b_{2};2}} du \int_{\mathcal{L}_{b_{1}, b_{2};1}} dv\, \Big(\cdot\Big),  \label{kernelCD5-12} 
 \end{align}
where  the second integral in  \eqref{kernelCD5-11} is actually equal to zero because the integrand has no pole with respect to $u$.  Moreover,  we can deform the resulting integral  \eqref{kernelCD5-12}  again and get 
 \begin{align}
  \widetilde{K}_{N}&(\frac{\eta}{s},\frac{\xi}{t})= \int_{\mathcal{L}_{-2b_{2}, 2b_{2};2}} du \int_{\mathcal{L}_{-b_{2}, b_{2};1}} dv\, \Big(\cdot\Big).  \label{kernelCD5-13} 
 \end{align}

 Since $\mu N \ra \infty$ as $N \ra \infty$, noting that $0<\mu\leq 1$ and $ 0\leq \delta_l<\alpha$ ($l=1, \ldots, m$), for $N$ large sufficient (for instance, $N>1$ and $\mu N>1$),  set $b_2=1$ in \eqref{kernelCD5-13},    
  application of Lebesgue's dominated convergence theorem provides us
    \begin{align}
 \widetilde{K}_{N}(\frac{\eta}{s},\frac{\xi}{t}) \rightarrow  \widetilde{K}_{\infty}(\frac{\eta}{s},\frac{\xi}{t}):=\frac{1}{(2\pi i)^2}  \int_{\mathcal{L}_{-2, 2;2}} du \int_{\mathcal{L}_{-1, 1;1}} dv\,  e^{\frac{\xi u}{t}-\frac{1}{u}-\frac{\eta v}{s}+\frac{1}{v}}\frac{1}{u-v} \Big(\frac{v}{u}\Big)^{\nu}
 . \label{4.5} \end{align}
  This limit has been identified as   the Bessel kernel   by  Desrosiers and Forrester (cf.  \cite[eqns (1.20) and (6.20)]{DF06}) with a specific   relation
 \be \widetilde{K}_{\infty}(s,t)= 4 \left(  \frac{\xi s}{\eta t}\right)^{\nu/2} K_{\mathrm{\nu}}^{(\mathrm{Bes})}\Big(\frac{4\eta}{s}, \frac{4\xi}{t}\Big), \label{4.6}\ee
from which the requested conclusion follows.

 In the case of  Part (ii) where  $\mu N \ra \tau/4$ with $\tau>0$ and
$ 1- \delta_{l}^{2}/\alpha^2  \ra \pi_l \in (0, 1)$ for $l=1, \ldots, m$,  for large $N$ sufficient  we see  all $1- \delta_{l}^{2}/\alpha^2  \in (0, r_1)$ with a  given positive number $r_1$ satisfying  $1>r_1>\max\{\pi_1, \ldots, \pi_m\}$.  Given  $1>r_2>r_1$,   let 
   $\mathcal{C}_{\mathrm{in}}$ and  $\mathcal{C}_{\mathrm{out}}$ be circles   with  radius $r_1$ and $r_2$ and center at the origin.
Note that  the involved function is continuous  in  the given bounded  contours   and as $N \ra \infty$
 \be \Big(   1-\frac{1}{z}\big(1-\frac{\delta^{2}}{\alpha^2}\big)\Big)^{N-m} \ra e^{-\frac{\tau }{z}},   \nonumber \ee
 take limit  in the integrand of the right-hand side of  \eqref{kernelCD2} and  we have  the required conclusion \eqref{IIlimit}.

 For Part (iii) where $\mu N \ra 0$ and $ 1- \delta_{l}^{2}/\alpha^2 =4\mu N\pi_l$ with   $\pi_l\in (0, \infty)$ for $l=1, \ldots, m$, change $u$ to $4\mu Nu$ and $v$ to $4\mu Nv$ in the integrand of the right-hand side of  \eqref{kernelCD2}, recalling  \eqref{finiterank} and  the assumptions on $\delta_{1}, \ldots, \delta_m$, we have
 \begin{align}
\frac{1 }{4N^2} K_N  \big( \frac{1 }{4N^2} \xi, & \frac{1 }{4N^2} \eta\big) =\frac{\mu}{N}\frac{2\alpha^2}{(2\pi i)^2}\int_{\mathcal{C}_{\mathrm{out}}} du \int_{\mathcal{C}_{\mathrm{in}}} dv\nonumber \\
&\,K_{-\kappa} \big(\frac{\alpha}{N} \sqrt{(1-4\mu N u)\xi}\big)
 \,  I_{\kappa}\big(\frac{\alpha}{N} \sqrt{(1-4\mu N v)\eta}\big) \frac{1}{u-v}\Big(\frac{v}{u}\Big)^{\nu+m} \nonumber\\ & \times
 \bigg(\frac{1-4\mu Nu}{1-4\mu Nv}\bigg)^{\kappa/2}  \bigg(\frac{1-\frac{1}{(1+\mu)^2Nu}}{1-\frac{1}{(1+\mu)^2Nv}}\bigg)^{N-m} \prod_{l=1}^{m}\frac{u-\pi_{l}}{v-\pi_l},  \nonumber\end{align}
 where  the two contours are chosen   such that $\mathrm{Re}(z)<(4\mu N)^{-1}$ for any $z$ in $ \mathcal{C}_{\mathrm{out}}$ and $\mathcal{C}_{\mathrm{in}}$, and  $\mathcal{C}_{\mathrm{in}}$ encircles  $0, (1+\mu)^{-2}N^{-1}, \pi_{1},\ldots, \pi_{m}$ and is wholly within $ \mathcal{C}_{\mathrm{out}}$ (here the same notations of the contours  are used, for simplicity).  Moreover, since $\mu N \ra 0$ as $N \ra \infty$,  for $N$ large enough we can always assume that   both the contours are selected and are independent of $N$. For instance, let $c_{0}=\max\{1, \pi_1, \ldots, \pi_m\}$, when $4\mu N\leq 1/(2c_{0}+8)$,  we can take $ \mathcal{C}_{\mathrm{out}}$ and $\mathcal{C}_{\mathrm{in}}$ as two circles centered at zero with radius $c_{0}+2$ and  $c_0+1$ respectively.
 
 We need to make use of asymptotic formulas of modified Bessel functions, see e.g.   \cite[10.40 (i)]{Olver10}. For any given constant $\delta$ such that  $0<\delta<\pi/2$, then as $ z\ra \infty$, with $\nu$ fixed,  the following hold uniformly with respect to  $\textrm{arg}(z)$ in the corresponding  sectors 
 \be I_{\nu}(z)= \frac{e^z}{\sqrt{2\pi z}} \Big(1+\mathcal{O}\big(\frac{1}{z}\big)\Big), \qquad 
 \  |\textrm{arg}(z)\leq \frac{1}{2}\pi -\delta,  \label{asymptoticsIK0} \ee
  and 
   \be K_{\nu}(z)=  \sqrt{\frac{\pi}{2z}} e^{-z} \Big(1+\mathcal{O}\big(\frac{1}{z}\big)\Big), \qquad  
   \  |\textrm{arg}(z)|\leq\frac{3}{2}\pi-\delta.  \label{asymptoticsIK}\ee

 Notice  the assumption   that  $\mu N \ra 0$ as $N \ra \infty$,  for any given two closed contours $\mathcal{C}_{\mathrm{out}}$ and $\mathcal{C}_{\mathrm{in}}$ (say, the two circles  with radius $c_{0}+2$ and  $c_0+1$ described previously), independent of $N$,  we can choose sufficiently large $N$ such that $|4\mu N u|\leq 1/2$ and $|4\mu N v|\leq 1/2$ uniformly for $u\in \mathcal{C}_{\mathrm{out}}$  and $v\in \mathcal{C}_{\mathrm{in}}$. These show that
  $|\textrm{arg}(\sqrt{1-4\mu N u})|\leq \pi/8$ and $|\textrm{arg}(\sqrt{1-4\mu N v})|\leq \pi/8$ for any $u\in \mathcal{C}_{\mathrm{out}}$, $v\in \mathcal{C}_{\mathrm{in}}$. Noting 
  $\alpha /N \ra \infty$ as $N\ra \infty$  and  the Taylor expansion
\be \frac{\alpha}{N} \sqrt{1-4\mu N v}=\frac{\alpha}{ N} -(1+\mu)v+\mathcal{O}(\mu N), \quad \mu N \ra 0, \nonumber\ee
 applying \eqref{asymptoticsIK0}  thus  gives rise to 

 \be I_{\kappa}\big(\frac{\alpha}{N} \sqrt{(1-4\mu N v)\eta}\big) \sim \frac{1}{\sqrt{2\pi}}\sqrt{\frac{N}{\alpha }} \eta^{-\frac{1}{4}}e^{\sqrt{\eta}\frac{\alpha}{N}  -\sqrt{\eta}v}
 \nonumber \ee
 uniformly for any $v\in \mathcal{C}_{\mathrm{in}}$.
 Likewise, we see from  \eqref{asymptoticsIK} that 
  \be K_{-\kappa}\big(\frac{\alpha}{N} \sqrt{(1-4\mu N u)\xi}\big) \sim  \sqrt{\frac{\pi}{2}}\sqrt{\frac{N}{\alpha }} \xi^{-\frac{1}{4}} e^{-\sqrt{\xi}\frac{\alpha}{N}  +\sqrt{\xi}u} \nonumber
 \ee
 uniformly  for any  $u\in \mathcal{C}_{\mathrm{out}}$.
 Taken together,  the desired  result immediately follows from application of Lebesgue's dominated convergence theorem.

 In the case of  Part (iv) where $\mu N \ra 0$ and $ 1- \delta_{l}^{2}/\alpha^2  \ra \pi_l\in (0,1)$ for $l=1, \ldots, m$,  
  as in Part (ii) for large $N$ sufficient  we can let 
   $\mathcal{C}_{\mathrm{in}}$  and  $\mathcal{C}_{\mathrm{out}}$   be  two circles    with  center and radius    independent of $N$. 
   Note that    $\mu N \ra 0$ implies
 \be \Big(   1-\frac{1}{z}\big(1-\frac{\delta^{2}}{\alpha^2}\big)\Big)^{N-m} \ra 1, \nonumber   \ee
 we get the required conclusion \eqref{IVlimit} by taking limit  in the integrand of the right-hand side of  \eqref{kernelCD2}.

Obviously,   the results hold uniformly for $\xi, \eta$ in  a given compact set  of $(0, \infty)$. Therefore, we have completed the proof of the given statement.
   \end{proof}

We now state a similar result associated with   correlation kernel    \eqref{kernelCDJ}.

\begin{theorem}  \label{hardlimitsJ}
 With  the kernel    \eqref{kernelCDJ} and  the assumptions given in \eqref{finiterank}, and with fixed nonnegative integers $\nu, \nu'$ and $\kappa$ such that $\nu+\nu'\geq \kappa$,  the following hold uniformly for any $\xi$ and $\eta$ in a compact set of $(0,\infty)$ as $N\rightarrow \infty$.

\begin{itemize}

 \item [(i)]  If $\mu N \ra \infty$, 
 then
 \be
\frac{1+\mu }{2N^2} K_N\Big( \frac{1+\mu }{2N^2} \xi, \frac{1+\mu }{2N^2} \eta\Big)\ra \Big(\frac{\eta}{\xi}\Big)^{\kappa/2}K_{\mathrm{I}}(\xi,\eta).  \nonumber \ee

 \item [(ii)]  If $\mu N \ra \tau/4$ with $\tau>0$ and
$ 1- \delta_{l}^{2}/\alpha^2  \ra \pi_l \in (0, 1)$ for $l=1, \ldots, m,$ then
 \be
 \frac{1}{\alpha N} K_{N}\Big(\frac{\xi}{\alpha N},  \frac{\eta}{\alpha N}\Big)\ra \Big(\frac{\eta}{\xi}\Big)^{\kappa/2} K_{\mathrm{II}}(\tau;\xi,\eta). \nonumber \ee

\item [(iii)]  If $\mu N \ra 0$ but $\mu N^{2} \ra \infty$, and
$ 1- \delta_{l}^{2}/\alpha^2=4\mu N  \pi_l$  with $\pi_l\in (0, \infty)$ for $l=1, \ldots, m,$ then
 \be
   \frac{ e^{\frac{1}{2\mu N} \sqrt{\xi}}}{e^{\frac{1}{2\mu N} \sqrt{\eta}}} \frac{1}{16\alpha \mu^{2}N^3}  K_N\Big(\frac{\xi}{16\alpha \mu^{2}N^3}, \frac{\eta}{16\alpha \mu^{2}N^3} \Big)\ra \Big(\frac{\eta}{\xi}\Big)^{\kappa/2}K_{\mathrm{III}}(\xi,\eta). \nonumber \ee

\item [(iv)]  If $\mu N \ra 0$ and
$ 1- \delta_{l}^{2}/\alpha^2  \ra \pi_l\in (0,1)$ for $l=1, \ldots, m,$ then for $m\geq 1$
 \be
 \frac{1}{\alpha N} K_{N}\Big(\frac{\xi}{\alpha N},  \frac{\eta}{\alpha N}\Big)\ra \Big(\frac{\eta}{\xi}\Big)^{\kappa/2} K_{\mathrm{IV}}(\xi,\eta). \nonumber\ee
\end{itemize}
 \end{theorem}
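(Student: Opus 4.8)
The plan is to follow the proof of Theorem \ref{hardlimits} step by step, now with the double contour integral \eqref{kernelCDJ} in place of \eqref{kernelCD}. First I would rewrite \eqref{kernelCDJ} under the hypothesis \eqref{finiterank}: since $1-\delta_{l}^{2}/\alpha^{2}=4\mu/(1+\mu)^{2}$ for $l=m+1,\ldots,N$, the product over $l$ splits into the finite-rank piece $\prod_{l=1}^{m}\frac{u-(1-\delta_{l}^{2}/\alpha^{2})}{v-(1-\delta_{l}^{2}/\alpha^{2})}$ and a bulk part which, combined with $(u/v)^{-\nu-N}$, becomes $(u/v)^{-\nu-m}\Big(\frac{1-4\mu(1+\mu)^{-2}u^{-1}}{1-4\mu(1+\mu)^{-2}v^{-1}}\Big)^{N-m}$, exactly the factor already met in \eqref{kernelCD2}. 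The only genuinely new ingredient is the large-parameter behaviour of the functions $f_{1}$ and $f_{2}$ of \eqref{deff1}--\eqref{deff2}, which here play the role that $I_{\kappa}$ and $K_{-\kappa}$ played in the Gaussian case.

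The key step, which I would isolate as a lemma, is the degeneration of $f_{1}$ and $f_{2}$ to Bessel functions. Writing $a=a_{N}\sim N$ for the first parameter ($a=\nu+\nu'+N$ for $f_{1}$, $a=\nu+\nu'-\kappa+N$ for $f_{2}$), as $z\to 0$ with $az\to w$ the confluence ${}_{1}F_{1}(a+1;\kappa+1;z)\to {}_{0}F_{1}(\kappa+1;w)$ together with $\Gamma(a+1)/\Gamma(a-\kappa)\sim a^{\kappa+1}$ gives, by \eqref{0F1I},
\[ f_{1}(a,\kappa;z)\ \sim\ a^{\kappa+1}\, w^{-\kappa/2}I_{\kappa}(2\sqrt{w}), \]
while the substitution $t=r/a$ in \eqref{deff2} turns $f_{2}(a,\kappa;z)$ into $a^{-\kappa}\int_{0}^{a}(1-r/a)^{a-1}r^{\kappa-1}e^{-az/r}\,dr$, and a Watson-lemma / dominated-convergence argument yields
\[ f_{2}(a,\kappa;z)\ \sim\ a^{-\kappa}\int_{0}^{\infty}r^{\kappa-1}e^{-r-w/r}\,dr\ =\ 2\, a^{-\kappa}\, w^{\kappa/2}K_{\kappa}(2\sqrt{w}), \]
recalling \eqref{Kfunction} and $K_{-\kappa}=K_{\kappa}$. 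Thus, after the correct rescalings, $f_{1}$ and $f_{2}$ degenerate exactly to the $I_{\kappa}$ and $K_{-\kappa}$ occurring in $K_{\mathrm{II}}$, $K_{\mathrm{III}}$, $K_{\mathrm{IV}}$, up to explicit powers of $a_{N}\sim N$ and of $\xi,\eta$; the residual powers of $\xi,\eta$ merge with the overall prefactor $(y/x)^{\kappa}$ of \eqref{kernelCDJ} to produce precisely the extra factor $(\eta/\xi)^{\kappa/2}$ on the right-hand side of Theorem \ref{hardlimitsJ}, and the residual power of $a_{N}$ balances the scaling prefactors $\tfrac{1}{\alpha N}$, $\tfrac{1}{16\alpha\mu^{2}N^{3}}$, etc. In practice I expect the cleanest route is to substitute the integral representations \eqref{deff1-1} and \eqref{deff2} into \eqref{kernelCDJ} and to carry out these estimates inside the resulting multiple integral, just as the integral representations \eqref{Kfunction}, \eqref{IBessel} of $K_{-\kappa}$ and $I_{\kappa}$ were exploited in the proof of Theorem \ref{hardlimits}(i).

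With the lemma in hand the four cases go through exactly as before. In case (i) I would rescale $u\mapsto 4\mu(1+\mu)^{-2}Nu$ and $v\mapsto 4\mu(1+\mu)^{-2}Nv$; the finite-rank factors tend to $1$, the bulk factor tends to $e^{-1/u+1/v}$, and after deforming $\mathcal{C}_{\mathrm{out}},\mathcal{C}_{\mathrm{in}}$ to fixed contours the resulting limit integral coincides, up to the factor $(\eta/\xi)^{\kappa/2}$, with the one obtained in the proof of Theorem \ref{hardlimits}(i), hence equals $(\eta/\xi)^{\kappa/2}K_{\mathrm{I}}$. In cases (ii) and (iv) I would take fixed circular contours; there $a_{N}z\to(1-u)\xi$ for $f_{2}$, resp.\ $a_{N}z\to(1-v)\eta$ for $f_{1}$, so that $f_{2}$ and $f_{1}$ converge to $K_{-\kappa}(2\sqrt{(1-u)\xi})$ and $I_{\kappa}(2\sqrt{(1-v)\eta})$ (the factor $(\tfrac{1-u}{1-v})^{\kappa/2}$ coming from the $w^{\pm\kappa/2}$ prefactors above), the bulk factor tends to $e^{-\tau/u+\tau/v}$, resp.\ to $1$, since $1-\delta_{l}^{2}/\alpha^{2}\to\pi_{l}$ while $\big(1-4\mu(1+\mu)^{-2}z^{-1}\big)^{N-m}$ has the stated limit, and passing to the limit under the integral gives $(\eta/\xi)^{\kappa/2}K_{\mathrm{II}}$, resp.\ $(\eta/\xi)^{\kappa/2}K_{\mathrm{IV}}$. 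Finally, in case (iii) the extra hypothesis $\mu N^{2}\to\infty$ forces $a_{N}z\to\infty$ along the contours, so after the Bessel degeneration of $f_{1},f_{2}$ one further invokes the large-argument asymptotics \eqref{asymptoticsIK0}, \eqref{asymptoticsIK} exactly as in the proof of Theorem \ref{hardlimits}(iii); this produces the factors $e^{\sqrt{\xi}u}$ and $e^{-\sqrt{\eta}v}$ together with the scalar prefactor $e^{\sqrt{\xi}/(2\mu N)}/e^{\sqrt{\eta}/(2\mu N)}$, which combined with the bulk factor $e^{-1/u+1/v}$ yields $(\eta/\xi)^{\kappa/2}K_{\mathrm{III}}$. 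Uniformity for $\xi,\eta$ in a compact subset of $(0,\infty)$ is clear at each step.

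I expect the main obstacle to be the rigorous uniform justification of the two displayed asymptotics of $f_{1}$ and $f_{2}$: for $f_{2}$ this is a Laplace / Watson-lemma estimate in which the first parameter grows like $N$ while the argument shrinks like $1/N$, and it must hold uniformly for $u,v$ ranging over the (at first $N$-dependent, then fixed) contours; and in case (iii) one needs the more delicate regime where $a_{N}z$ itself tends to infinity, so that the confluent-hypergeometric degeneration must be matched with the Bessel large-argument expansion uniformly in the contour variable. Once these estimates are available, the contour deformations and the applications of Lebesgue's dominated convergence theorem are a routine transcription of the corresponding steps in the proof of Theorem \ref{hardlimits}.
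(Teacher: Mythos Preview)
Your proposal follows the paper's proof very closely. The rewriting of \eqref{kernelCDJ} under \eqref{finiterank}, the substitution of the integral representations \eqref{deff1-1} and \eqref{deff2}, the change of variables in case (i), and the treatment of cases (ii) and (iv) via the confluent degeneration $f_{1}\to I_{\kappa}$, $f_{2}\to K_{-\kappa}$ (your displayed ``lemma'') are exactly what the paper does, and your bookkeeping of the powers of $N$ and of $(\eta/\xi)^{\kappa/2}$ is correct.

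The only real difference is in case (iii). You propose a two-step argument: first use the Bessel degeneration of $f_{1},f_{2}$ and then apply the large-argument asymptotics \eqref{asymptoticsIK0}--\eqref{asymptoticsIK}. But, as you yourself note, the Bessel degeneration needs $a_{N}z\to w$ finite, whereas here $a_{N}z\to\infty$, so the two asymptotic regimes overlap and cannot simply be composed; one would need a uniform expansion of ${}_{1}F_{1}$ covering both regimes simultaneously. The paper sidesteps this entirely: after inserting \eqref{deff1-1} and \eqref{deff2} and rescaling $t\mapsto t/(\mu N^{2})$, $s\mapsto s/(\mu N^{2})$, $u\mapsto 4\mu N u$, $v\mapsto 4\mu N v$, it applies the method of steepest descent \emph{directly} to the resulting one-dimensional integrals $g_{1,N}(u)$ and $g_{2,N}(v)$ (the saddle points being at $t_{0}=\sqrt{\xi}/4$ and $s_{0}=\sqrt{\eta}/4$). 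This single Laplace/steepest-descent step produces at once the factors $e^{\sqrt{\xi}u}$, $e^{-\sqrt{\eta}v}$ and the scalar $e^{(\sqrt{\xi}-\sqrt{\eta})/(2\mu N)}$, and it is here that the extra hypothesis $\mu N^{2}\to\infty$ is used (to ensure the saddle lies in the interior of the range of integration). Replacing your two-step outline in (iii) by this direct saddle-point computation removes the uniformity issue you flagged and matches the paper's argument exactly.
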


 \begin{proof} We proceed in a similar way as in the proof of Theorem \ref{hardlimits}.  First, recall the assumptions   \eqref{finiterank} and rewrite the kernel  \eqref{kernelCDJ} as
  \begin{multline}
K_N(x,y)=\frac{ \alpha }{(2\pi i)^2}\Big(\frac{y}{x}\Big)^{\kappa}\int_{\mathcal{C}_{\mathrm{out}}} du \int_{\mathcal{C}_{\mathrm{in}}} dv \, {f_2(\cdot, \kappa; \alpha (1-u)x)}\, {f_1(\cdot, \kappa; \alpha (1-v)y)}
   \\  \times        \frac{1}{u-v} \Big(\frac{v}{u}\Big)^{\nu+m} \bigg(\frac{1-\frac{1}{u}(1-\frac{\delta^{2}}{\alpha^2})}{1-\frac{1}{v}(1-\frac{\delta^{2}}{\alpha^2})}\bigg)^{N-m} \prod_{l=1}^{m}\frac{u-(1-\frac{\delta_{l}^{2}}{\alpha^2})}{v-(1-\frac{\delta_{l}^{2}}{\alpha^2})}. \label{kernelCD2J}\end{multline}

 For Part (i), without loss of generality we assume that $\mu<1$ and $ 1- \delta_{l}^{2}/\alpha^2  \ra \pi_l<1$ for $l=1, \ldots, m$ (otherwise, see the proof of Part (i) of Theorem \ref{hardlimits}),   then  we can  choose the contours such that
  $\mathcal{C}_{\mathrm{in}}$ is   wholly  within $\mathcal{C}_{\mathrm{out}}$.  In order to take limits smoothly,   substituting  \eqref{deff1-1}  and \eqref{deff2}, changing $t$ to $t/N$, $u$ to $(1-\delta^{2}/\alpha^{2})Nu$ and $v$ to $(1-\delta^{2}/\alpha^{2})Nv$, we then use  Fubini's theorem to get
    \begin{align}
\frac{1+\mu }{2N^2} K_N\Big( \frac{1+\mu }{2N^2} \xi, \frac{1+\mu }{2N^2} \eta\Big) &=\left(  \frac{\eta}{\xi}\right)^{\kappa}  \frac{1}{2\pi i}\int_{0}^{\infty}  \frac{dt}{t}  \int_{\mathcal{C}_{\mathrm{0}}}  \frac{ds}{s}  \,     t^{\kappa} s^{-\kappa}
 e^{s-t}  \widetilde{K}_{N}(s,t)  \label{kernelCD3J} \nonumber\end{align}
 where   
 \begin{align}
\widetilde{K}_{N}(s,t)&=N^{-\kappa-1}\frac{ \Gamma(\nu+\nu'+N+1)}{\Gamma(\nu+\nu'-\kappa+N)} \Big(1-\frac{t}{N}\Big)^{\nu+\nu'+N-\kappa-1}e^{t} 1_{[0,N]}(t) \int  du \int  dv  \nonumber\\
& \frac{1}{(2\pi i)^2}\,  \exp\Big\{\Big(u- \frac{(1+\mu)^2 }{4\mu N }\Big)\frac{\xi  }{t}\Big\}  \, \Big(1-\Big(\frac{(1+\mu)^2}{4\mu N} -v \Big)\frac{\eta v}{s}\Big)^{-\nu-\nu'-N-1} \nonumber\\
 &\ \times
     \frac{1}{u-v} \Big(\frac{v}{u}\Big)^{\nu+m} \Big(\frac{1-\frac{1}{Nu} }{1-\frac{1}{Nv}}\Big)^{N-m} \prod_{l=1}^{m}\frac{u-\frac{(1+\mu)^2}{4\mu N}\big(1-\frac{\delta_{l}^{2}}{\alpha^2}\big)}{v-\frac{(1+\mu)^2}{4\mu N}\big(1-\frac{\delta_{l}^{2}}{\alpha^2}\big)}. %%\label{kernelCD4J}
     \end{align}
     Since $\mu N \ra \infty$ as $N \ra \infty$, noting that $0<\mu\leq 1$ and $ 0\leq \delta_l<\alpha$ ($l=1, \ldots, m$),
  application of Lebesgue's dominated convergence theorem provides us
    \begin{align}
\widetilde{K}_{N}(s,t) \rightarrow  \widetilde{K}_{\infty}(s,t):=\frac{1}{(2\pi i)^2}\int_{\mathcal{C}_{\mathrm{out}}} du \int_{\mathcal{C}_{\mathrm{in}}} dv\, e^{\frac{\xi u}{t}-\frac{1}{u}-\frac{\eta v}{s}+\frac{1}{v}}\frac{1}{u-v} \Big(\frac{v}{u}\Big)^{\nu}, \nonumber \end{align}
  from which the requested conclusion follows (cf. eqn\eqref{4.6}).

 We easily verify    Parts (ii) and (iv) as in the proof of Theorem \ref{hardlimits}.  For Part (iii), with \eqref{deff1-1}  and \eqref{deff2}  in mind,   change of variables  $t \mapsto t/(\mu N^2), t \mapsto t/(\mu N^2)$, $u \mapsto4\mu Nu$ and $v \mapsto 4\mu Nv$ in   \eqref{kernelCD2J}, as $N \rightarrow \infty$ we have
 \begin{align}
 \frac{1}{16\alpha \mu^{2}N^3}  &K_N\Big(\frac{\xi}{16\alpha \mu^{2}N^3}, \frac{\eta}{16\alpha \mu^{2}N^3} \Big)\nonumber \\
&\,  \sim \left(  \frac{\eta}{\xi}\right)^{\kappa} \frac{1}{4\mu N}\frac{1}{\Gamma(\kappa+1)} \frac{1}{(2\pi i)^2}\int_{\mathcal{C}_{\mathrm{out}}} du \int_{\mathcal{C}_{\mathrm{in}}} dv \, g_{1,N}(u) g_{2,N}(v)    \nonumber\\ & \times
    \bigg(\frac{1-\frac{1}{(1+\mu)^2Nu}}{1-\frac{1}{(1+\mu)^2Nv}}\bigg)^{N-m} \frac{1}{u-v}\Big(\frac{v}{u}\Big)^{\nu+m}\prod_{l=1}^{m}\frac{u-\pi_{l}}{v-\pi_l}, \label{comb1} \end{align}
 where
 \be g_{1,N}(u)=\int_{0}^{\mu N^2} dt \, t^{\kappa-1}\Big(1-\frac{t}{\mu N^2}\Big)^{\nu+\nu'+N-\kappa-1} \exp\Big\{ \frac{u \xi  }{4t}- \frac{\xi }{16\mu N t } \Big\}   \nonumber\ee
 and
  \be g_{2,N}(v)=\frac{1}{2\pi i}\int_{\mathcal{C}_{0}} ds\,  s^{-\kappa-1}  e^{\frac{s}{\mu N}} \Big(1+\frac{v\eta}{4Ns}- \frac{\eta }{16\mu N^{2} s } \Big)^{-\nu-\nu'-N-1}.  \nonumber \ee

 It suffices to find the leading coefficients for both functions. For this purpose  we  use the method of steepest decent; see e.g.   \cite{wong01}.  For  $g_{1,N}(u)$, we use the inequality  $1-x\leq e^{x}$ ($0\leq x\leq 1$) to get
 \begin{align}|g_{1,N}(u)| \leq \int_{0}^{\mu N^2} dt \, |e^{\frac{u \xi  }{4t}}| t^{\kappa-1}  \exp\Big\{ -\frac{1}{\mu N}\Big(t+ \frac{\xi }{16 t }\Big)-(\nu+\nu'-\kappa-1)\frac{t}{\mu N^2}\Big\}. \nonumber\end{align}
 Note that the function $t+  \xi/(16 t)$ attains its unique minimum at $t_0=\sqrt{\xi}/4$ over $(0,\infty)$ and both $1/(\mu N)$ and $\mu N^2$ go  to infinity,  the leading contribution must come from the neighbourhood of  $t_0$. By   Taylor expansion, we easily see that
 \be g_{1,N}(u)\sim 2^{1-2\kappa}\sqrt{\pi \mu N}  \xi^{\frac{\kappa}{2}-\frac{1}{4}} e^{ \sqrt{\xi}u-\frac{\sqrt{\xi}}{2\mu N}}.  \label{comb2}\ee

  For  $g_{2,N}(v)$, noting
  \begin{align} g_{2,N}(v) \sim
  \frac{1}{2\pi i}\int_{\mathcal{C}_{0}}  ds\,  s^{-\kappa-1}   \exp\Big\{
   \frac{1}{\mu N} \big(s+\frac{\eta}{16s}\big)-\frac{v\eta}{4s}
   \Big\}, \nonumber
  \end{align}
  let  $\mathcal{C}_{0}$ be a circle of radius $\sqrt{\eta}/4$, it is easy to verify that $\mathrm{Re}\{s+\frac{\eta}{16s}\}$ attains a unique maximum at  $s_0=\sqrt{\eta}/4$. Thus the  steepest decent   argument leads us to
  \be g_{2,N}(v)\sim 2^{2\kappa}\sqrt{ \mu N/\pi}  \xi^{-\frac{\kappa}{2}-\frac{1}{4}} e^{ -\sqrt{\eta}v+\frac{\sqrt{\xi}}{2\mu N}}.  \label{comb3}\ee

   Substitution of  \eqref{comb2} and  \eqref{comb3} in \eqref{comb1} completes Part (iii).

Obviously,   the results hold uniformly for $\xi, \eta$ in  a given compact set  of $(0, \infty)$.
   \end{proof}

Compare Part (iii) in  Theorems \ref{hardlimits} and  \ref{hardlimitsJ},   there is a technical restriction on the rate of  $\mu N$ in the latter. We believe this can be removed such that the same result holds true as in the former.

\section{On the four limiting  kernels} \label{4kernels}

\subsection{Comparison}
We first introduce a few families of contour integrals and rewrite the kernels defined as before.  Setting
 \be \tilde{\Lambda}^{(k)}_{\mathrm{II}}(x)=\frac{1}{\pi i}\int_{\mathcal{C}_{0}} du \,K_{-\kappa}(2\sqrt{(1-u)x}) (1-u)^{\kappa/2}u^{-\nu-m} e^{-\frac{\tau}{u}}  \prod_{l=1}^{k-1}(u- \pi_{l}),\ee
 \be \Lambda^{(k)}_{\mathrm{II}}(x)=\frac{1}{2\pi i}\int_{\mathcal{C}_{\pi}} dv \, I_{\kappa}(2\sqrt{(1-v)x}) (1-v)^{-\kappa/2} v^{\nu+m} e^{\frac{\tau}{v}}  \prod_{l=1}^{k}\frac{1}{v- \pi_{l}}\ee
where $\mathcal{C}_{0}$ denotes a contour enclosing the origin and   $\mathcal{C}_{\pi}$  encloses  $\pi_1, \ldots, \pi_k$,
and
\begin{align}
K_{\mathrm{II}}^{(0)}(\tau;\xi,\eta)&= \frac{2}{(2\pi i)^2}\int_{\mathcal{C}_{\mathrm{out}}} du \int_{\mathcal{C}_{\mathrm{in}}} dv \,  K_{-\kappa}(2\sqrt{(1-u)\xi})\,  I_{\kappa}(2\sqrt{(1-v)\eta}) \nonumber\\  &  \quad \times e^{-\frac{\tau}{u}+\frac{\tau}{v}}\frac{1}{u-v} \Big(\frac{1-u}{1-v}\Big)^{\kappa/2}\Big(\frac{u}{v}\Big)^{-\nu-m}, \label{kernelcrit0}\end{align}
then the use of the identity (see e.g.   \cite[Eq.(5.12)]{DF06})
  \be \frac{1}{u-v}\prod_{l=1}^{m}\frac{u-\pi_l}{v-\pi_l}=\frac{1}{u-v}+\sum_{k=1}^m \frac{\prod_{l=1}^{k-1}(u-\pi_l)}{\prod_{l=1}^{k}(v-\pi_l)} \label{identicalrelation}\ee
  immediately gives us
\be K_{\mathrm{II}}(\tau;\xi,\eta)=K_{\mathrm{II}}^{(0)}(\tau;\xi,\eta)+\sum_{k=1}^{m}\tilde{\Lambda}^{(k)}_{\mathrm{II}}(\xi)\Lambda^{(k)}_{\mathrm{II}}(\eta). \label{IIsum}\ee
Likewise, setting
 \be \tilde{\Lambda}^{(k)}_{\mathrm{III}}(x)= \frac{1}{  2\pi i } \frac{1}{2\xi^{\frac{1}{4}}} \int_{\mathcal{C}_{0}} du \, e^{\sqrt{\xi}u-\frac{1}{u}} u^{-\nu-m} \prod_{l=1}^{k-1}(u- \pi_{l}),\ee
 \be \Lambda^{(k)}_{\mathrm{III}}(x)= \frac{1}{  2\pi i } \frac{1}{2\eta^{\frac{1}{4}}} \int_{\mathcal{C}_{\pi}} dv \, e^{-\sqrt{\xi}v+\frac{1}{v}} v^{\nu+m} \prod_{l=1}^{k}\frac{1}{v- \pi_{l}}\ee
and
\begin{align}
K_{\mathrm{III}}^{(0)}(\xi,\eta) &= \frac{2}{ (2\pi i)^2} \frac{1}{4(\xi \eta)^{\frac{1}{4}}} \int_{\mathcal{C}_{\mathrm{out}}} du \int_{\mathcal{C}_{\mathrm{in}}} dv \, e^{\sqrt{\xi}u-\sqrt{\eta}v-\frac{1}{u}+\frac{1}{v}}   \nonumber \\
 & \quad \times \frac{1}{u-v}\Big(\frac{u}{v}\Big)^{-\nu-m}, \label{kernelsup0}\end{align}
then
\be K_{\mathrm{III}}(\xi,\eta)=K_{\mathrm{III}}^{(0)}(\xi,\eta)+2\sum_{k=1}^{m}\tilde{\Lambda}^{(k)}_{\mathrm{III}}(\xi)\Lambda^{(k)}_{\mathrm{III}}(\eta). \label{IIIsum}\ee
Again, by defining
 \be \tilde{\Lambda}^{(k)}_{\mathrm{IV}}(x)=\frac{1}{\pi i}\int_{\mathcal{C}_{0}} du \,K_{-\kappa}(2\sqrt{(1-u)x})  (1-u)^{\kappa/2}  u^{-\nu-m}    \prod_{l=1}^{k-1}(u- \pi_{l}),\ee
 \be \Lambda^{(k)}_{\mathrm{IV}}(x)=\frac{1}{2\pi i}\int_{\mathcal{C}_{\pi}} dv \, I_{\kappa}(2\sqrt{(1-v)x})  (1-v)^{-\kappa/2} v^{\nu+m}   \prod_{l=1}^{k}\frac{1}{v- \pi_{l}}\ee
we have
\be K_{\mathrm{IV}}(\xi,\eta)=\sum_{k=1}^{m}\tilde{\Lambda}^{(k)}_{\mathrm{IV}}(\xi)\Lambda^{(k)}_{\mathrm{IV}}(\eta). \label{IVsum}\ee

 Next, we compare the four  limiting kernels   with the known limiting kernels in random matrix theory one after another.
When $\kappa=0$ and $m=0$, according to  the result of  Akemann and Strahov (cf. \cite[Theorem 3.9]{AS15}), the kernel $K_{\mathrm{I}}(x,y)$ is expected to be the Meijer G-kernel $K_{\nu,0}(x,y)$ up to a   transformation  like $f(x)/f(y)$ for some function $f(x)$ where
\begin{align}K_{\nu,\kappa}(x,y)  =&
 {1 \over (2 \pi i)^2} \int_{-1/2 - i \infty}^{-1/2 + i \infty} du  \oint_{\Sigma} dv \, {\sin \pi u \over \sin \pi v} \nonumber \\
 & \ \times     { \Gamma(u+1 ) \Gamma(\nu + u+1 )\Gamma(\kappa + u+1 ) \over   \Gamma(v+1) \Gamma(\nu + v+1 )\Gamma(\kappa + v+1 ) }
     {x^v y^{-u-1} \over u - v} \label{MeijerG2}
\end{align}
with $\Sigma$ a contour enclosing the positive real axis but not $u$. Actually, since the case of $\mu=1$ (cf.  \eqref{matrixpdf} and  \eqref{finiterank} in Sect. \ref{sectionintroduction}) reduces to the product of two independent Gaussian  rectangular matrices,  $K_{\mathrm{I}}(x,y)$   is strongly believed to be $K_{\nu,\kappa}(x,y)$, which was first found by Kuijlaars and Zhang \cite{KZ} in this context,   up to a   factor  $f(x)/f(y)$. They are indeed equal  according to the following proposition. Actually,  this  type of convolution representation has been obtained in  the product of two independent random matrices   for finite   matrix size $N$,  see \cite[Theorem 2.8(b)]{CKW15}.  Thus the limiting case is also expected. 
\begin{prop}  For the correlation kernels    \eqref{kernelsub} and \eqref{MeijerG2}, we have
\be
K_{\mathrm{I}}(\xi,\eta) = \left(   \eta/\xi \right)^{\kappa/2}   K_{\nu,\kappa}(\eta,\xi).  \label{Ikernelrelation}
\ee
  \end{prop}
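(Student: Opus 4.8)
The strategy is to evaluate the double integral \eqref{kernelsub} defining $K_{\mathrm{I}}$ by inserting a Mellin--Barnes representation of the Bessel kernel, performing the two elementary integrations over $t$ and $s$, and recognising what remains as the Meijer $G$-kernel integral \eqref{MeijerG2}, after a linear change of the Mellin variables and an application of Euler's reflection formula.

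In more detail, I would first rewrite the factor $4(\xi s/\eta t)^{\nu/2}K_{\nu}^{(\mathrm{Bes})}(4\eta/s,4\xi/t)$ in \eqref{kernelsub} by combining the contour representation \eqref{4.5}--\eqref{4.6} with the Mellin--Barnes form of the Bessel function obtained from its power series \eqref{Ifunction} and Hankel's formula \eqref{inversegamma}; equivalently, one may start from $K_{\nu}^{(\mathrm{Bes})}(x,y)=\frac{1}{4}\int_{0}^{1}J_{\nu}(\sqrt{xr})\,J_{\nu}(\sqrt{yr})\,dr$ together with $J_{\nu}(2\sqrt{w})=\frac{1}{2\pi i}\int_{(c)}\frac{\Gamma(-z)}{\Gamma(\nu+z+1)}w^{z+\nu/2}\,dz$ for $-\nu/2<c<0$. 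After this substitution the integrand of $K_{\mathrm{I}}$ factorises so that its $t$-dependence is the pure power $t^{\kappa-\nu-w-1}$ and its $s$-dependence the pure power $s^{-\kappa-z-1}$ --- the prefactor $(\xi s/\eta t)^{\nu/2}$ being present precisely to clear the half-integer powers --- whence
\[
\int_{0}^{\infty}t^{\kappa-\nu-w-1}e^{-t}\,dt=\Gamma(\kappa-\nu-w),\qquad \frac{1}{2\pi i}\int_{\mathcal{C}_{0}}s^{-\kappa-z-1}e^{s}\,ds=\frac{1}{\Gamma(\kappa+z+1)},
\]
the second being Hankel's formula \eqref{inversegamma}, while the remaining auxiliary integration (over $r\in[0,1]$, or over the extra contour in \eqref{4.5}) produces the factor $(z+w+\nu+1)^{-1}$. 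What is left is a double Mellin--Barnes integral in $(z,w)$; the substitution $w=-u-1-\nu$, $z=v$ turns $z+w+\nu+1$ into $-(u-v)$, the surviving powers into $\xi^{-u-1}\eta^{v}$, and the gamma quotient into $\Gamma(-v)\Gamma(\nu+u+1)\Gamma(\kappa+u+1)\big/\big(\Gamma(-u)\Gamma(\nu+v+1)\Gamma(\kappa+v+1)\big)$. Using $\Gamma(-v)/\Gamma(-u)=\frac{\sin\pi u}{\sin\pi v}\,\Gamma(u+1)/\Gamma(v+1)$ (Euler reflection) and opening the $v$-line into a loop $\Sigma$ around the positive axis, this is exactly $(\eta/\xi)^{\kappa/2}$ times the right-hand side of \eqref{MeijerG2} at $(x,y)=(\eta,\xi)$; the two sign changes arising from the line-to-loop deformation and from the antisymmetry of $(u-v)^{-1}$ cancel, leaving the ``$+$'' sign in \eqref{Ikernelrelation}.

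The main obstacle is not the algebra but justifying the interchanges of integration. Expanding the Bessel functions in power series and integrating term by term is illegitimate --- it produces $\Gamma(\kappa-\nu-\ell)$, which diverges for large $\ell$ --- so one must keep the Mellin--Barnes contours and place $\mathrm{Re}\,z=c$, $\mathrm{Re}\,w=c'$ in the common strip $-\nu/2<c$, $-\nu/2<c'<\kappa-\nu$, on which \emph{all} the one-dimensional integrals above converge absolutely, so that Fubini's theorem legitimises every step. That strip is nonempty only when $\kappa>\nu/2$; to cover the remaining admissible parameters --- in particular all nonnegative integers $\nu,\kappa$ --- the cleanest device is to prove \eqref{Ikernelrelation} first for real $(\nu,\kappa)$ in that range and then extend by analytic continuation, both sides being analytic in $\nu$ and in $\kappa$ on a suitable domain. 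One also has to check that the final contour deformations cross no poles: for $c,c'$ in the above strip the $u$-line can be moved to $\mathrm{Re}\,u=-1/2$ freely (all poles of $\Gamma(\nu+u+1)$ and $\Gamma(\kappa+u+1)$ remain to its left), and moving the $v$-line to $\Sigma$ merely collects the residues at $v=0,1,2,\dots$, which is what $\oint_{\Sigma}$ denotes.

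Finally, I would record that \eqref{Ikernelrelation} can also be obtained indirectly: taking $m=0$ and $\mu=1$ in \eqref{matrixpdf}, \eqref{finiterank} forces $\Omega=0$, so $X_{1}$ and $X_{2}$ become independent Ginibre matrices and $Y_{2}=X_{1}X_{2}$ is the product of two independent rectangular Gaussian matrices; Part (i) of Theorem \ref{hardlimits} then identifies $K_{\mathrm{I}}$ with the hard-edge scaling limit of that product, which is the Meijer $G$-kernel $K_{\nu,\kappa}$ by Kuijlaars and Zhang \cite{KZ} (see also \cite{BGS14,CKW15}). The proposition above makes this identification explicit at the level of the two double-integral representations, which is the point of recording it.
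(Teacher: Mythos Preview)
Your approach is correct in outline but genuinely different from the paper's. Both proofs begin from the same representation $4K_{\nu}^{(\mathrm{Bes})}(4x,4y)=\int_{0}^{1}J_{\nu}(2\sqrt{xw})J_{\nu}(2\sqrt{yw})\,dw$, but then diverge. You insert a Mellin--Barnes integral for each $J_{\nu}$, reduce the $t$-, $s$-, and $w$-integrals to gamma values, and match the resulting double Mellin--Barnes integral directly with the contour form \eqref{MeijerG2}. The paper instead keeps the $w$-integral intact and evaluates the $s$- and $t$-integrals in closed form: the $s$-integral is summed term by term to a ${}_{0}F_{2}$, recognised as $G^{1,0}_{0,3}\!\big({\,\cdot\,}\,|\,\eta w\big)$, while the $t$-integral is handled by the Mellin convolution formula for Meijer $G$-functions to give $G^{2,0}_{0,3}\!\big({\,\cdot\,}\,|\,\xi w\big)$; the remaining $\int_{0}^{1}G^{1,0}_{0,3}\,G^{2,0}_{0,3}\,dw$ is then identified with $K_{\nu,\kappa}(\eta,\xi)$ via Theorem~5.3 of \cite{KZ}, not via \eqref{MeijerG2}.

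What each route buys: the paper's argument works uniformly for all nonnegative integers $\nu,\kappa$ with no convergence restriction and no analytic continuation, at the price of invoking the Meijer $G$-function calculus and an external identification from \cite{KZ}. Your route is more self-contained and lands directly on \eqref{MeijerG2}, but you have correctly flagged its weak point: the $t$-integral $\int_{0}^{\infty}t^{\kappa-\nu-w-1}e^{-t}\,dt$ forces $\mathrm{Re}\,w<\kappa-\nu$, which together with the Mellin--Barnes placement squeezes you into a parameter range and then an analytic-continuation step. One further technicality you should make explicit: the contour $\mathcal{C}_{0}$ in \eqref{kernelsub} is a closed loop, so before applying Hankel's formula with the non-integer exponent $s^{-\kappa-z-1}$ you must deform it to a Hankel contour (legitimate since the original integrand is entire in $s$); otherwise the branch at $s=0$ is not handled. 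Your final paragraph, deriving \eqref{Ikernelrelation} indirectly from the $\mu=1$ degeneration and the Kuijlaars--Zhang limit, is a sound alternative argument that the paper mentions only as motivation.
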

\begin{proof} Start from the representation of the Bessel kernel (see e.g. \cite[Example 3.1]{Bo98} and \cite[Sect. 5.3]{KZ})
\be 4 K_{\mathrm{\nu}}^{(\mathrm{Bes})}\big(4x,4y\big) =\int_{0}^{1}  J_{\nu}(2\sqrt{xw})J_{\nu}(2\sqrt{yw}) dw,\ee
we have
\begin{align}
\left(   \eta/\xi \right)^{-\kappa/2} K_{\mathrm{I}}(\xi,\eta) &= \int_{0}^{1} dw  \frac{1}{2\pi i}  \int_{\mathcal{C}_{\mathrm{0}}}  ds  \,     s^{-\kappa-1}
 e^{s} \left(  \frac{\eta w }{s}\right)^{-\nu/2} J_{\nu}\Big(2\sqrt{\frac{\eta w}{s}}\Big) \nonumber \\
  & \quad \times       \int_{0}^{\infty}  dt\, t^{\kappa-1}  e^{-t}  \left(  \frac{\xi w}{t}\right)^{\nu/2} J_{\nu}\Big(2\sqrt{\frac{\xi w}{t}}\Big).
 \end{align}

Integrate term by term and then use the relation between hypergeometric functions and Meijer G-functions (cf. \cite[Sect. 5.2]{Luke}), we get
\begin{align}
 &\frac{1}{2\pi i}  \int_{\mathcal{C}_{\mathrm{0}}}  ds  \,     s^{-\kappa-1}
 e^{s} \left(  \frac{\eta w }{s}\right)^{-\nu/2} J_{\nu}\Big(2\sqrt{\frac{\eta w}{s}}\Big) \nonumber\\
 &=\frac{1}{\Gamma(\kappa+1)} \frac{1}{\Gamma(\nu+1)} {_{0}F_{2}}(\kappa+1,\nu+1;-\eta w) \nonumber \\
 &=G^{1,0}_{0,3} \Big ({\underline{\hspace{0.5cm}}
 \atop 0, -\nu,-\kappa} \Big |\eta w \Big ).
 %%%\frac{1}{\Gamma(\nu+1)}\frac{1}{2\pi i}  \int_{\mathcal{C}_{\mathrm{0}}}  ds  \,     s^{-\kappa-1}  e^{s}   {_{0}F_{1}}\Big(2\sqrt{\frac{\eta w}{s}}\Big)
  \end{align}
 On the other hand, noting
 \be \left(  \frac{\xi w}{t}\right)^{\nu/2} J_{\nu}\Big(2\sqrt{\frac{\xi w}{t}}\Big)=G^{1,0}_{0,2} \Big ({\underline{\hspace{0.5cm}}
 \atop \nu,0} \Big |\frac{\xi w}{t} \Big ), \ee
  the Mellin convolution formula (see e.g. \cite[Appendix eqn(A.3)]{KS14}) gives us
 \be \int_{0}^{\infty}  dt\, t^{\kappa-1}  e^{-t} G^{1,0}_{0,2} \Big ({\underline{\hspace{0.5cm}}
 \atop \nu,0} \Big |\frac{\xi w}{t} \Big )=G^{2,0}_{0,3} \Big ({\underline{\hspace{0.5cm}}
 \atop \nu, \kappa, 0} \Big | \xi w  \Big ).\ee

Therefore,
\begin{align}
\left(   \eta/\xi \right)^{-\kappa/2} K_{\mathrm{I}}(\xi,\eta) &= \int_{0}^{1} dw \,G^{1,0}_{0,3} \Big ({\underline{\hspace{0.5cm}}
 \atop 0, -\nu,-\kappa} \Big |\eta w \Big ) G^{2,0}_{0,3} \Big ({\underline{\hspace{0.5cm}}
 \atop \nu, \kappa, 0} \Big | \xi w  \Big ). \label{IMrep}
 \end{align}
 By  Theorem 5.3 of \cite{KZ} (noting $\nu_0=0$ therein), the RHS of \eqref{IMrep} is indeed  another integral  representation of the kernel $K_{\nu,\kappa}(\eta,\xi)$, from which     the desired result  immediately  follows.
  \end{proof}

 Here it's worth stressing that the   Meijer G-kernels $K_{\nu,\kappa}(x,y)$ already appeared in the works of  Bertola,    Gekhtman  and   Szmigielski on the Cauchy-Laguerre two matrix model \cite{BGS14}, Kuijlaars and Zhang \cite{Ku15} on products of two independent Gaussian  rectangular matrices, Forrester on the product with the inverse \cite{Fo14}. This shows that the kernel  is  universal.  It's probably  worth  pointing out  that although the  Borodin's kernel from  \cite{Bo98}  can be written in terms of Meijer G-functions, it does not agree with kernels stemming from products of random matrices as the indices obtained are different. See   \cite{KS14} for the inter-relation between Borodin's kernel and Meijer G-kernels.

Under the same conditions of $\kappa=0$ and $m=0$, at the critical scale of $\mu=g/N$ with $g\in (0, \infty)$,  for the rescaled kernel  $(1/(4N^2))K_{N}(x^2/(4N^2), y^2/(4N^2))$ Akemann and Strahov  obtained the hard edge limiting kernel defined by
\begin{align}\mathbb{S}(x,y;g)&= {4 \over (x^{2}-y^{2})g}
 {1 \over (2 \pi i)^2}  \oint_{\Sigma} du  \oint_{\Sigma} dv \,  {\Gamma(-u)\Gamma(-v)   \over    \Gamma(u+\nu +1 )\Gamma(v+\nu +1 )} x^v y^{u+\nu} \nonumber \\
 & \quad  \times  \Big ( A(u,v,\nu)-g(u^{2}+v^{2}-uv+\nu u)\Big)I_{v}\Big(\frac{x}{2g}\Big) K_{u+\nu}\Big(\frac{y}{2g}\Big) \label{ASrep}
\end{align}
 where  $\Sigma$ is a contour enclosing the positive real axis and
 \be  A(u,v,\nu)=\frac{1}{4}(v-u)\big(u^{2}+v^{2}+(\nu-1)(u+v)-\nu\big),\ee
 see \cite[Theorem 1.5 (b)]{AS15b}. In this case it  remains as a challenge for us  to directly verify   the equivalence of both the critical kernels $\mathbb{S}(x,y;g)$ and $K_{\mathrm{II}}^{(0)}(\tau;x,y)$. However, using integral representations of Bessel functions and noting \eqref{4.5} and \eqref{4.6}, it is easy to rewrite  the   kernel defined by \eqref{kernelcrit0} in terms of the Bessel kernel  as
  \begin{align}
K_{\mathrm{II}}^{(0)}(\tau;\xi,\eta) &= \left(  \frac{\xi}{\eta}\right)^{\kappa/2}  \frac{1}{2\pi i}\int_{0}^{\infty}  dt  \int_{\mathcal{C}_{\mathrm{0}}}  ds  \,     t^{\kappa-1} s^{-\kappa-1}
 e^{\eta s-\xi t+\frac{1}{s}-\frac{1}{t}}  \nonumber \\ & \quad \times    4\tau \left(  \frac{ s}{ t}\right)^{(\nu+m)/2} K_{\mathrm{\nu+m}}^{(\mathrm{Bes})}\Big(\frac{4\tau}{s}, \frac{4\tau}{t}\Big). \label{kernelsub-2}\end{align}

  With  change of variables, the kernel $K_{\mathrm{III}}$   has been identified by Desrosiers and Forrester \cite{DF06} as the hard edge limiting kernel  for the spiked complex sample covariance matrices. In particular,  we have the following relation (cf.  \cite[eqns (1.20) and (6.20)]{DF06})
 \be K_{\mathrm{III}}^{(0)}(\xi,\eta)=( \xi/\eta)^{(\nu+m)/4} 2(\xi \eta)^{-1/4}  K_{\mathrm{\nu+m}}^{(\mathrm{Bes})}\big(4\sqrt{\eta},4\sqrt{\xi}\big).
 \ee

The fourth kernel $K_{\mathrm{IV}}(x,y)$  is essentially the kernel \eqref{kernelCD} for the product of two coupled Gaussian random matrices but with $N\mapsto m, \alpha \mapsto 1$ and $\delta_{l}^{2} \mapsto 1-\pi_l,  l=1, \ldots, m$. Moreover,  as the correlation kernel of a determinantal point process it  corresponds to the joint PDF given in \eqref{eigenvaluepdfGinibre}. So in that sense,  it appears as one of  limiting kernels for the smallest  singular values in random matrix theory,    like  the   Gaussian Unitary Ensemble with source   for the  largest eigenvalues  or the noncentral Wishart matrices (also being called as shifted mean chiral Gaussian matrices)  for the smallest  singular values; see \cite{BBP} and \cite{FL15b}. Particularly for the case of $m=1$, note  \eqref{eigenvaluepdfGinibre} and Remark \ref{normconvergence}, under the same assumptions as in Theorem \ref{hardlimits} (iv) we have  
\be \mathbb{P}( x_1 \leq 4 \mu^{2}y,    \ldots,   x_N \leq 4 \mu^{2} y) \rightarrow  \frac{2(1-\pi_{1})}{\Gamma(\nu+1)\pi_{1}^{\kappa/2}}\int_{0}^{y} t^{\frac{\nu}{2}} K_{\nu-\kappa}(2\sqrt{t}) I_{\kappa}(2 \sqrt{ \pi_1 t})dt \label{goodform}\ee
as $N\rightarrow \infty$.

  Finally, we conclude this subsection with a transition from the critical kernel   $K_{\mathrm{II}}(x,y)$ to the other three kernels, which shows that $K_{\mathrm{II}}(x,y)$ is an interpolation between them.
This is to be expected, as then the parameter   effectively $\mu\sim \tau/(4N)$ and the coupled product  tends to  the classical Laguerre Unitary Ensemble  as $\mu \to 0$ while  it corresponds to   the product of two independent Gaussian random matrices as $\mu \to 1$. For $\kappa=0$ and $m=0$, similar resutls been obtained by   Akemann and Strahov \cite{AS15b}.

\begin{theorem}   \label{hardlimitstransition}
 With  the   kernels defined in \eqref{kernelsub}--\eqref{kernelsupsup}, the following hold  uniformly for any $x$ and $y$ in a compact set of $(0,\infty)$.

\begin{itemize}

 \item [(i)]  \be \lim_{\tau\ra \infty}\frac{1}{\tau}K_{\mathrm{II}}(\tau;\frac{x}{\tau},\frac{y}{\tau}) =K_{\mathrm{I}}(x,y). \nonumber \ee
 \item [(ii)] Given $q\leq m$, suppose that   $\pi_{l}=\tau \hat{\pi}_l $ for $l=1, \ldots, q$ and  $\pi_{q+1}, \ldots, \pi_{m}$  are fixed,  then
 \be \lim_{\tau\ra 0} e^{\frac{2}{\tau}(\sqrt{x}-\sqrt{y})}  \frac{1}{\tau^2}K_{\mathrm{II}}(\tau;\frac{x}{\tau^2},\frac{y}{\tau^2}) =K_{\mathrm{III}}(x,y)|_{m\mapsto q, \nu\mapsto \nu+m-q, \pi\mapsto \hat{\pi}}. \nonumber \ee
 \item [(iii)] \be \lim_{\tau\ra 0}K_{\mathrm{II}}(\tau;x,y) =K_{\mathrm{IV}}(x,y).\nonumber \ee
\end{itemize}
 \end{theorem}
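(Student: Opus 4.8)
The plan is to read off all three limits straight from the double contour representation \eqref{kernelcrit} of $K_{\mathrm{II}}$: in each case one rescales the two integration variables by a factor $\tau$, checks that the resulting contours may be frozen once $\tau$ is large (resp.\ small) enough, and passes to the limit under the integral sign by dominated convergence. The auxiliary tools are the Bessel integral representations \eqref{Kfunction}, \eqref{IBessel} (used when the arguments of $K_{-\kappa}$, $I_{\kappa}$ become small, i.e.\ for the degeneration to $K_{\mathrm{I}}$) and the large-argument asymptotics \eqref{asymptoticsIK0}, \eqref{asymptoticsIK} (used when those arguments become large, i.e.\ for the degeneration to $K_{\mathrm{III}}$). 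Part (iii) is then immediate: in \eqref{kernelcrit} the contours $\mathcal{C}_{\mathrm{out}}$, $\mathcal{C}_{\mathrm{in}}$ can be chosen once and for all, exactly as in \eqref{kernelsupsup}; both stay at a positive distance from the origin, so $e^{-\tau/u+\tau/v}\ra1$ uniformly on $\mathcal{C}_{\mathrm{out}}\times\mathcal{C}_{\mathrm{in}}$ as $\tau\ra0$ while the rest of the integrand is continuous, hence bounded, there, and letting $\tau\ra0$ inside the integral turns \eqref{kernelcrit} into \eqref{kernelsupsup}.

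For Part (i) I would substitute \eqref{Kfunction} and \eqref{IBessel} into \eqref{kernelcrit}; exactly as in the passage to \eqref{kernelCD3}, the half-integer powers $(1-u)^{\pm\kappa/2}$, $(1-v)^{\pm\kappa/2}$ coalesce into the scalar $(\eta/\xi)^{\kappa/2}$, so that
\be
K_{\mathrm{II}}(\tau;\xi,\eta)=\Big(\frac{\eta}{\xi}\Big)^{\kappa/2}\frac{1}{2\pi i}\int_{0}^{\infty}dt\int_{\mathcal{C}_{0}}ds\;t^{\kappa-1}s^{-\kappa-1}e^{s-t}e^{-\xi/t+\eta/s}\,\widehat{K}_{\mathrm{II}}\Big(\tau;\frac{\xi}{t},\frac{\eta}{s}\Big),
\ee
where $\widehat{K}_{\mathrm{II}}(\tau;a,b)=\frac{1}{(2\pi i)^{2}}\int_{\mathcal{C}_{\mathrm{out}}}du\int_{\mathcal{C}_{\mathrm{in}}}dv\,e^{au-bv-\tau/u+\tau/v}\frac{1}{u-v}(u/v)^{-\nu-m}\prod_{l=1}^{m}\frac{u-\pi_{l}}{v-\pi_{l}}$ has an integrand meromorphic in $(u,v)$ with singularities only at $u=0$, $v=0$ and $u=v$, so the contours may be deformed freely. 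Setting $\xi=x/\tau$, $\eta=y/\tau$ and then $u=\tau U$, $v=\tau V$ converts $e^{-\tau/u+\tau/v}$ into $e^{-1/U+1/V}$, produces a Jacobian $\tau$ from $du\,dv/(u-v)$ that cancels the prefactor $1/\tau$, replaces $\prod_{l}\frac{u-\pi_{l}}{v-\pi_{l}}$ by $\prod_{l}\frac{U-\pi_{l}/\tau}{V-\pi_{l}/\tau}$, and makes $e^{-\xi/t+\eta/s}=e^{-x/(\tau t)+y/(\tau s)}\ra1$. The rescaled contours may be taken fixed for $\tau$ large, just as in \eqref{kernelCD5-11}--\eqref{4.5}, and on them $\prod_{l}\frac{U-\pi_{l}/\tau}{V-\pi_{l}/\tau}\ra(U/V)^{m}$ uniformly, so by dominated convergence $\frac{1}{\tau}\widehat{K}_{\mathrm{II}}(\tau;\frac{x}{\tau t},\frac{y}{\tau s})$ tends to the kernel $\widetilde{K}_{\infty}$ of \eqref{4.5} (with $(U/V)^{m}$ collapsing $(U/V)^{-\nu-m}$ to $(U/V)^{-\nu}$, which is why the $m$-dependence drops out). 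A second dominated-convergence step in the outer $(t,s)$-integral, justified as in the proof of Theorem \ref{hardlimits}(i) by the boundedness of the Bessel kernel, then yields \eqref{kernelsub}, i.e.\ $K_{\mathrm{I}}(x,y)$.

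For Part (ii) I would keep \eqref{kernelcrit}, set $\xi=x/\tau^{2}$, $\eta=y/\tau^{2}$ (so that $K_{-\kappa}$, $I_{\kappa}$ acquire the large arguments $\frac{2}{\tau}\sqrt{(1-u)x}$, $\frac{2}{\tau}\sqrt{(1-v)y}$) and substitute $u=\tau U$, $v=\tau V$. The subtle point is that $\mathcal{C}_{\mathrm{in}}$ encircles $0,\pi_{1},\dots,\pi_{m}$, hence after rescaling it encircles $0,\hat\pi_{1},\dots,\hat\pi_{q}$ together with $\pi_{q+1}/\tau,\dots,\pi_{m}/\tau\ra\infty$; one therefore deforms $\mathcal{C}_{\mathrm{in}}$ into a small loop around $\{0,\pi_{1},\dots,\pi_{q}\}$ --- a fixed loop around $\{0,\hat\pi_{1},\dots,\hat\pi_{q}\}$ after $v=\tau V$ --- plus $m-q$ tiny loops around the fixed points $\pi_{q+1},\dots,\pi_{m}$. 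On each of the latter the residue carries a factor $e^{\frac{2}{\tau}\sqrt{(1-\pi_{l})y}}$ from $I_{\kappa}$, which against the prepended $e^{-\frac{2}{\tau}\sqrt{y}}$ and the $e^{-\frac{2}{\tau}\sqrt{x}}$ coming out of $K_{-\kappa}$ leaves $e^{-\frac{2}{\tau}(\sqrt{y}-\sqrt{(1-\pi_{l})y})}$; since $\pi_{l}\in(0,1)$ this decays exponentially in $1/\tau$ and overwhelms the at-most-polynomial $\tau$-blowup of the remaining factors, so these residues drop out. On the surviving rescaled piece $1-\tau U\ra1$, $1-\tau V\ra1$, so \eqref{asymptoticsIK0}, \eqref{asymptoticsIK} apply uniformly; with $K_{-\kappa}(z)\sim\sqrt{\pi/(2z)}\,e^{-z}$, $I_{\kappa}(z)\sim e^{z}/\sqrt{2\pi z}$ and $\frac{2}{\tau}(1-\sqrt{1-\tau U})=\frac{2U}{1+\sqrt{1-\tau U}}\ra U$, the prefactor $e^{\frac{2}{\tau}(\sqrt{x}-\sqrt{y})}$ combines with the Bessel exponentials into $e^{\sqrt{x}\,U-\sqrt{y}\,V}(1+o(1))$ uniformly, the algebraic prefactors of the asymptotics supply exactly the powers of $\tau$ that cancel $1/\tau^{2}$ and the Jacobian while leaving the constant $\frac{1}{4(xy)^{1/4}}$, and $(\frac{1-u}{1-v})^{\kappa/2}\ra1$, $\prod_{l=1}^{m}\frac{u-\pi_{l}}{v-\pi_{l}}\ra\prod_{l=1}^{q}\frac{U-\hat\pi_{l}}{V-\hat\pi_{l}}$; dominated convergence, with majorant furnished by the boundedness of $\frac{2U}{1+\sqrt{1-\tau U}}$ on a fixed bounded contour, then gives \eqref{kernelsup} with $m\mapsto q$, $\nu\mapsto\nu+m-q$ (reading $(U/V)^{-\nu-m}$ as $(U/V)^{-(\nu+m-q)-q}$) and $\pi\mapsto\hat\pi$.

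The main obstacle is Part (ii): controlling the large-argument asymptotics uniformly on the rescaled contours --- which forces $\tau$ small so that the relevant arguments stay in the sectors where \eqref{asymptoticsIK0}, \eqref{asymptoticsIK} hold, and forces the contours chosen so that $\mathrm{Re}\,\sqrt{1-\tau U}>0$ along them --- and, above all, handling the poles $\pi_{q+1},\dots,\pi_{m}$ that the rescaling sweeps off to infinity, for which the exponential suppression of their residues is the crux; the remaining dominated-convergence bookkeeping in all three parts is routine.
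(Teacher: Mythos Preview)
Your proposal is correct and follows the paper's route exactly: rescale $(u,v)\mapsto(\tau u,\tau v)$ in \eqref{kernelcrit}, then invoke the Bessel integral representations \eqref{Kfunction}, \eqref{IBessel} for (i), the large-argument asymptotics \eqref{asymptoticsIK0}--\eqref{asymptoticsIK} for (ii), and plain dominated convergence for (iii). Your treatment of Part (ii) is in fact more careful than the paper's: you correctly isolate the residues at $v=\pi_{q+1},\dots,\pi_m$---which after rescaling sit at $\pi_j/\tau\to\infty$ and hence cannot be swept inside a fixed bounded $V$-contour---and dispose of them by the exponential suppression $e^{-\frac{2}{\tau}(\sqrt{y}-\sqrt{(1-\pi_l)y})}$, whereas the paper's proof simply asserts that bounded contours suffice and applies dominated convergence without addressing where those $m-q$ poles went.
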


\begin{proof}
For Part (i),  change $u$ to $\tau u$ and $v$ to $\tau v$ in the integrand of \eqref{kernelcrit},
we have
\begin{align}
\frac{1}{\tau}K_{\mathrm{II}}(\tau;\frac{x}{\tau},\frac{y}{\tau})&=\frac{2}{(2\pi i)^2}\int_{\mathcal{C}_{\mathrm{out}}} du \int_{\mathcal{C}_{\mathrm{in}}} dv \,K_{-\kappa} \Big(2 \sqrt{\big(1-\tau u\big)x/\tau}\Big) \,  I_{\kappa}\Big(2 \sqrt{\big(1-\tau v\big)y/\tau}\Big)\nonumber \\
& \, \times e^{-\frac{1}{u}+\frac{1}{v}} \frac{1}{u-v}\Big(\frac{v}{u}\Big)^{\nu+m} \big(\frac{1-\tau u}{1-\tau v}\big)^{\kappa/2}  \prod_{j=1}^{m}\frac{\tau u-\pi_j}{\tau v-\pi_j },  \label{transitionI}  \end{align}
where $\mathcal{C}_{\mathrm{out}}$   is a  simple  counterclockwise  contour around  the origin with  $\mathrm{Re}(z)<1/\tau,  \forall z\in \mathcal{C}_{\mathrm{out}}$ 
and entirely within it  $\mathcal{C}_{\mathrm{in}}$ is a counterclockwise contour encircling $0, \pi_{1}/\tau,\ldots,  \pi_{m}/\tau$.
Next,  we will use the similar argument  as in the proof of Part (i) of Theorem \ref{hardlimits}  to complete  it. 

Substitute   $K_{\nu}$ and $I_{\nu}$ into  \eqref{transitionI}   with their integral representations  respectively given by
     \eqref{Kfunction}  and  \eqref{IBessel}. 
 Use  Fubini's theorem  and we   rewrite the integral  appearing in \eqref{transitionI}   as 
    \begin{align}
\frac{1}{\tau}K_{\mathrm{II}}(\tau;\frac{x}{\tau},\frac{y}{\tau})&=\left(  \frac{y}{x}\right)^{\kappa/2}  \frac{1}{2\pi i}\int_{0}^{\infty}  dt  \int_{\mathcal{C}_{\mathrm{0}}}  ds  \,     t^{\kappa-1} s^{-\kappa-1}
 e^{s-t}  \widetilde{K}(\tau;\frac{y}{s},\frac{x}{t})  \label{transionI1}
  \end{align}
 where   \begin{align}
 \widetilde{K}(\tau;\frac{y}{s},\frac{x}{t})&= \frac{1}{(2\pi i)^2}\int_{\mathcal{C}_{\mathrm{out}}} du \int_{\mathcal{C}_{\mathrm{in}}} dv\,  e^{\frac{y }{\tau s}-\frac{x }{\tau t}}\,  e^{\frac{x u}{t}-\frac{yv}{s}-\frac{1}{u}+\frac{1}{v}}  \nonumber \\
 &\quad  \times \frac{1}{u-v}
\Big(\frac{v}{u}\Big)^{\nu+m}
  \prod_{l=1}^{m}\frac{u-\pi_{l}/\tau}{v-\pi_{l}/\tau}. \label{transitionI2}\end{align}
   Note that it is unnecessary to assume  $\mathrm{Re}(z)<1/\tau$  for $ z\in \mathcal{C}_{\mathrm{out}}$ in  \eqref{transitionI2}. In particular,  it is seen from $ 0< \pi_l<1$ ($l=1, \ldots, m$) that when $\tau>2$  we can choose    $\mathcal{C}_{\mathrm{in}}$ and  $\mathcal{C}_{\mathrm{out}}$ as two circles   with  radius $1$ and $2$ and both with center at the origin.
Note that  the involved function is continuous  in  the given contours   and as $\tau  \ra \infty$ 
  application of Lebesgue's dominated convergence theorem provides us
    \begin{align}
\widetilde{K}(\tau;\frac{y}{s},\frac{x}{t}) \rightarrow  \frac{1}{(2\pi i)^2}  \int_{\mathcal{C}_{\mathrm{out}}} du \int_{\mathcal{C}_{\mathrm{in}}} dv\,  e^{\frac{x u}{t}-\frac{1}{u}-\frac{y v}{s}+\frac{1}{v}}\frac{1}{u-v} \Big(\frac{v}{u}\Big)^{\nu}. \nonumber \end{align}
  This limit has been identified as   the Bessel kernel   by  Desrosiers and Forrester \cite{DF06} and  the requested conclusion then follows (cf.  \eqref{4.5} and \eqref{4.6} in Sect. \ref{sectionhardlimit}).

For Part (ii),   change $u$ to $\tau u$ and $v$ to $\tau v$ in the integrand of \eqref{kernelcrit},
we have
\begin{align}
\frac{1}{\tau^2}K_{\mathrm{II}}(\tau;\frac{x}{\tau^2},\frac{y}{\tau^2})&=\frac{2}{(2\pi i)^2 \tau}\int_{\mathcal{C}_{\mathrm{out}}} du \int_{\mathcal{C}_{\mathrm{in}}} dv \,K_{-\kappa} \Big(\tfrac{2}{\tau} \sqrt{\big(1-\tau u\big)x}\Big) \,  I_{\kappa}\Big(\tfrac{2}{\tau} \sqrt{\big(1-\tau v\big)y}\Big)\nonumber \\
& \,  \times e^{-\frac{1}{u}+\frac{1}{v}} \frac{1}{u-v}\Big(\frac{v}{u}\Big)^{\nu+m} \big(\frac{1-\tau u}{1-\tau v}\big)^{\kappa/2}  \prod_{l=1}^{q}\frac{u-\hat{\pi}_l}{v-\hat{\pi}_l } \prod_{j=q+1}^{m}\frac{\tau u-\pi_j}{\tau v-\pi_j }.   \nonumber \end{align}
As $\tau \ra 0$, by \eqref{asymptoticsIK0}  and  \eqref{asymptoticsIK}  simple calculation  gives  us
\be \,K_{-\kappa} \Big(\tfrac{2}{\tau} \sqrt{\big(1-\tau u\big)x}\Big) \,  I_{\kappa}\Big(\tfrac{2}{\tau} \sqrt{\big(1-\tau v\big)y}\Big) \sim \frac{\tau}{4(xy)^{\frac{1}{4}}}e^{-\frac{2}{\tau}\sqrt{x}+\sqrt{x} u +\frac{2}{\tau}\sqrt{y}-\sqrt{y} v}. \nonumber \ee
 Obviously, the  function of variables $u$ and $v$ in the limit above  is continuous in the bounded contours $ \mathcal{C}_{\mathrm{out}}$ and $\mathcal{C}_{\mathrm{in}}$,    application of Lebesgue's dominated convergence theorem thus provides us  Part (ii) as $\tau\ra 0$.

Lastly, taking limit in the definition of  \eqref{kernelcrit},  we have Part (iii).
\end{proof}

\subsection{Integrable form of the critical kernel}
Recall that a correlation kernel $K(x, y)$ is called integrable  in the sense of Its, Isergin,
Korepin and Slavnov \cite{IIKS90}  if it can be represented as
\be  K(x, y) = \frac{\sum_{i=1}^{k} f_{i}(x)g_{i}(y)}{x-y}, \quad \mathrm{with} \quad   \sum_{i=1}^{k} f_{i}(x)g_{i}(x)=0  \ee
for some integer $k\geq 2$ and certain functions $f_i$ and $g_i$. The kernels of standard universality  classes in Random Matrix Theory, for instance, sine, Airy and Bessel kernels, all belong to the class of integrable kernels.   Recently, the Meijer G-kernel $K_{\mathrm{I}}$ (cf. Eqns \eqref{MeijerG2} and \eqref{Ikernelrelation})  has turned out  to be integrable, see \cite{BGS14,KZ} or \cite{St14,WF16} for relevant Hamiltonian differential equations.  On the other hand, noting the fact that
\be \sum_{i=1}^{k} f_{i}(x)g_{i}(y)=\frac{1}{x-y}\sum_{i=1}^{k}\big( xf_{i}(x)g_{i}(y)-f_{i}(x)y g_{i}(y)\big)\ee
and $K^{(0)}_{\mathrm{III}}$ is integrable,  it is  easy to verify from \eqref{IIIsum} and \eqref{IVsum} that $K_{\mathrm{III}}$ and $K_{\mathrm{IV}}$ are also integrable.
 As for the critical kernel $K_{\mathrm{II}}$, we argue that
the new limiting kernel $K^{(0)}_{\mathrm{II}}$ can be represented in an integrable form in terms of two functions and their derivatives up to third order, and so does the kernel $K_{\mathrm{II}}$ because of  \eqref{IIsum}.  However, in the case of $\kappa=\nu=0$ Akemann and Strahov gave an integrable form  of their limiting kernel \eqref{ASrep} (cf.  \cite[Sect. 1.5]{AS15b}).

In order to state the integrable form of the critical kernel, we need two functions defined by integrals involving Bessel functions for non-negative integers $\alpha, \kappa$
\be f(x)=\int_{0}^{\infty}  dt    \,     t^{-(\frac{\alpha}{2}-\kappa)-3}
 e^{-x t-\frac{1}{t}}      J_{\mathrm{\alpha}}\Big( \sqrt{\frac{4\tau}{t}}\Big), \qquad x>0,  
  \label{ffunction}\ee
and
 \be g(x)=\frac{1}{2\pi i}   \int_{\mathcal{C}_{\mathrm{0}}}  ds  \,   s^{(\frac{\alpha}{2}-\kappa)-3}
 e^{x s+\frac{1}{s}}      J_{\mathrm{\alpha}}\Big( \sqrt{\frac{4\tau}{s}}\Big), \qquad -\infty <x<\infty.
  \label{gfunction}\ee

\begin{prop}   Let $f(x)$ and $g(x)$ be defined  by \eqref{ffunction} and \eqref{gfunction}, and let   $\alpha=\nu+m$. Then
\begin{multline}
K_{\mathrm{II}}^{(0)}(\tau;\xi,\eta)  = \left(
 \frac{\xi}{\eta} \right)^{\kappa/2} \frac{1}{\eta-\xi}
    \Big(\xi \eta f'''(\xi)g'''(\eta)
   +\\
   f''(\xi)\left(g(\eta)-(\alpha-2\kappa-\tau-1)g'(\eta)\right)
 +
 g''(\eta)  \left(f(\xi) + (\alpha-2\kappa-\tau+1)f'(\xi)\right) \\- (\xi+\eta+\alpha\kappa-\kappa^2)f''(\xi)g''(\eta)-f'(\xi)g'(\eta)
\Big),  \label{IIintegrablekernel}
\end{multline}
and moreover, $f(x)$  and $g(x)$ are respectively
  particular solutions of the fourth order ODEs
  \be x^{2} f^{(4)} - (\alpha-2\kappa-1)xf''' -(2x+\alpha \kappa-\kappa^2)f''+(\alpha-2\kappa-\tau+1)f'+f=0 \label{feqn}\ee
and
  \be x^{2} g^{(4)} + (\alpha-2\kappa+1)xg''' -(2x+\alpha \kappa-\kappa^2)g''-(\alpha-2\kappa-\tau-1)g'+g=0.\label{geqn}\ee
  \end{prop}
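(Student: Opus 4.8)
The plan is to start from the Bessel-kernel representation of $K_{\mathrm{II}}^{(0)}$ already derived in the excerpt, namely equation \eqref{kernelsub-2}, and to massage it into the integrable form \eqref{IIintegrablekernel} by introducing the two auxiliary functions $f$ and $g$. First I would rewrite the Bessel kernel $K^{(\mathrm{Bes})}_{\nu+m}$ inside \eqref{kernelsub-2} via its integral form $4K^{(\mathrm{Bes})}_{\alpha}(4a,4b)=\int_0^1 J_\alpha(2\sqrt{aw})J_\alpha(2\sqrt{bw})\,dw$ (with $\alpha=\nu+m$), so that the double integral in $s,t$ splits into a product of a $t$-integral and an $s$-integral, each carrying one Bessel function and a single exponential factor $e^{\eta s+1/s}$ or $e^{-\xi t-1/t}$ respectively. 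After the change of variables $w\mapsto$ a suitable scaling that absorbs $\tau$, the $t$-integral becomes (up to elementary prefactors and powers of $\xi$) exactly $f(\xi)$ as defined in \eqref{ffunction}, and the $s$-contour integral becomes $g(\eta)$ as in \eqref{gfunction}; the extra algebraic factor $\sqrt y J'_\alpha(\sqrt y)$ etc. coming from the Christoffel–Darboux-type numerator of $K^{(\mathrm{Bes})}_\alpha$ is what produces derivatives of $f$ and $g$. This is the step where one sees \emph{why} third-order derivatives appear: differentiating $f$ or $g$ in $\xi$ (resp.\ $\eta$) brings down a factor $t$ (resp.\ $s$), and the combination $J_\alpha(\sqrt{x})\sqrt{y}J'_\alpha(\sqrt{y})-\dots$ together with the $s^{-\kappa-1}t^{\kappa-1}$ and $e^{\pm1/t}$ weights forces exactly the pattern $\xi\eta f'''g'''$ plus lower-order corrections.

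Concretely, the key computation is to express $K_{\mathrm{II}}^{(0)}$ as $\frac{1}{\eta-\xi}$ times a bilinear expression in $\{f,f',f'',f'''\}$ and $\{g,g',g'',g'''\}$, and then to verify that the coefficients are precisely those listed in \eqref{IIintegrablekernel}. The cleanest route is: (a) record the four relations expressing $t^{j}f(\xi)$-type integrals in terms of $\frac{d^j}{d\xi^j}f(\xi)$ corrected by integration-by-parts boundary terms and by the action of $e^{-1/t}$ (which contributes the shift parameters $\alpha-2\kappa\mp\tau\pm 1$ and the quadratic term $\alpha\kappa-\kappa^2$); (b) do the analogous bookkeeping for $g$ on the closed contour $\mathcal C_0$, where there are no boundary terms but the $e^{1/s}$ factor plays the same role; (c) substitute the numerator $\sqrt{x}J'_\alpha(\sqrt x)\,J_\alpha(\sqrt y)-J_\alpha(\sqrt x)\,\sqrt y J'_\alpha(\sqrt y)$ written out and collect. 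The resulting identity is then a finite linear-algebra check of matching coefficients. The ODEs \eqref{feqn} and \eqref{geqn} are obtained separately and more directly: insert $t^2\frac{d^4}{d\xi^4}f-(\alpha-2\kappa-1)\xi\frac{d^3}{d\xi^3}f-\dots$ under the integral sign, which converts every $\xi$-derivative into multiplication by $-t$, so the integrand of the claimed combination becomes $\big(t^2\cdot t^2\cdot t^{-2}\ \text{-type collapse}\big)$; more precisely one checks that the differential operator, acting on $t^{-(\alpha/2-\kappa)-3}e^{-\xi t-1/t}J_\alpha(2\sqrt{\tau/t})$, produces a total $t$-derivative of something that vanishes at $0$ and $\infty$, using the Bessel equation $z^2J_\alpha''+zJ_\alpha'+(z^2-\alpha^2)J_\alpha=0$ for $J_\alpha(2\sqrt{\tau/t})$ to handle the terms without $\xi$. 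The analogous closed-contour argument gives \eqref{geqn}, the sign changes in the first-derivative and third-derivative coefficients reflecting the swap $e^{-1/t}\leftrightarrow e^{1/s}$ and $t\leftrightarrow s^{-1}$ between the two representations.

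For the ODE verification I would set $\phi(t)=t^{-(\alpha/2-\kappa)-3}e^{-\xi t-1/t}J_\alpha(2\sqrt{\tau/t})$ and compute $\mathcal L_\xi\,\phi$ where $\mathcal L_\xi$ is the operator in \eqref{feqn} acting in the $\xi$-variable; since $\partial_\xi\phi=-t\phi$, one has $\mathcal L_\xi\phi=\big(\xi^2 t^4+(\alpha-2\kappa-1)\xi t^3-(2\xi+\alpha\kappa-\kappa^2)t^2-(\alpha-2\kappa-\tau+1)t+1\big)\phi$, and the claim is that this equals $\frac{d}{dt}\big(\text{explicit polynomial in }t\cdot\phi\big)$ after using the Bessel ODE to re-express the $\tau$-dependent and the $(\alpha\kappa-\kappa^2)$ terms. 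So this reduces to a bounded, if somewhat tedious, identity among rational functions of $t$ multiplied by $\phi$ and $\phi'$. \textbf{The main obstacle} I anticipate is precisely this coefficient-matching bookkeeping in step (a)--(c): tracking the boundary terms from the $t$-integration by parts over $(0,\infty)$ (the integrand and several of its derivatives must vanish at both endpoints — true because of the $e^{-1/t}$ and $e^{-\xi t}$ factors — but this must be checked), keeping the powers $s^{\pm(\alpha/2-\kappa)\mp 3}$ correctly aligned between $f$ and $g$, and confirming that the spurious cross-terms cancel so that exactly the six products displayed in \eqref{IIintegrablekernel} survive with the stated coefficients. Once the algebra is organized (e.g.\ by writing everything in terms of a single generating identity for $\int t^{j}\phi\,dt$ and its contour analogue), the rest is routine.
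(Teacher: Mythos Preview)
Your plan for the ODEs \eqref{feqn}--\eqref{geqn} is sound and matches the paper's approach: differentiate under the integral, convert $\xi$-derivatives into powers of $t$, and close up using the Bessel equation for $J_\alpha(2\sqrt{\tau/t})$ together with an integration by parts. That part will go through.

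The integrable-form part has a real gap, and your opening move is a false start. Inserting the representation $4K^{(\mathrm{Bes})}_\alpha(4a,4b)=\int_0^1 J_\alpha(2\sqrt{aw})J_\alpha(2\sqrt{bw})\,dw$ into \eqref{kernelsub-2} does make the $s$- and $t$-integrals factorize, but each factor then carries the extra variable $w$ through $J_\alpha(2\sqrt{\tau w/t})$; after scaling $w$ you obtain an identity of the shape $K_{\mathrm{II}}^{(0)}=\int_0^\tau F(\xi;w)\,G(\eta;w)\,dw$, not the finite bilinear form \eqref{IIintegrablekernel}. You then silently switch to the Christoffel--Darboux form of the Bessel kernel in step (c), but you never explain what happens to its \emph{denominator}. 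Inside \eqref{kernelsub-2} the arguments of $K^{(\mathrm{Bes})}_\alpha$ are $4\tau/s$ and $4\tau/t$, so the denominator $2(x-y)$ becomes (up to harmless factors) $1/(s-t)$, which couples the $s$- and $t$-integrals and blocks the ``express $t^j$-moments as $f^{(j)}(\xi)$'' bookkeeping you describe in (a)--(b). Nothing in your plan removes this coupling, and it does not resolve into the global prefactor $1/(\eta-\xi)$ automatically.

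The paper's proof hinges on exactly this point. One multiplies by $(\eta-\xi)$ and uses $(\eta-\xi)e^{\eta s-\xi t}=(\partial_s+\partial_t)e^{\eta s-\xi t}$ to integrate by parts, moving $(\partial_s+\partial_t)$ onto the remaining factor. The crucial Bessel identity is that $(\partial_s+\partial_t)$ applied to the CD numerator $J(s)\sqrt{4\tau/t}\,J'_\alpha(\cdot)-J(t)\sqrt{4\tau/s}\,J'_\alpha(\cdot)$ produces $2(s-t)\,\partial_s J(s)\,\partial_t J(t)$ plus terms proportional to $(1/s-1/t)$ and $(1/s^2-1/t^2)$; combined with $(\partial_s+\partial_t)(s/t)^{\alpha/2-\kappa}$ and $(\partial_s+\partial_t)e^{1/s-1/t}$, every surviving term carries a factor $(s-t)$ that cancels the denominator. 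Only after this cancellation does the integrand split into a sum of products (one factor depending on $s$ alone, one on $t$ alone), and then a \emph{second} round of integration by parts in $s$ and in $t$ separately turns the $\partial_sJ(s)$- and $\partial_tJ(t)$-pieces into the displayed combinations of $f,f',f'',f'''$ and $g,g',g'',g'''$. Without this $(\partial_s+\partial_t)$ trick your steps (a)--(c) cannot be carried out.
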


\begin{proof}
For convenience,  we use the shorthand notation   $J(s)=J_{\alpha}( 2\sqrt{\tau/s})$.
First, we see from the Bessel differential equation
\be z^2 J_{\alpha}''(z)+zJ_{\alpha}'(z)+(z^2 -\alpha^2)J_{\alpha}(z)=0\ee
that
\be \frac{d}{ds}\left(\sqrt{\frac{4\tau}{s}} J_{\alpha}'\Big( \sqrt{\frac{4\tau}{s}}\Big)\right)=\left(\frac{4\tau}{2s^2}-\frac{\alpha^2}{2s}\right) J(s). \label{besselrelation2}\ee
Together with the formula
\be  \sqrt{\frac{4\tau}{s}} J_{\alpha}'\Big( \sqrt{\frac{4\tau}{s}}\Big) =-2s \frac{d}{ds} J(s), \label{besselrelation2}\ee
simple calculations give us
\begin{align}\left(\frac{\partial}{\partial s}+\frac{\partial}{\partial t}\right)&\left(J_{\alpha}\Big( \sqrt{\frac{4\tau}{s}}\Big)\sqrt{\frac{4\tau}{t}} J_{\alpha}'\Big( \sqrt{\frac{4\tau}{t}}\Big)-J_{\alpha}\Big( \sqrt{\frac{4\tau}{t}}\Big)\sqrt{\frac{4\tau}{s}} J_{\alpha}'\Big( \sqrt{\frac{4\tau}{s}}\Big)\right)\nonumber\\
&\ =2(s-t)\frac{\partial}{\partial s}J(s) \frac{\partial}{\partial t}J(t)+ \left(\frac{2\tau}{t^2}-\frac{\alpha^2}{2t}-\frac{2\tau}{s^2}+\frac{\alpha^2}{2s}\right)J(s)J(t). \label{besselrelation3}\end{align}

Noting the simple fact \be (\eta-\xi)e^{\eta s-\xi t}=\left(\frac{\partial}{\partial s}+\frac{\partial}{\partial t}\right)e^{\eta s-\xi t}, \nonumber \ee
combine   \eqref{besselrelation2} and \eqref{besselrelation3}, integrate  by parts    and  we thus get from  \eqref{kernelsub-2}
that
\begin{align}
 &(  \xi/\eta)^{-\kappa/2}  (\eta-\xi)  K_{\mathrm{II}}^{(0)}(\tau;\xi,\eta)  =
 \frac{1}{2\pi i}\int_{0}^{\infty}  dt  \int_{\mathcal{C}_{\mathrm{0}}}  ds  \,   e^{\eta s-\xi t}  \Big(\frac{\partial}{\partial s}+\frac{\partial}{\partial t}\Big) \bigg\{ \Big(  \frac{ s}{ t}\Big)^{\frac{\alpha}{2}-\kappa}  \nonumber \\
 & \ \times
     \frac{e^{\frac{1}{s}-\frac{1}{t}}}{2(s-t)}\Big(J_{\alpha}\Big( \sqrt{\frac{4\tau}{s}}\Big)\sqrt{\frac{4\tau}{t}} J_{\alpha}'\Big( \sqrt{\frac{4\tau}{t}}\Big)-J_{\alpha}\Big( \sqrt{\frac{4\tau}{t}}\Big)\sqrt{\frac{4\tau}{s}} J_{\alpha}'\Big( \sqrt{\frac{4\tau}{s}}\Big)\Big)\bigg\}  \nonumber \\
     &=
     \frac{1}{2\pi i}\int_{0}^{\infty}  dt  \int_{\mathcal{C}_{\mathrm{0}}}  ds  \,   e^{\eta s-\xi t} e^{\frac{1}{s}-\frac{1}{t}} \Big(  \frac{ s}{ t}\Big)^{\frac{\alpha}{2}-\kappa} \bigg\{     \left(\frac{\tau (s+t)}{s^2 t^2}-\frac{\alpha^2}{4st}\right)J(s)J(t) \nonumber \\
 & \  +  \frac{\partial}{\partial s}J(s) \frac{\partial}{\partial t}J(t) +
  \left(\frac{s+t}{s^2 t^2}-\frac{\frac{\alpha}{2}-\kappa}{st}\right)\left(J(t)s\frac{\partial}{\partial s}J(s)-J(s)t\frac{\partial}{\partial t}J(t)\right)
 \bigg\}. \label{IIsum2}
   \end{align}

Now integrate by parts a second time, with \eqref{ffunction} and \eqref{gfunction} in mind, we arrive at
\be  \int_{0}^{\infty}  dt    \,     t^{-(\frac{\alpha}{2}-\kappa)}
 e^{-\xi t-\frac{1}{t}}     \frac{\partial}{\partial t} J(t)= f'(\xi)+(\frac{\alpha}{2}-\kappa)f''(\xi)-\xi f'''(\xi),  
  \label{ffunction1}\nonumber \ee
  \be  \int_{0}^{\infty}  dt    \,     t^{-(\frac{\alpha}{2}-\kappa)}
 e^{-\xi t-\frac{1}{t}}    \frac{1}{t} \frac{\partial}{\partial t} J(t)= -f(\xi)-(\frac{\alpha}{2}-\kappa+1)f'(\xi)+\xi f''(\xi),  
  \label{ffunction2}\nonumber \ee
   and
   \be  \frac{1}{2\pi i}   \int_{\mathcal{C}_{\mathrm{0}}}  ds  \,   s^{\frac{\alpha}{2}-\kappa}
 e^{\eta s+\frac{1}{s}}       \frac{\partial}{\partial s} J(s)= g'(\eta)-(\frac{\alpha}{2}-\kappa)g''(\eta)-\eta g'''(\eta),  
  \label{gfunction1}\nonumber \ee
     \be  \frac{1}{2\pi i}   \int_{\mathcal{C}_{\mathrm{0}}}  ds  \,   s^{\frac{\alpha}{2}-\kappa}
 e^{\eta s+\frac{1}{s}}      \frac{1}{s} \frac{\partial}{\partial s} J(s)= g(\eta)-(\frac{\alpha}{2}-\kappa-1)g'(\eta)-\eta g''(\eta).  
  \label{gfunction2}\nonumber \ee
  Substitution of  the above formulas  into  \eqref{IIsum2}, careful calculations result in the desired formula \eqref{IIintegrablekernel}.

  Next, we turn to the proof of  \eqref{geqn} while that of \eqref{feqn} is similar. Recalling \eqref{besselrelation2}, integrate  by parts two times  and we get
  \begin{align}
 &\frac{1}{2\pi i}   \int_{\mathcal{C}_{\mathrm{0}}}  ds  \,   s^{\frac{\alpha}{2}-\kappa}
 e^{xs+\frac{1}{s}}       \frac{d}{ds}\left(\sqrt{\frac{4\tau}{s}} J_{\alpha}'\Big( \sqrt{\frac{4\tau}{s}}\Big)\right)    \nonumber \\
 & \ =- \frac{1}{2\pi i}   \int_{\mathcal{C}_{\mathrm{0}}}  ds  \,   s^{\frac{\alpha}{2}-\kappa}
 e^{\eta s+\frac{1}{s}}
  \left(x-\frac{1}{s^2}+\frac{\frac{\alpha}{2}-\kappa}{s}\right)(-2s) \frac{d}{ds}J(s)
     \nonumber \\
     &=
     - \frac{2}{2\pi i}   \int_{\mathcal{C}_{\mathrm{0}}}  ds  \,   s^{\frac{\alpha}{2}-\kappa}
 e^{x s+\frac{1}{s}} J(s)\left(x^2 +\frac{1}{s^2} +s
  \Big(x-\frac{1}{s^2}+\frac{\frac{\alpha}{2}-\kappa}{s}\Big)^2\right)
   \nonumber \\
 & \ = - 2  \left(x^2 g^{(4)}+(\alpha-2\kappa+1)xg''' +\big((\frac{\alpha}{2}-\kappa)^2-2x\big) g''-(\alpha-2\kappa-1)g'+g\right). \nonumber 
   \end{align}
      By using \eqref{besselrelation2}, the RHS of the above equation is equal to
     $ 2\tau g'-\frac{\alpha^2}{2}g''.  $  Then \eqref{geqn} follows.
 \end{proof}

Finally, we conclude this subsection with  a remark on the two functions $f(x)$ and $g(x)$  given in \eqref{ffunction} and \eqref{gfunction}.
Define the pairing
\begin{align}
 [f(x), g(y)]   & =      xy f'''(x)g'''(y)
   +
   \big(g(y)-(\alpha-2\kappa-\tau-1)g'(y)-yg''(y)\big)f''(x)
 \nonumber\\
  &\quad  +\big(f(x) + (\alpha-2\kappa-\tau+1)f'(x)-xf''(x)\big)g''(y) \nonumber \\
  &\quad- (\alpha\kappa-\kappa^2)f''(x)g''(y)-f'(x)g'(y)
   \end{align}
 and denote $[f, g](x) = [f(x), g(x)]$ which is the bilinear concomitant. Then

\begin{align}
  \frac{d}{dx}[f, g](x)  &=
   \Big(g(x)-(\alpha-2\kappa-\tau-1)g'(x)-(2x+\alpha \kappa -\kappa^2)g''(x)\nonumber\\
   &\quad +(\alpha-2\kappa+1)xg'''(x)+x^2 g^{(4)}(x)\Big)f'''(x)
 \nonumber\\
  &\quad +
   \Big(f(x)+(\alpha-2\kappa-\tau+1)f'(x)-(2x+\alpha \kappa -\kappa^2)f''(x)\nonumber\\
   &\quad -(\alpha-2\kappa-1)xf'''(x)+x^2 f^{(4)}(x)\Big)g'''(x).
   \end{align}
This shows that  the bilinear concomitant $[f, g](x)$ is constant whenever  f and g satisfy the respective differential equations.

\begin{acknow}   We are   grateful to Gernot Akemann, Peter J. Forrester, Jiang Hu, Mario Kieburg, Dong Wang  and Lun Zhang for helpful discussions. We also thank the anonymous referees for their careful reading and constructive suggestions. 
The work   was  partially supported by the National Natural Science Foundation of China  \#11301499,   the Youth Innovation Promotion Association CAS  \#2017491,  Anhui Provincial Natural Science Foundation \#1708085QA03 and the Fundamental Research Funds for the Central Universities (Grants WK0010450002 and WK3470000008).

\end{acknow}

\bibliographystyle{amsplain}

\end{document}